%% file: main.tex
\documentclass[conference]{IEEEtran}
\IEEEoverridecommandlockouts
\def\BibTeX{{\rm B\kern-.05em{\sc i\kern-.025em b}\kern-.08em
    T\kern-.1667em\lower.7ex\hbox{E}\kern-.125emX}}

\usepackage[T1]{fontenc}
\usepackage[utf8]{inputenc}
\usepackage{amsmath,amsfonts,amssymb}

\usepackage{amsthm}
\usepackage{bm}
\usepackage{extarrows}
\usepackage{stmaryrd}
\usepackage{proof}
\usepackage{bussproofs}
\usepackage[ruled,vlined,linesnumbered]{algorithm2e}
\usepackage{algpseudocode}
\usepackage[usenames, table, svgnames, dvipsnames]{xcolor}
\usepackage{booktabs}

\usepackage{tabularx}
\usepackage{graphics}

\usepackage[usenames, table, svgnames, dvipsnames]{xcolor}
\usepackage{url}
\usepackage[hyperfootnotes=true,colorlinks,linkcolor=blue,anchorcolor=black,citecolor=blue,urlcolor=black]{hyperref} % tag the hyperref with pure colors

\usepackage{tikz}
\usetikzlibrary{arrows,shapes,snakes,automata,backgrounds,petri,positioning}
\usepackage{diagbox}
\usepackage{makecell}
\newcommand{\oomit}[1]{}

\newcommand{\tttrans}{\rightarrow}
\newcommand{\tr}[1]{\langle #1 \rangle} % For the time restriction 
\newcommand{\semantic}[1]{\left\llbracket #1 \right\rrbracket} % For the semantic of TREs
\newcommand{\cbrackets}[1]{\left\{ #1 \right\}} % For the size of curly brackets
\newcommand{\pbrackets}[1]{\left( #1 \right)} % For the size of parentheses
\newcommand{\abrackets}[1]{\left\langle #1 \right\rangle} % For the size of angle brackets
\newcommand\hole{\heartsuit}

\newtheorem{prob}{Problem}

\usepackage{todonotes}
\newcommand{\aj}[1]{\todo[inline,color=green!10]{\textbf{AJ:} #1}}

\newcommand{\znj}[1]{\todo[inline,color=orange!10]{\textbf{ZNJ:} #1}}

\usepackage{orcidlink}

\newcommand{\myparagraph}[1]{\smallskip\noindent{\bf #1.}}
\allowdisplaybreaks

\spnewtheorem{mytheorem}{Theorem}{\bfseries}{\rmfamily} 
\spnewtheorem{myproblem}{Problem}{\bfseries}{\rmfamily} 
\spnewtheorem{mylemma}{Lemma}{\bfseries}{\rmfamily} 
\spnewtheorem{myexample}{Example}{\bfseries}{\rmfamily}
\spnewtheorem{myassumption}{Assumption}{\bfseries}{\rmfamily}
\spnewtheorem{mydefinition}{Definition}{\bfseries}{\rmfamily}
\spnewtheorem{myremark}{Remark}{\bfseries}{\rmfamily}
\spnewtheorem{proposition}{Proposition}{\bfseries}{\rmfamily}
{\bfseries}{\rmfamily}

\begin{document}
\title{Mining DTA with SMT by Exploiting  Simple Elementary Language and Timed Augmented Prefix Acceptor 
}

%\title{Simplification of SMT-based DTA Mining via Elementary Language\\
%}

% \author{Anonymous Authors}

\author{
\IEEEauthorblockN{
Ziran Wang~\orcidlink{0009-0004-5555-5932}
\hspace{7cm}
Jie An\textsuperscript{*}~\orcidlink{0000-0001-9260-9697}
}
\IEEEauthorblockA{
\begin{minipage}{0.5\linewidth}
\centering
\textit{Key Lab. of System Software (Chinese Academy of Sciences),}\\
\textit{Institute of Software, Chinese Academy of Sciences}\\
\textit{\& University of Chinese Academy of Sciences}\\
Beijing, China\\
wangzr@ios.ac.cn
\end{minipage}
\hfill
\begin{minipage}{0.5\linewidth}
\centering
\textit{National Key Lab. of Space Integrated Information System,}\\
\textit{Institute of Software, Chinese Academy of Sciences}\\
\textit{\& University of Chinese Academy of Sciences}\\
Beijing, China\\
anjie@iscas.ac.cn
\end{minipage}
}

% \newline
\vspace{0.8em}

\IEEEauthorblockN{
Naijun Zhan\textsuperscript{*}~\orcidlink{0000-0003-3298-3817}
}
\IEEEauthorblockA{
\textit{MoE Key Lab. of High Confidence Software Technologies,}\\
\textit{School of Computer Science, Peking University}\\
\textit{\& Zhongguancun Laboratory}\\
Beijing, China\\
njzhan@pku.edu.cn
}
\thanks{* Corresponding authors.}
}

\maketitle

\begin{abstract}
Timed automata, which extend finite state automata by introducing clock variables, serve as a popular formalism for specifying and analyzing the timed behaviors of real-time systems. Extracting the timed behaviors of a black-box, safety-critical system is crucial for designing and analyzing its real-time requirements, yet it remains challenging. In this paper, we address this problem by generating a deterministic timed automaton (DTA) consistent with a given set of system behaviors, comprising both positive and negative examples. To this end, we adapt the formalism of simple elementary languages (sEL) and introduce the timed augmented prefix tree acceptor (tAPTA). Our approach proceeds as follows: First, we preprocess samples by translating them into sEL, which discards redundancy and detects conflicts; then, we rewrite the resulting sELs in an incremental form and construct a tAPTA to further simplify the samples; finally, we encode the search for a DTA that accepts the simplified tAPTA as an SMT formula. We evaluate our approach on randomly generated benchmarks and a scheduling case study. The results demonstrate the effectiveness of our simplification method in reducing the size of the encoded SMT formula and the efficiency of our approach in mining a DTA.
\end{abstract}

\begin{IEEEkeywords}
Timed automata, Model learning, Passive learning, Real-time behaviors, Cyber-physical systems
\end{IEEEkeywords}

\input{introduction}

\input{preliminaries}

\input{problems}

\input{sEL}

\input{tAPTA}

\input{SMTmethod}

\input{guarantees}

\input{experiments}

\input{conclusion}

%\input{SMTmethod}

%\newpage
% \bibliographystyle{splncs04}
\bibliographystyle{IEEEtran}
\bibliography{reference}

% \newpage
 \input{appendix}

\end{document}

%% file: introduction.tex
\section{Introduction} \label{sc:introduction}

Real-time systems, ranging from autonomous driving controllers to industrial automation protocols, demand rigorous guarantees regarding their timing behaviors. A failure to meet strict timing constraints in such systems can result in catastrophic consequences. Formal verification has become an indispensable technique in the development of such safety-critical systems. Among various formal models, Timed Automata (TA) introduced by Alur and Dill~\cite{Alur94} are the de facto standard for modeling and verifying real-time behaviors~\cite{Franck2021Veri,Bouyer2005Model} due to their ability to capture both discrete logic and continuous time evolution across domains from scheduling algorithms \cite{Behrmann2005Schedule} and communication protocols \cite{Ouaknine2005MTL} to
safety-critical systems \cite{Jiang2021Safe}. Manually constructing TA models for small-scale systems is feasible, but it becomes increasingly error-prone and labor-intensive as system complexity grows, and even impossible for black-box AI-enabled safety-critical systems. This challenge has stimulated significant interest in automated model learning, a technique that automatically constructs formal models from observed system behaviors, typically in the form of timed execution traces. Model learning is useful for parameter design, analysis, and formal verification of real-time systems by estimating information of black-box or partially unknown real-time systems. For instance, over-approximated DTA mining results can be effectively utilized in assume-guarantee reasoning of timed systems \cite{ChenSZLM23}.

Both passive and active learning efforts have primarily concentrated on learning various subclasses of timed automata, e.g., event-recording automata~\cite{Grinchtein10,HenryJM20}, real-time automata ~\cite{VerwerWW12,SchmidtGHK13,CornanguerLRT22,AnWZZZ21,AnZZZ21,Meng20263DRTA}, timed automata with one-clock ~\cite{VerwerWW11,AnCZZZ20,XuAZ22}, Mealy machines with one timer~\cite{DierlHKKLLM23,VaandragerB021}, deterministic timed automata~\cite{TapplerALL19,AichernigPT20,TapplerAL22,Waga23,TengZA24}, etc. Active learning algorithms mainly extend the $L^*$ \cite{Angluin78} algorithm to the timed domain, e.g., \cite{Waga23,TengZA24} proposed active learning algorithms for DTA with provable convergence. 
%and complexity bounds are given. 
The correctness of active learning relies heavily on the capability of the teacher (oracle) module to answer equivalence queries (EQs), i.e., whether the hypothesized model is equivalent to the target system. 

However, in real-world engineering scenarios, particularly when dealing with black-box systems, legacy components, or safety-critical infrastructure, answering EQs is often impossible or prohibitively expensive. This reality necessitates a shift towards passive learning, which infers models solely from a fixed, finite dataset of observed system traces, without requiring further system interaction. Yet, passive learning faces its own challenges: lacking the corrective guidance of equivalence queries, it is highly sensitive to the quality and quantity of input data. Consequently, the learned model may be excessively over-approximated compared to the target model if the input data are insufficient. 

Existing passive methods for DTA, especially those leveraging SMT solvers \cite{TapplerAL22}, often struggle with scalability when processing massive raw datasets. When directly encoding large, redundant sets of raw system traces into an SMT formula, the resulting constraints can become computationally intractable, leading to solver timeouts or overflow errors. Although \cite{Meng20263DRTA} attempted to mitigate this by merging and simplifying input traces in DRTA learning, extending such merging strategies to general DTA remains difficult due to the complex delay times that must be considered across entire traces rather than single events. Therefore, the major challenges in passively learning timed automata include not only learning without a teacher, but also ensuring efficiency and enabling conflict detection when processing raw data, which motivates this work. Beyond SMT-based approaches, other techniques have been explored but suffer from their own limitations. Methods based on genetic programming \cite{TapplerALL19} or combining genetic algorithms with domain knowledge \cite{WallnerALT25} can handle larger search spaces but typically lack guarantees regarding termination, correctness, or minimality. Table~\ref{tab:rela} presents the comparison of the related work above.

Therefore, the primary motivation of this work is to bridge the gap between theoretical rigor and scalability. We aim to leverage the guarantees of the SMT-based framework while simultaneously ensuring efficiency and the capability to detect conflicts when handling massive raw datasets.
%our proposed approach.

\begin{table}[!t]
\caption{Comparison of Related Work.}
\begin{center}
\label{tab:rela}
% \vspace{-0.2cm}
\resizebox{\columnwidth}{!}{%
\setlength{\tabcolsep}{3pt}
\begin{tabular}{|c|c|c|c|c|}
\hline
 & Automata& learning type & method & limitation\\
\hline
\cite{Meng20263DRTA} & DRTA & Passive & SMT & Limited subclass \\
\hline
\cite{TapplerALL19} & DTA &  Passive & \makecell[c]{Genetic \\ programming } & \makecell[c]{No termination, correctness,\\  minimality guarantees} \\
\hline
\cite{TapplerAL22} & DTA & Passive&  SMT  &\makecell[c]{Excluding neg samples \\ \& limit on clocks} \\
\hline
\cite{Waga23} & DTA & Active & MN-congruence  & Limited minimality$^a$ \\
\hline
\cite{WallnerALT25} & DOTA & Passive & \makecell[c]{Genetic \\ programming \\ $\&$ knowledge }  & \makecell[c]{No termination, correctness,\\  minimality guarantees \\ \& limited subclass} \\
\hline
Ours & DTA & Passive  & SMT & Minimality$^b$ \\
\hline
\end{tabular}}
\end{center}
$^{\mathrm{a}}$ The learned DTA is the minimal DTA of the learned timed language congruence relation, not the minimal DTA recognizing the target timed language. The congruence relation typically induces a DTA that is much larger than the target DTA.

$^{\mathrm{b}}$ %The learned DTA is the minimal DTA recognizing the pair of timed languages of tAPTA, not the minimal DTA recognizing the given set of timed words.
When our simplification (Def.~\ref{def:ExMerge}, \ref{def:Intoverapp}) is applied to the tAPTA, the learned DTA is the minimal DTA recognizing the pair of timed languages of the tAPTA; whereas if the unsimplified tAPTA is used directly, it yields the minimal DTA recognizing the given set of timed words. 
% \vspace{-0.5cm}
\end{table}

\myparagraph{Contribution} The main contributions of this work are summarized as follows:
\begin{itemize}
    \item A trace simplification framework: We adapt simple elementary languages (sEL) \cite{Waga23} to preprocess traces, which removes redundant traces and detects potential conflicts. We then propose an incremental writing style for sEL to enable the construction of a timed augmented prefix tree acceptor (tAPTA) from sEL for further simplification and time-constraint over-approximation.
    \item SMT encoding scheme: We develop an encoding method that maps the simplified tAPTA into an SMT formulation for DTA synthesis without losing the expressiveness of guards, significantly reducing the computational complexity compared to direct encoding of raw data.

    \item Empirical validation: We implement our approach and evaluate it against existing SMT-based DTA synthesis approach on randomly generated benchmarks and a case study.
\end{itemize}

\myparagraph{Outline} In the following, Section~\ref{sc:preliminary} recalls DTA, APTA and the elementary language. An overview of the DTA mining problem and our mining approach is given in Section~\ref{sc:problems}. The approach is compared with existing work in Section~\ref{sc:compare} and empirically evaluated in Section~\ref{sc:experiments}. Finally, Section~\ref{sc:conclusion} concludes the paper.

% The omitted proofs of Lemma \ref{lemma:sEL}, \ref{lemma:ExMerge} are available in the full version~\cite{wang2026arXiv}.

%% file: preliminaries.tex
\section{Preliminaries} \label{sc:preliminary}
For the reader's convenience, the main notations used in this paper are listed in Table \ref{tab:notation}.

In this paper, we fix $\Sigma$ as a finite alphabet of letters (or events). We call an interval $\mathrm{I}\subset\mathbb{R}$ natural-bounded iff $\mathrm{I}= \langle a, b \rangle,\,  a\in \mathbb{N},\, b \in \mathbb{N}\cup\{+\infty\},\,\langle \in \{\ (, [\ \}, \rangle \in \{ \ ), ]\}$. Let $\mathbb{R}_{\geq 0}$ and $\mathbb{N}$ be the sets of non-negative reals and natural numbers. A \emph{timed word} is a finite sequence $\omega=(\sigma_1,t_1)(\sigma_2,t_2)\cdots(\sigma_n,t_n)$ $ \in (\Sigma\times\mathbb{R}_{\geq 0})^*$, where $t_i$ represents the delay time before the occurrence of event $\sigma_i\in\Sigma$ for all $1\leq i\leq n$. Particularly, the special symbol $\varepsilon$ represents the \emph{empty timed word}. The length of such timed word is $|\omega|=n$, with $|\varepsilon|=0$. For a timed word $\omega$, we define two functions $\lambda$ and $\mu$ to get the ordered sequence of delay times and events, respectively. For instance, let $\omega=(a,0.2)(b,0)(a,6)$, then we have $\lambda(\omega)=0.2,0,6$ and $\mu(\omega)=aba$. We denote by $\mathcal{T}\Sigma^*$ the set of timed words over $\Sigma$. A \emph{timed language} $L\subseteq\mathcal{T}\Sigma^*$ is a set of timed words. The concatenation of two timed words $u$, $v$ is denoted by $u\cdot v$ ($uv$, for short). The prefixes of a timed word $\omega$ is defined as $\mathbf{pre}(\omega)=\{u\in\mathcal{T}\Sigma^*| \exists v\in\mathcal{T}\Sigma^*, u\cdot v = \omega\}$, and extended to sets of timed words as $\mathbf{pre}(\Omega)=\bigcup_{\omega \in \Omega}\mathbf{pre}(\omega)$. 

Given timed languages $L, L'$, we say $L'$ is an over-approximation of $L$ iff $L\subseteq L'$. Given a pair of timed language $(L_+,L_-)$ with $L_+\cap L_- = \emptyset$, we say another pair of timed language $(L_+',L_-')$ is a safe over-approximation of $(L_+,L_-)$ if $L_+\subseteq L_+',\ L_-\subseteq L_-',\ L_+'\cap L_- = L_+\cap L_-' = \emptyset$. Given $(L_+,L_-)$ and a timed automaton $\mathcal{A}$ of which $L_+',L_-'$ are the accepting and rejecting timed languages of $\mathcal{A}$, we say $(L_+',L_-')$ is the language pair of $\mathcal{A}$, and $\mathcal{A}$ recognizes $(L_+,L_-)$ if $(L_+',L_-')$ is a safe over-approximation of $(L_+,L_-)$.

A timed automaton  (TA) is an extension of a finite state automaton with a set of real-valued clock variables measuring the time elapsed since transitions were taken. Each transition is constrained by some inequalities over the clock variables (i.e., guards), and once a transition is taken, it may reset some of the clocks to $0$. A guard clause $g$ over clock set $\mathcal{C}$ is of the form $ c\, \triangleright\, k $, with $c\in\mathcal{C} $, $\triangleright \in \{>,\geq,<,\leq \}$, $k\in\mathbb{N}$. A guard $g\in \mathcal{G}(\mathcal{C})$ is a finite conjunction of guard clauses. Formally, 

\begin{mydefinition}[Timed Automata\cite{Alur94}] \label{def:ta}
A TA is a tuple $\mathcal{A} =(Q,q_0, \mathcal{C},\Sigma,\delta, F)$, where $Q$ is the set of state, $q_0\in Q$ is the initial state, $\mathcal{C}$ is the set of clocks, $\Sigma$ is the finite alphabet of events, $F\subset Q$ is the set of accepting states, and $\delta\subseteq Q\times\Sigma\times\mathcal{G}(\mathcal{C})\times 2^{\mathcal{C}}\times Q$ is the set of transitions. A transition is a tuple $(q,a,g,r,q')\in \delta$, meaning that the transition is taken from $q$ to $q'$, with event $a$, guard $g$, and it resets the clocks $c\in r$. 

\end{mydefinition}

\begin{mydefinition}[Semantics of Timed Automata] \label{def:tasem}
For a TA $\mathcal{A}=(Q,q_0, \mathcal{C},\Sigma,\delta, F)$, the \emph{timed transition system} (TTS) is a $4$-tuple $(S,s_0,S_F,\tttrans)$, where
\begin{itemize}
\item $\nu:\mathcal{C}\rightarrow\mathbb{R}_{\geq 0}^{|\mathcal{C}|}$ is the clock valuation; $\nu \oplus \tau := \nu + \tau \cdot \mathbf{1}$ presents the clock valuation after a delay $\tau$; 
\item $S=Q\times(\mathbb{R}_{\geq 0})^{|\mathcal{C}|}$ is the set of combined states of the TA states and the clock values;
\item $s_0=(q_0,\mathbf{0^{|\mathcal{C}|}})$ is the initial state;
\item $S_F=\{(q,\nu)\in S\mid q\in F\}$ is the set of accepting states;
\item $\tttrans\;\subseteq S\times S$ is the combined transition of a delay $\tau$ and an event $a$. For each $(q,\nu),(q',\nu')\in S$, $\tau\in\mathbb{R}_{>0}$, and $(q,a,g,r,q')\in\delta$, we have 
\begin{align*}
(q,\nu){\xrightarrow{(a,\tau)}}(q',\nu')
\; \text{if}\;
& \nu\,\oplus\,\tau\models g,\;\text{and} \;\forall c\in r,\,\nu'(c)=0,\, \\ 
& \forall c \in \mathcal{C}\,\backslash \,r, \, \nu'(c)=\nu(c)+\tau
\end{align*}
\end{itemize}

A \emph{run} of a TA $\mathcal{A}$ is an alternating sequence $s_0,\tttrans_1,s_1,\tttrans_2,\cdots,\tttrans_n, s_n$ of states $s_i\in S$ and transitions $\tttrans_i\,\in\,\tttrans$ such that $s_{i-1}\tttrans_i s_i$ for every $i\in\{1,2,\dots,n\}$. A run is \emph{accepting} if $s_n\in S_F$. Given such a run, the associated \emph{timed word} is the concatenation of the labels of the transitions. A timed word $\omega$ associated with an accepting run is accepted by $\mathcal{A}$, denoted by $\mathcal{A}(\omega)=+$, and otherwise $\mathcal{A}(\omega)=-$. The \emph{timed language} $\mathcal{L}(\mathcal{A})$ of $\mathcal{A}$ is the set of timed words accepted by $\mathcal{A}$. %We say a TA $\mathcal{A}$ recognize a pair of timed language $(L_+,L_-)$ iff $L_+\subset\mathcal{L}(\mathcal{A})\wedge L_-\cap\mathcal{L}(\mathcal{A})=\emptyset$.
\end{mydefinition}

\begin{mydefinition}[Augmented Prefix Tree Acceptor (APTA) \cite{Francois1998APTA}] \label{def:apta}
A three-valued deterministic finite automaton (3-DFA) $\mathcal{P}$  is a transition system of the form 
$(\mathit{Q},q_0,\Sigma,\delta,F,R)$, 
%that marks states with $\{accepting,\, rejecting, \mathit{don't}$-$\mathit{care}\}$, 
where $Q$ is the set of states, 
which is partitioned into three disjoint subsets, i.e., 
$F$ the set of \emph{accepting} states, $R$ the set of \emph{rejecting} states, and $D=Q \backslash (F\cup R)$ the set of $\mathit{don't}$-$\mathit{care}$ states. 
$q_0$ is the initial state and $\Sigma$ is the finite alphabet of events. 
$\delta: Q \times \Sigma\rightarrow Q$ is the transition function. A transition over $\delta$ is a 3-tuple $(q,u,q')$ such that $\delta(q,u)=q'$. Given two sets of event sequences $S_+$, $S_-\subset\Sigma^*$, an augmented prefix tree acceptor (APTA) constructed from them is a 3-DFA 
%constructed from finite sets of positive and negative event sequences:  respected with them is 
$\mathcal{P}(S_+,S_-)=(\mathbf{pre}(S_+\cup S_-) ,\varepsilon,\Sigma,\delta_{\mathbf{pre}(S_+\cup S_-)},S_+,S_-)$, where for all $\sigma\in\Sigma^*$ and $u\in\Sigma$ such that $\sigma,\, \sigma u\, \in \mathbf{pre}(S_+\cup S_-)$, $\delta_{\mathbf{pre}(S_+\cup S_-)}(\sigma,u)=\sigma u$.
\end{mydefinition}

The states in an APTA are organized in a tree structure, with the transitions as edges. Throughout this paper, we utilize the terms $parent$, $child$, $ancestor$, and $descendant$ for the tree structure to represent the relation between states in an APTA. A path incoming to state $q_n$ in an APTA $\mathcal{P}=(\mathit{Q},\varepsilon,\Sigma,\delta,F,R)$ is an alternating sequence $p=q_0d_1q_1\cdots d_nq_n$ where $q_i\in Q$ for all $ 0\leq i\leq n$, and $d_i=(q_{i-1},u_i,q_i)$ is a transition over $\delta$. $p$ is a root path if $q_0=\varepsilon$. Throughout this paper, we focus exclusively on root paths and refer to them simply as "paths". $path(q)$ is the set of paths incoming to $q$. We also extend "prefix" to APTA paths: $\mathbf{pre}(p)=\{q_0d_1q_1\cdots d_iq_i\,| \, 1\leq i\leq n\}\cup\{q_0\}$. If $n>0$, the label sequence of $p$ is $L(p)=u_1u_2\cdots u_n$, otherwise $L(p)=\varepsilon$. 

An APTA's size grows drastically with the growth of the number of examples it is constructed from, since every prefix of an example will possibly be added as a unique state in the APTA. As an APTA is acyclic, as discussed in
\cite{10.1162/089120100561601}, a way to minimize an APTA is linear-time backward traversal and merging. To the end,  we need to identify the equivalent states during the backward traversal. 

\begin{mydefinition}[Equivalent States in APTA]\label{def:eqstate}
    Given an APTA $\mathcal{P}=(\mathit{Q},\varepsilon,\Sigma,\delta,F,R)$, an equivalence relation $\sim_S: \mathit{Q}\times\mathit{Q}$ is defined as:
    \begin{itemize}
        \item If two states $q_1$, $q_2\in Q$ are both in $F$ (or $R$, or $D=Q \backslash (F\cup R)$), and both have no children, then $q_1$ and $q_2$ are equivalent.
        \item  If two states $q_1$, $q_2\in Q$ are both in $F$ (or $R$, or $D$), and for all $\sigma \in \Sigma$, either both $\delta(q_1,\sigma)$ and $\delta(q_2,\sigma)$ exist and they are equivalent, or neither $\delta(q_1,\sigma)$ nor $\delta(q_2,\sigma)$ exists, then $q_1$ and $q_2$ are equivalent.
    \end{itemize}
    We write $q_1\equiv_S q_2$ iff they are equivalent under $\sim_S$.
\end{mydefinition}

Once a pair of equivalent states is identified, they can be merged into one representative state, and thus the APTA's size becomes smaller. When there are no equivalent states left to merge, we say the APTA is simplified. A simplified APTA is a minimal 3-DFA that recognizes the example sequences the original APTA is constructed from.

\begin{mydefinition}[Elementary Language~\cite{Waga23}] \label{def:el}
For a timed word $\omega=(\sigma_1,t_1)(\sigma_2,t_2)\cdots(\sigma_n,t_n)$, let $T_{j,k}(\omega)=\sum\nolimits_{i=j}^{k}{t_i}, j,k=1,\cdots,n$. A time condition is a finite conjunction of constraints $\Lambda=\bigwedge (T_{j,k}(\omega)\in \mathrm{I}_{j,k})$  over $ \lambda(\omega)$, where $\mathrm{I}_{j,k}$ are natural-bounded intervals. A time language $L$ is $elementary$ if there exist $u\in \Sigma^*$ and a time condition $\Lambda$, such that $L=\{\omega\in\mathcal{T}\Sigma^*\,|\,\mu(\omega)=u\, \wedge\, \lambda(\omega)\models\Lambda\}$. We extend the function $\mu$ to the elementary language such that $\mu(L)=u$. We denote such an elementary language $L$ by $\mathcal{E}(u, \Lambda)$, and the set of elementary language over $\Sigma$ by $\mathcal{E}(\Sigma)$. 

\end{mydefinition}

\begin{table}[h]\label{tab:notation}
\caption{Notation Table}
\begin{tabularx}{\linewidth}{p{0.12\linewidth}X}
\toprule
\textbf{Notation} & \textbf{Meaning} \\
\midrule
$\Sigma$ & Alphabet of events\\
$\lambda(\omega)$ & Delay time sequence of $\omega$ \\
$\mu(\omega)$ & Event sequence of $\omega$\\
$\mathcal{T}\Sigma^*$  & The set of timed words over $\Sigma$ \\
$\mathbf{pre}()$ & the set of prefixes of a timed word (or a set of timed words) \\
$\mathcal{C}$ & Clock set\\
$\mathcal{G}(\mathcal{C})$ & The set of guards over $\mathcal{C}$\\
$\mathcal{A}$ & Time automaton\\
$\mathcal{P}$ & APTA or tAPTA\\
$\delta$ & Transitions\\
$F$ & Accepting states\\
$R$ & Rejecting states\\
$D$ & $\mathit{Don't}$-$\mathit{care}$ states\\
$path(q)$ & The set of paths incoming to APTA state $q$\\
$\Lambda$ & Time condition\\
$\mathcal{E}(u, \Lambda)$ & The elementary language over $(u,\Lambda)$\\
$\mathcal{E}(\Sigma)$ & The set of elementary languages over $\Sigma$\\
$\mathcal{SE}(\omega)$ & The simple elementary language corresponding to $\omega$\\
$\mathcal{SE}(\Sigma)$ & The set of simple elementary languages over $\Sigma$\\
$\Omega$ & Example (trace) set\\
$S(\Omega)$ & The set of sELs translated from $\Omega$\\
$P(S)$ & The simplified tAPTA constructed from $S$\\

\bottomrule
\end{tabularx}
\end{table}

%% file: problems.tex
\section{Overview of DTA Mining} \label{sc:problems}
In this section, we define the DTA mining problem and the minimal DTA mining problem, and then we describe our approach to solving it.%elucidate 
%discuss several distinctions between synthesizing TRE and traditional RE.

\begin{prob}[DTA mining problem]\label{prob:DTAmining}
Given a set of example timed words $\Omega=(\Omega_{+},\Omega_{-})$, find a DTA $\mathcal{A}$ such that for all $\omega \in \Omega_{+}, \mathcal{A}(\omega)=+$, and for all $\omega \in \Omega_{-}, \mathcal{A}(\omega)=-$, or claim that there does not exist such a DTA. The set of such DTAs is denoted as $\mathit{A}(\Omega)=\mathit{A}(\Omega_{+},\Omega_{-})$.
\end{prob}

Normally we want to find a minimal solution DTA, i.e., a DTA with the least number of states among the solutions to Problem \ref{prob:DTAmining}:

\begin{prob}[Min-DTA mining problem]\label{prob:mDTAmining}
Given a set of example timed words $\Omega=(\Omega_{+},\Omega_{-})$, find a minimal DTA (Min-DTA) $\mathcal{D}$ such that for all $\omega \in \Omega_{+}, \mathcal{D}(\omega)=+$, and for all $\omega \in \Omega_{-}, \mathcal{D}(\omega)=-$, or claim that there does not exist such DTA. The set of such minimal DTAs is denoted as $\mathit{D}(\Omega)=\mathit{D}(\Omega_{+},\Omega_{-})$.
\end{prob}

Where no confusion arises, we extend the notation $\mathit{A}(\cdot,\cdot)$ and $\mathit{D}(\cdot,\cdot)$ to any pair of positive and negative timed languages in what follows, to present the DTAs and Min-DTAs mined from the pair of timed languages. By definition, we have:

\begin{mylemma}\label{lemma:minDTA}
If $\mathit{A}(\Omega_{+},\Omega_{-}) = \mathit{A}(\Omega_{+}',\Omega_{-}')$ then $\mathit{D}(\Omega_{+},\Omega_{-}) = \mathit{D}(\Omega_{+}',\Omega_{-}')$. 
% \hfill $\square$
\end{mylemma}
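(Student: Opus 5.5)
The plan is to unfold the definitions, since $\mathit{D}(\cdot,\cdot)$ is determined entirely by $\mathit{A}(\cdot,\cdot)$. By Problem~\ref{prob:mDTAmining}, a Min-DTA for $(\Omega_+,\Omega_-)$ is an element of $\mathit{A}(\Omega_+,\Omega_-)$ with the least number of states among all elements of that set. So we can write
\[
\mathit{D}(\Omega_+,\Omega_-)=\{\mathcal{D}\in \mathit{A}(\Omega_+,\Omega_-)\mid |Q_{\mathcal{D}}|\le |Q_{\mathcal{A}}| \text{ for all } \mathcal{A}\in \mathit{A}(\Omega_+,\Omega_-)\}.
\]
Here $Q_{\mathcal{D}}$ and $Q_{\mathcal{A}}$ denote the state sets. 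The right-hand side is a function of the set $\mathit{A}(\Omega_+,\Omega_-)$ alone; the sample pair enters only through that set.

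The steps are as follows.
\begin{enumerate}
\item Fix $\mathcal{D}\in\mathit{D}(\Omega_+,\Omega_-)$. Then $\mathcal{D}\in\mathit{A}(\Omega_+,\Omega_-)=\mathit{A}(\Omega_+',\Omega_-')$.
\item For any $\mathcal{A}\in\mathit{A}(\Omega_+',\Omega_-')$, the hypothesis gives $\mathcal{A}\in\mathit{A}(\Omega_+,\Omega_-)$. Minimality of $\mathcal{D}$ in that set then gives $|Q_{\mathcal{D}}|\le|Q_{\mathcal{A}}|$. Hence $\mathcal{D}\in\mathit{D}(\Omega_+',\Omega_-')$.
\item The reverse inclusion follows by the symmetric argument.
\end{enumerate}

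The degenerate case needs one remark. If $\mathit{A}(\Omega_+,\Omega_-)=\emptyset$, then both $\mathit{D}$-sets are empty. In that case the "no such DTA" answer is shared by both pairs.

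The same argument applies verbatim to the extended notation for arbitrary pairs of timed languages, as announced before the lemma. Only set equality of the solution sets is used, not any property of finite samples.

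There is no real obstacle; the lemma is essentially definitional. The only point needing care is that minimality is measured within the solution set, not against an external quantity. Because of this, equal solution sets force equal minimum state counts and hence equal sets of minimizers.
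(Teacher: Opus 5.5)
Your proof is correct and takes the paper's approach: the paper gives no separate argument and simply notes the lemma holds ``by definition.'' Writing $\mathit{D}(\cdot,\cdot)$ as the set of state-minimal elements of $\mathit{A}(\cdot,\cdot)$, so that equal solution sets yield equal sets of minimizers, is exactly that definitional step spelled out.
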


Our approach mainly consists of the following 3 steps:
\begin{itemize}
    \item Preprocessing the examples: we translate every example timed word (trace) to simple elementary timed languages (sEL)\cite{Waga23}. Then we discard the redundant examples and check for conflicts. If there are any conflicts, return $\mathit{A}(\Omega_{+},\Omega_{-})=\emptyset$.
    \item Simplifying the examples with tAPTA: we transform sELs into incremental-formed sequences and construct an APTA from them (called tAPTA), and then we simplify the tAPTA with extended merging analogous to APTA simplification.
    \item Encoding a minimal DTA SMT problem of given DTA size setting. If it is UNSAT, try the next larger setting. If it is SAT, construct a DTA ${\mathcal{A}}$ according to the model and return ${\mathcal{A}}$.

\end{itemize}

%% file: sEL.tex
\subsection{Preprocessing Examples} \label{sbsc:preproc}
%Before the repair problem is defined, we should recall and extend the notion of distance on PBE-based repair of RE. The distance is to quantify the cost of edit to rewrite the reference to a candidate. In RE cases, only the replacement of nodes in subtrees is counted. However, in TRE cases the edit involves both replacement of nodes and the changes on time interval. We posit that it is more desirable to alter time intervals while maintaining the nodes unchanged to achieve the repair. Stemming from this insight, we provide the following definition:
In this subsection, we transform the given examples to sEL, such that redundancy (removal of some examples from the set yields an equivalent DTA regardless of the mining approach) can be removed, and conflicts that make the problems unsolvable ($\mathit{A}(\Omega_{+},\Omega_{-})=
%\mathit{D}(\Omega_{+},\Omega_{-})=
\emptyset$) will be detected if any exist. 

\begin{mydefinition}[Simple Elementary Language~\cite{Waga23}] \label{def:sel}
Given a timed word $\omega_0=(\sigma_1,t_1)(\sigma_2,t_2)\cdots(\sigma_n,t_n)$,  the corresponding \emph{simple elementary language} (sEL) is an elementary language $\mathcal{SE}({\omega_0}) =  \{\omega=(\sigma_1,t_1')(\sigma_2,t_2')\cdots(\sigma_n,t_n')|t_1',t_2',\\\cdots,t_n' \models \Lambda_{\omega_0}\}$, where $\Lambda_{\omega_0}=\underset{{\theta\in\Theta_{\omega_0}}}{\bigwedge}\theta$.  $\Theta_{\omega_0}$ is the finite set of the tightest integer-bounded time constraints such that $\lambda(\omega_0) \models \Lambda_{\omega_0}$, i.e., $\forall 1\le j\le k\le n$, either $(d < T_{j,k}(\omega)< d+1) \in \Theta_{\omega_0}$ or $(T_{j,k}(\omega) = d) \in \Theta_{\omega_0}$ with $d\in\mathbb{N}$. The set of all sELs over $\Sigma$ is denoted by $\mathcal{SE}(\Sigma)$.
\end{mydefinition}

\begin{myexample}
The sEL of $\omega=(a,1.6)(b,2.6)(a,0.4)$ is $\mathcal{SE}({\omega}) =  \{(a,t_1)(b,t_2)(a,t_3)|t_1\in(1,2)\wedge t_2\in(2,3)\wedge t_3\in(0,1)\wedge t_1+t_2\in(4,5)\wedge t_2+t_3=3\wedge t_1+t_2+t_3\in(4,5)\}$.
\end{myexample}

\begin{mylemma}\label{lemma:sEL}
Given timed words $\omega_1$, $\omega_2$, $\mathcal{SE}({\omega}_1)= \mathcal{SE}({\omega}_2)$ iff for any timed automaton $\mathcal{A}$, $\omega_1\in L(\mathcal{A}) \iff \omega_2\in L(\mathcal{A})$.
\end{mylemma}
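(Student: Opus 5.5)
We prove the two directions separately. Both rest on one observation about Definition~\ref{def:tasem}. Fix a TA $\mathcal{A}$ and a timed word $\omega=(\sigma_1,t_1)\cdots(\sigma_n,t_n)$. Along any run, the value of a clock $c$ just before the $k$-th event equals $T_{j,k}(\omega)$, where $j-1$ is the index of the last transition that reset $c$ (take $j=1$ if there was none). So every guard test compares some sum $T_{j,k}(\omega)$ with an integer constant. Given a fixed sequence of transitions, whether each test succeeds depends only on the integer region of each $T_{j,k}(\omega)$, meaning on which set $(d,d+1)$ or $\{d\}$ with $d\in\mathbb{N}$ contains it. This is exactly the information recorded in $\Theta_\omega$.

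\textbf{($\Rightarrow$).} Assume $\mathcal{SE}(\omega_1)=\mathcal{SE}(\omega_2)$. Since $\omega_1\in\mathcal{SE}(\omega_1)$, we have $\omega_1\in\mathcal{SE}(\omega_2)$. Hence $\mu(\omega_1)=\mu(\omega_2)$, and for every pair $j\le k$ the sums $T_{j,k}(\omega_1)$ and $T_{j,k}(\omega_2)$ lie in the same integer region. Take any run of $\mathcal{A}$ on $\omega_1$, given as a sequence of transitions $(q_{i-1},\sigma_i,g_i,r_i,q_i)$. We show by induction on $i$ that the same transition sequence is enabled on $\omega_2$.
\begin{itemize}
\item The clocks reset so far depend only on the transitions chosen, not on the delays.
\item Therefore each clock value before event $i$ is the same $T_{j,i}$ for both words.
\item Each guard clause $c\triangleright k$ with $k\in\mathbb{N}$ has the same truth value on any two reals in the same integer region.
\end{itemize}
So both runs reach the same final location, and $\omega_1$ is accepted if and only if $\omega_2$ is. Determinism is not needed, because the argument carries over every run.

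\textbf{($\Leftarrow$).} We argue by contraposition. Assume $\mathcal{SE}(\omega_1)\neq\mathcal{SE}(\omega_2)$; we build a TA that separates the two words. If $\mu(\omega_1)\neq\mu(\omega_2)$, an untimed automaton accepting exactly the word $\mu(\omega_1)$ suffices. Otherwise both words have the same length $n$, and some $T_{j,k}$ falls in different regions for the two words. Take $\mathcal{A}$ to be a chain $q_0\to q_1\to\cdots\to q_n$ labelled $\sigma_1,\dots,\sigma_n$, with a single clock $c$ and $F=\{q_n\}$. The transition reading $\sigma_{j-1}$ resets $c$; if $j=1$, we rely on $c$ starting at $0$. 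The transition reading $\sigma_k$ carries the guard $c=d$ or $d<c<d+1$, whichever describes $\omega_1$'s region of $T_{j,k}$. No other transition has a guard. Then $\mathcal{A}$ accepts $\omega_1$ and rejects $\omega_2$. This automaton is even deterministic.

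The main obstacle is in the ($\Rightarrow$) direction: stating carefully that the truth of a guard depends only on the region of the relevant partial sum. This requires clock values to be exactly sums of consecutive delays. There is one mismatch to handle: Definition~\ref{def:tasem} writes $\tau\in\mathbb{R}_{>0}$ while timed words allow delay $0$. We will note that the argument never uses positivity of delays, and read the semantics with $\tau\ge 0$ throughout.
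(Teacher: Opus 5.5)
Your proposal is correct and follows the paper's proof. For ($\Rightarrow$) the paper also replays the accepting run of $\omega_1$ on $\omega_2$, using that clock values are partial sums $T_{j,k}$ lying in the same integer region (it phrases this as a maximal-prefix contradiction rather than an induction), and for ($\Leftarrow$) it builds the same one-clock chain automata $\mathcal{A}_{j,k}$. Two details of yours are worth keeping: resetting on the transition reading $\sigma_{j-1}$ (no reset when $j=1$) is the correct indexing, whereas the paper resets on the transition reading $\sigma_j$, which makes the guard measure $T_{j+1,k}$ instead of $T_{j,k}$; and reading the semantics with $\tau\ge 0$ is genuinely needed for ($\Leftarrow$), since under the literal $\tau>0$ no TA accepts a word containing a zero delay.
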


The proof of this lemma is provided in Appendix \ref{app:proof}.

By Lemma \ref{lemma:sEL}, after translating all given examples to their corresponding sEL, redundancy can be omitted and conflicts can be detected. 

\begin{proposition}\label{prop:sEL}
Given a set of timed words $\Omega=(\Omega_{+},\Omega_{-})$ and $\omega_1\in\Omega_+$, $\omega_1'\in\Omega_-$:
\begin{itemize}
    \item Redundancy: For all $\omega_2\neq \omega_1,\omega_1'$, $\mathit{A}(\Omega_{+},\Omega_{-})=\mathit{A}(\Omega_{+}\backslash \{\omega_1\},\Omega_{-})$ if $\omega_2\in \Omega_+ \wedge \mathcal{SE}({\omega}_1)= \mathcal{SE}({\omega}_2)$; $\mathit{A}(\Omega_{+},\Omega_{-})=\mathit{A}(\Omega_{+},\Omega_{-}\backslash \{{\omega_1'}\})$ if $\omega_2\in \Omega_- \wedge \mathcal{SE}({\omega}_1')= \mathcal{SE}({\omega}_2)$. 
    \item Conflict: $\mathit{A}(\Omega_{+},\Omega_{-})=\emptyset$ if $\mathcal{SE}({\omega}_1)= \mathcal{SE}({\omega}_1')$.
\end{itemize}
\end{proposition}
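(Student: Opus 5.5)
Both parts follow from Lemma~\ref{lemma:sEL} together with the definition of $\mathit{A}(\cdot,\cdot)$ as the set of DTAs meeting the constraints of Problem~\ref{prob:DTAmining}. Since every DTA is in particular a timed automaton, Lemma~\ref{lemma:sEL} gives the following. Whenever $\mathcal{SE}(\omega)=\mathcal{SE}(\omega')$, every DTA $\mathcal{A}$ satisfies $\mathcal{A}(\omega)=+$ iff $\mathcal{A}(\omega')=+$, and therefore $\mathcal{A}(\omega)=\mathcal{A}(\omega')$. We only need the ``only if'' direction of the lemma.

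\emph{Redundancy.} Take the positive case; the negative case is symmetric, with $\Omega_-$, $\omega_1'$ and the value $-$ in place of $\Omega_+$, $\omega_1$ and $+$. We prove the two inclusions. Every DTA that classifies all of $\Omega_+$ positively and all of $\Omega_-$ negatively also does so after $\omega_1$ is dropped. This gives $\mathit{A}(\Omega_{+},\Omega_{-})\subseteq\mathit{A}(\Omega_{+}\backslash\{\omega_1\},\Omega_{-})$. Conversely, let $\mathcal{A}\in\mathit{A}(\Omega_{+}\backslash\{\omega_1\},\Omega_{-})$. Because $\omega_2\neq\omega_1$ and $\omega_2\in\Omega_+$, we have $\omega_2\in\Omega_+\backslash\{\omega_1\}$, so $\mathcal{A}(\omega_2)=+$. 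The observation above then gives $\mathcal{A}(\omega_1)=+$, and hence $\mathcal{A}\in\mathit{A}(\Omega_{+},\Omega_{-})$.

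\emph{Conflict.} Suppose, for contradiction, that some $\mathcal{A}\in\mathit{A}(\Omega_{+},\Omega_{-})$ exists. Then $\mathcal{A}(\omega_1)=+$ and $\mathcal{A}(\omega_1')=-$. But $\mathcal{SE}(\omega_1)=\mathcal{SE}(\omega_1')$ forces $\mathcal{A}(\omega_1)=\mathcal{A}(\omega_1')$, which is a contradiction. Hence $\mathit{A}(\Omega_{+},\Omega_{-})=\emptyset$.

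The only delicate step is the side condition $\omega_2\neq\omega_1$ in the redundancy case. It guarantees that the constraint for $\omega_2$ remains in the reduced sample, so the equivalence class of $\omega_1$ is still represented there. Without it, removing $\omega_1$ could lose that constraint, and the reverse inclusion would fail. Once this is checked, all the substantive work rests on Lemma~\ref{lemma:sEL}.
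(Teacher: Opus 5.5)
Your proof is correct and follows essentially the same route as the paper. The inclusion $\mathit{A}(\Omega_{+},\Omega_{-})\subseteq\mathit{A}(\Omega_{+}\backslash\{\omega_1\},\Omega_{-})$ holds trivially, and both the reverse inclusion and the conflict case come from the ``only if'' direction of Lemma~\ref{lemma:sEL}; your remark that $\omega_2\neq\omega_1$ keeps $\omega_2$ in the reduced sample just makes explicit a step the paper leaves implicit.
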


\begin{proof}
Given $\Omega=(\Omega_{+},\Omega_{-})$ and $\omega_1\in\Omega_+$, $\omega_1'\in\Omega_-$:
\begin{itemize}
    \item Redundancy: For the first case, if $\omega_2\in \Omega_+ \wedge \mathcal{SE}({\omega}_1)= \mathcal{SE}({\omega}_2)$, then for all $\mathcal{A}\in\mathit{A}(\Omega_{+},\Omega_{-})$ and $\mathcal{A}'\in\mathit{A}(\Omega_{+}\backslash \{\omega_1\},\Omega_{-})$, obviously $\mathcal{A}\in A(\Omega_{+}\backslash \{\omega_1\},\Omega_{-})$ and by Lemma \ref{lemma:sEL}$, \mathcal{A}'\in A(\Omega_{+},\Omega_{-})$. Hence $\mathit{A}(\Omega_{+},\Omega_{-})=\mathit{A}(\Omega_{+}\backslash \{\omega_1\},\Omega_{-})$. The proof of the second case is similar.
    \item Conflict: $\mathcal{SE}({\omega}_1)= \mathcal{SE}({\omega}_1')$. If $\mathit{A}(\Omega_{+},\Omega_{-})\neq\emptyset$, then for all $\mathcal{A}\in\mathit{A}(\Omega_{+},\Omega_{-})$, either $\mathcal{A}(\omega_1)=\mathcal{A}(\omega_1')=-$ or $\mathcal{A}(\omega_1')=\mathcal{A}(\omega_1)=+$. Each case yields a contradiction. Hence $\mathit{A}(\Omega_{+},\Omega_{-})=\emptyset$.
\end{itemize}
\end{proof}

%\begin{corollary}\label{cor:conflictfree}
%$\Omega=(\Omega_{+},\Omega_{-})$ is conflict-free if and only if $\forall \omega_1\in\Omega_+$, $\omega_1'\in\Omega_-$, $\mathcal{SE}({\omega}_1)\neq \mathcal{SE}({\omega}_1')$. And a conflict-free $\Omega$ is guaranteed to have a solution to Problem \ref{prob:DTAmining}.
%\end{corollary}

%The proof of this corollary is provided in Appendix \ref{app:proof}.

Given $\Omega=(\Omega_{+},\Omega_{-})$, if there is any redundancy in $\Omega$, we just discard the duplicate sELs. If there is any conflict detected (as characterized in Prop. \ref{prop:sEL}) in $\Omega$, we claim there is no such DTA for Problem \ref{prob:DTAmining}. For simplicity, we assume that there are no conflicts in $\Omega$ in what follows. $S(\Omega)=(S_+(\Omega),S_-(\Omega))$  where $S_+(\Omega)=\{\mathcal{SE}(\omega)|\omega\in\Omega_+\}$, $S_-(\Omega)=\{\mathcal{SE}(\omega)|\omega\in\Omega_-\}$ is the set of sELs translated from $\Omega=(\Omega_{+},\Omega_{-})$.

\begin{mylemma} \label{lemma:sEL2}  
$S(\Omega)$ is a safe over-approximation of $\Omega$. 
% \hfill $\square$
\end{mylemma}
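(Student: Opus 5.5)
We must show that the pair $(L_+',L_-')$ with $L_+'=\bigcup S_+(\Omega)$ and $L_-'=\bigcup S_-(\Omega)$ is a safe over-approximation of $(\Omega_+,\Omega_-)$. Following the definition in Section~\ref{sc:preliminary}, this requires four facts: $\Omega_+\subseteq L_+'$, $\Omega_-\subseteq L_-'$, $L_+'\cap\Omega_-=\emptyset$, and $\Omega_+\cap L_-'=\emptyset$. Throughout we use the standing assumption, made after Prop.~\ref{prop:sEL}, that $\Omega$ is conflict-free.

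The inclusions are immediate from Def.~\ref{def:sel}. For each $\omega\in\Omega_+$, the time condition $\Lambda_\omega$ is built so that $\lambda(\omega)\models\Lambda_\omega$, and $\mathcal{SE}(\omega)$ keeps the event sequence $\mu(\omega)$. Hence $\omega\in\mathcal{SE}(\omega)\subseteq L_+'$. The same argument gives $\Omega_-\subseteq L_-'$.

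For the disjointness conditions, the key auxiliary fact is that membership in a sEL determines it:
\[
\omega'\in\mathcal{SE}(\omega)\ \Longrightarrow\ \mathcal{SE}(\omega')=\mathcal{SE}(\omega).
\]
To prove this, take $\omega'\in\mathcal{SE}(\omega)$. Then $\mu(\omega')=\mu(\omega)$. Moreover, for each pair $j\le k$, $T_{j,k}(\omega')$ satisfies the same constraint of $\Theta_\omega$ as $T_{j,k}(\omega)$, namely either $T_{j,k}=d$ or $d<T_{j,k}<d+1$ for the same $d$. Since $\Theta_{\omega'}$ is defined as the unique tightest such constraint for each pair $(j,k)$, we get $\Theta_{\omega'}=\Theta_\omega$. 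Therefore $\Lambda_{\omega'}=\Lambda_\omega$ and the two sELs coincide. (Alternatively, one could route this through Lemma~\ref{lemma:sEL}, but the direct argument is simpler.)

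Now suppose $\omega'\in L_+'\cap\Omega_-$. Then $\omega'\in\mathcal{SE}(\omega_1)$ for some $\omega_1\in\Omega_+$, so by the auxiliary fact $\mathcal{SE}(\omega')=\mathcal{SE}(\omega_1)$ with $\omega'\in\Omega_-$. This is exactly a conflict in the sense of Prop.~\ref{prop:sEL}, contradicting the conflict-free assumption. The case $\Omega_+\cap L_-'$ is symmetric. The step needing the most care is the uniqueness claim that the tightest constraint in $\Theta$ for each pair $(j,k)$ is determined by the value $T_{j,k}$. This holds because the intervals $\{d\}$ and $(d,d+1)$ for $d\in\mathbb{N}$ partition $\mathbb{R}_{\ge0}$.
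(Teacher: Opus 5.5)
Your proof is correct and takes the same route the paper intends: the paper only says the lemma is straightforward from the definition of sEL, and your two steps (membership via $\lambda(\omega)\models\Lambda_\omega$, and the fact that $\omega'\in\mathcal{SE}(\omega)$ forces $\mathcal{SE}(\omega')=\mathcal{SE}(\omega)$ because the intervals $\{d\}$ and $(d,d+1)$ partition $\mathbb{R}_{\geq 0}$) are exactly that unpacking. You also rightly make explicit something the paper leaves implicit: the two disjointness conditions hold only because of the standing no-conflict assumption, which the paper states just before it defines $S(\Omega)$.
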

The proof is straightforward from the definition of sEL.

\begin{mytheorem} \label{thm:sEL}
$\mathit{A}(S(\Omega))=\mathit{A}(\Omega)$ and $\mathit{D}(S(\Omega))=\mathit{D}(\Omega)$.
\end{mytheorem}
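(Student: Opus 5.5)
The plan is to prove the two inclusions $\mathit{A}(S(\Omega))\subseteq\mathit{A}(\Omega)$ and $\mathit{A}(\Omega)\subseteq\mathit{A}(S(\Omega))$. The second equality, $\mathit{D}(S(\Omega))=\mathit{D}(\Omega)$, then follows immediately from Lemma~\ref{lemma:minDTA}. Throughout, $\mathit{A}(\cdot,\cdot)$ on a pair of timed languages $(L_+,L_-)$ means the set of DTAs $\mathcal{A}$ with $L_+\subseteq\mathcal{L}(\mathcal{A})$ and $L_-\cap\mathcal{L}(\mathcal{A})=\emptyset$. Equivalently, $\mathcal{A}$ recognizes the pair in the sense of the preliminaries. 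Here I read $S_\pm(\Omega)$ as the unions of the sELs it contains.

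For the inclusion $\mathit{A}(S(\Omega))\subseteq\mathit{A}(\Omega)$, I would use Lemma~\ref{lemma:sEL2}. Since $S(\Omega)$ is a safe over-approximation of $\Omega$, we have $\Omega_+\subseteq\bigcup S_+(\Omega)$ and $\Omega_-\subseteq\bigcup S_-(\Omega)$. Concretely, $\omega\in\mathcal{SE}(\omega)$ because $\lambda(\omega)\models\Lambda_\omega$ by construction. So any DTA accepting every word of $S_+(\Omega)$ and rejecting every word of $S_-(\Omega)$ in particular accepts $\Omega_+$ and rejects $\Omega_-$.

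For the reverse inclusion, take $\mathcal{A}\in\mathit{A}(\Omega)$ and $\omega'\in\mathcal{SE}(\omega)$ with $\omega\in\Omega_+$ (the negative case is symmetric). The key step is to show $\mathcal{SE}(\omega')=\mathcal{SE}(\omega)$. Once that holds, Lemma~\ref{lemma:sEL} gives $\omega'\in\mathcal{L}(\mathcal{A})\iff\omega\in\mathcal{L}(\mathcal{A})$, so $\omega'$ is accepted. To get $\mathcal{SE}(\omega')=\mathcal{SE}(\omega)$, I would argue as follows. First, $\mu(\omega')=\mu(\omega)$. Second, $\lambda(\omega')$ satisfies, for every pair $j\le k$, the same tightest constraint on $T_{j,k}$ as $\lambda(\omega)$: either the same point value $d$ or the same open unit interval $(d,d+1)$. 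Since $\Theta$ is determined pairwise by exactly these unit-interval memberships, $\Theta_{\omega'}=\Theta_\omega$, and hence the two sELs coincide.

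The main obstacle is exactly this uniqueness-of-tightest-constraints step: the sEL definition must pin down, for each $T_{j,k}$, a unique element of $\{\{d\},(d,d+1)\}$, and membership in $\mathcal{SE}(\omega)$ must force $\omega'$ to have that same element. I would state this as a short observation from Definition~\ref{def:sel}, namely that the integer-point and open-unit-interval cells partition $\mathbb{R}_{\ge0}$. The remaining concern is that the statement implicitly assumes $\Omega$ is conflict-free, as stated before Lemma~\ref{lemma:sEL2}. Otherwise $S(\Omega)$ is not a valid pair and both sides are empty by Proposition~\ref{prop:sEL} together with the first inclusion.
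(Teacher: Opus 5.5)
Your proposal is correct and matches the paper's proof. It uses the same two inclusions, applies Lemma~\ref{lemma:sEL} to a word $\omega'\in\mathcal{SE}(\omega)$ for the nontrivial direction, and appeals to Lemma~\ref{lemma:minDTA} for the $\mathit{D}$ part; your cell-partition argument for $\mathcal{SE}(\omega')=\mathcal{SE}(\omega)$ spells out the step the paper dismisses as ``by definition'' (and, incidentally, neither inclusion actually needs conflict-freeness).
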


\begin{proof}
For all $\mathcal{A}\in\mathit{A}(S(\Omega))$, it is obvious that $\mathcal{A}\in\mathit{A}(\Omega)$. 

For all $\mathcal{A}'\in\mathit{A}(\Omega)$, if $\mathcal{A}'\notin\mathit{A}(S(\Omega))$, then either there exists $\omega\in S_+(\Omega), s.t. \mathcal{A}'(\omega)=-$, or there exists $\omega\in S_-(\Omega), s.t. \mathcal{A}'(\omega)=+$. In the first case, by definition there exist $\omega'\in \Omega+, $, $ s.t. \, \mathcal{SE}({\omega})= \mathcal{SE}({\omega}')$. $\mathcal{A}'(\omega')=+ $ contradicts Lemma \ref{lemma:sEL}. In the second case,  by definition there exist $\omega'\in \Omega-, $, $ s.t. \, \mathcal{SE}({\omega})= \mathcal{SE}({\omega}')$. $\mathcal{A}'(\omega')=- $ contradicts Lemma \ref{lemma:sEL}. So $\mathcal{A}'\in\mathit{A}(S(\Omega))$.

Hence, $\mathit{A}(S(\Omega))=\mathit{A}(\Omega)$, and by Lemma \ref{lemma:minDTA}, we have $\mathit{D}(S(\Omega))=\mathit{D}(\Omega)$. 
\end{proof}

\begin{myremark}\label{rmk:sEL}
In general, there are very few redundant traces, as the number of possible sELs for a timed word of length $n$, alphabet $\Sigma$ and the minimal width $w$ of the range of each delay time exceeds $w^n|\Sigma|^n$ . We introduce sEL mainly because it %is the minimal MN-Equivalence class in the timed language of TA as claimed by Lemma \ref{lemma:sEL}, and it 
replaces the continuous delay times with integer-bounded intervals, which is suitable for further simplification.
\end{myremark}

%% file: tAPTA.tex
\subsection{Construction and Simplification of tAPTA }\label{sbsc:tAPTA}

\subsubsection{Constructing a timed APTA from sELs}
In the context of TA, a clock variable records the time elapsed since its most recent reset (or since the initial moment if never reset). Its value is the sum of delay times accumulated since that reset. The feasibility of a transition depends not only on the input event, but also on the value of these accumulated delay times. Therefore, we propose an incremental representation of elementary timed languages to explicitly record the newly introduced timing constraints of the accumulated delay times at each event.
%There is an observation of TA: the value of any clock variable records the amount of time that has elapsed since its most recent reset along some transition (or since the initial moment if the clock has never been reset), i.e., the time duration accumulated over the sequence of transitions taken since that reset. Whether a transition can be taken or not depends only on the event and the accumulated durations. Thus we propose an incremental form of elementary timed language that record newly introduced time constraints at every event:

\begin{mydefinition}[Incremental Form of Elementary Language] \label{def:IFTL}
Given an elementary language $\mathcal{E}(u,\Lambda)$ with $u=\sigma_1 \sigma_2 \cdots\sigma_n$ and $\Lambda=\{\sum\nolimits_{i=j}^{k}{t_i}\in \mathrm{I}_{j,k} | j,k=1,\cdots,n, \mathrm{I}_{j,k}$ is natural-bounded interval $ \}$, the incremental form of $\mathcal{E}(u,\Lambda)$ is a sequence of $n$ tuples (called incremental tuple). The $m$-th tuple is $(\sigma_m,\Lambda_m)$, where $\Lambda_m=\{T_{j,m}(\cdot)\in \mathrm{I}_{j,m} | j=1,\cdots,m \} \subseteq \Lambda $. That is, the $m$-th tuple contains the $m$-th event and the set of constraints of the sum of time delays at the $m$-th event. 
    
\end{mydefinition}

We impose a fixed canonical ordering on the elements of $\Lambda_m$ by increasing number of summands (i.e., $T_{m,m}\in \mathrm{I}_{m,m}$ comes the first and $T_{1,m}\in \mathrm{I}_{1,m}$ is the last).  Then we omit the $\in$ notation and abbreviate $\Lambda_m$ as the ordered list $(\mathrm{I}_{m,m},\mathrm{I}_{m-1,m},\cdots,\mathrm{I}_{1,m})$ in what follows. Furthermore, for sELs, since each interval $\mathrm{I}_{j,m}$ is either $[d,d]$ or $(d, d+1)$, $d\in \mathbb{N}$, we abbreviate these two types of elements as $d$ and $d_+$ respectively.

\begin{myexample}
Given $\omega=(a,1.6)(b,2.6)(a,0.4)$ and $\mathcal{SE}({\omega}) =  \{(a,t_1)(b,t_2)(a,t_3)|t_1\in(1,2)\wedge t_2\in(2,3)\wedge t_3\in(0,1)\wedge t_1+t_2\in(4,5)\wedge t_2+t_3=3\wedge t_1+t_2+t_3\in(4,5)\}$. The incremental form of $\mathcal{SE}({\omega})$ is $(a,(1_+))(b,(2_+,4_+))(a,(0_+,3,4_+))$.
\end{myexample}

The purpose of the incremental form is to express an elemental language with $\mathcal{O}(n^2)$ time constraints in a sequential manner, where each step incorporates all time constraints newly introduced by the current event. We treat each incremental formed sEL as an label sequence, where each tuple $(\sigma_m,\Lambda_m)$ serves as a single label. Consequently, we extend the notion of prefix to these label sequences. To avoid ambiguity, we refer to them as "sEL sequences", from which we construct an APTA denoted as the "timed APTA" (tAPTA).

\begin{mydefinition}[timed APTA (tAPTA)] \label{def:tAPTA}
    Given a pair of sets of sELs $S=(S_+,S_-)$ in incremental form, a tAPTA constructed from $S$ is an APTA $\mathcal{P}(S)=(Q,\varepsilon,\Sigma,\delta,S_+,S_-)$, where $Q=\mathbf{pre}(S_+\cup S_-)$ and $\delta: Q \times (\Sigma \times \mathbf\Lambda )\rightarrow Q$. $(\Sigma\times \mathbf\Lambda)$ is the set of incremental tuples. For all $\sigma\in\Sigma^*$ , $u\in\Sigma$ and $ \Lambda_1,\Lambda_2 \in \mathbf\Lambda$ such that $\mathcal{E}(\sigma,\Lambda_1),\, \mathcal{E}(\sigma u,\Lambda_1\cup\Lambda_2)\, \in Q$,   $\delta(\mathcal{E}(\sigma,\Lambda_1),(u,\Lambda_2))=\mathcal{E}(\sigma u,\Lambda_1\cup\Lambda_2)$.
    
\end{mydefinition}

The semantics of tAPTA are defined as follows. For a path $p$ in the tAPTA, its timed language  $L(p)$ corresponds to the elementary language represented by the incremental-formed sequence of labels along the transitions of that path. The timed language of a state $q$ is $L(q)=\underset{p\in path(q)}{\bigcup} L(p)$. A tAPTA is a three-valued transition system, and the language pair of a tAPTA $\mathcal{P}=(Q,\varepsilon,\Sigma,\delta,F,R)$ is $P=(\underset{q\in F}{\bigcup} L(q),\underset{q\in R}{\bigcup} L(q) )$. 

\oomit{
\begin{mylemma}
    Given a path $p$ in a tAPTA and timed word $\omega\in L(p)$, for all $p'\in\mathbf{pre}(p)$, there exists $\omega'\in\mathbf{pre}(\omega) $, such that $\omega'\in L(p')$.
\end{mylemma}

\begin{proof}
    Suppose $p=q_0d_1q_1\cdots d_nq_n$ and $\omega=(\sigma_1,t_1)(\sigma_2,t_2)\cdots(\sigma_n,t_n)$, where $d_i=(q_{i-1},u_i,q_i)$ for $1\leq i\leq n$, $u_i=(a_i,\Lambda_i)$ is an incremental tuple $(\sigma_i,\Lambda_i)$. By definition, $\omega\in L(p)\Rightarrow \sigma_1\cdots\sigma_n=a_1\cdots a_n,\wedge \lambda(\omega)\models \underset{1\leq i\leq n}{\bigwedge}\Lambda_i$. For all $p'\in\mathbf{pre}(p)$, if $p'=q_0$ then there exists $\omega'=\varepsilon \in \mathbf{pre}(\omega)$, $\omega'\in L(p')$. Otherwise let $p'=q_0d_1q_1\cdots d_mq_m,\, m\leq n$, $L(p')=\{v\in\mathcal{T}\Sigma^*\,|\mu(v)=a_1a_2\cdots a_m\wedge \lambda(v)\models \underset{1\leq i\leq m}{\bigwedge}\Lambda_i\}$, then there exists $\omega'=(\sigma_1,t_1)(\sigma_2,t_2)\cdots(\sigma_m,t_m)$. By definition \ref{def:IFTL}$, \underset{1\leq i\leq m}{\bigwedge}\Lambda_i$ contains only the time constraints over $m$ terms of delay time. We have: $$\lambda(\omega)\models \underset{1\leq i\leq n}{\bigwedge}\Lambda_i\Rightarrow t_1\cdots t_n\models \underset{1\leq i\leq m}{\bigwedge}\Lambda_i \Rightarrow t_1\cdots t_m\models \underset{1\leq i\leq m}{\bigwedge}\Lambda_i \Rightarrow \lambda(\omega')\models \underset{1\leq i\leq m}{\bigwedge}\Lambda_i$$
    Thus $\omega'\in L(p')$. 
\end{proof}
}

\subsubsection{Extending States Merging to tAPTA}

The construction of a tAPTA from sELs preserves comprehensive information about event sequences and time constraints. However, applying standard APTA state merging is challenging in this context, as transitions are labeled by tuples of events and time constraints rather than simply events.  We consequently extend the merging operation of APTA to tAPTA: % sEL sequence and show a merged state of the extended merging operation is still a timed language:

\begin{mydefinition}[Extended Merging of tAPTA] \label{def:ExMerge}
    Given a tAPTA $\mathcal{P}=(\mathit{Q},\varepsilon,\Sigma,\delta,F,R)$, an equivalence relation $\sim_S: \mathit{Q}\times\mathit{Q}$ is defined as:
    \begin{itemize}
        \item If two states $q_1$, $q_2\in Q$ are both in $F$ (or $R$, or $D=Q \backslash (F\cup R)$), and they are both leaf states (have no children), then $q_1$, $q_2$ are equivalent.
        \item If two states $q_1$, $q_2\in Q$ are both in $F$ (or $R$, or $D$), and for all $\sigma \in \Sigma$, 
        \begin{enumerate}
            \item $\nexists(\sigma,\Lambda_1)$, $s.t.\, \delta(q_1,(\sigma,\Lambda_1))$ exist $\Longleftrightarrow$ $\nexists(\sigma,\Lambda_2)$, $s.t.\,\delta(q_2,(\sigma,\Lambda_2))$ exist.
            \item $\exists(\sigma,\Lambda_1)$, $s.t.\,\delta(q_1,(\sigma,\Lambda_1))=q_1'$ $\Longrightarrow$ $\exists(\sigma,\Lambda_2)$, $s.t.\,\delta(q_2,(\sigma,\Lambda_2))=q_2'$ and $q_1',q_2'$ are equivalent.
            \item $\exists(\sigma,\Lambda_1)$, $s.t.\,\delta(q_2,(\sigma,\Lambda_2))=q_2'$ $\Longrightarrow$ $\exists(\sigma,\Lambda_1)$, $s.t.\,\delta(q_1,(\sigma,\Lambda_1))=q_1'$ and $q_1',q_2'$ are equivalent.
        \end{enumerate}
    \end{itemize}
    We write $q_1\equiv_S q_2$ iff they are equivalent under $\sim_S$.

    The extended merging operation of two equivalent states $q_1,q_2$ on a tAPTA is defined as follows:

    \begin{itemize}
        \item If  $q_1,q_2$ both have no children, then they are merged into one state.
        \item If  $q_1,q_2$ have any pair of equivalent children, i. e., $\exists \delta(q_1,(\sigma,\Lambda_1))=q_1'$, $\delta(q_2,(\sigma,\Lambda_2))=q_2'$, $q_1'\equiv_S q_2'$, we first operate the merging of $q_1',q_2'$ into $q'$, then we merge $q_1,q_2$ into $q$. The transitions from $q$ to $q'$ is labeled by tuple $(\sigma, \Lambda_1\vee \Lambda_2)$. 
    \end{itemize}
    
\end{mydefinition}

 The label of a merged transition takes the form $ (\sigma,(I_1,I_2,\\\cdots,I_n)\vee (I_1',I_2',\cdots,I_m'))$, meaning that for any path $p$ containing this transition, either $ (I_1,I_2,\cdots,I_n)$ or $ (I_1',I_2',\cdots,I_m')$ is selected based on the position of the transition in the path. If $n\neq m$, then only the branch of whose length matches the sequential number of this transition in $L(p)$ can be taken. For instance, in Fig. \ref{fig:ExMerge}.c, for the path $q_0 \rightarrow q_2 \rightarrow q_4$, the transition from $q_2$ to $q_4$ is the second transition and corresponds to the branch $(3_+,6_+)$, and the time language of the entire path is $((b,(3)),(a,(3_+,6_+)))$, identical to that of the path $q_0 \rightarrow q_2 \rightarrow q_5$ in Fig. \ref{fig:ExMerge}.a. Semantically, such a merged transition is equivalent to two distinct transitions sharing the same source and target states, labeled by $(\sigma,(I_1,I_2,\cdots,I_n))$ and $(\sigma,(I_1',I_2',\cdots,I_m'))$ respectively.

\begin{mylemma} \label{lemma:ExMerge}
The extended merging is "safe", i.e., no conflicts are introduced between the timed languages of accepted and rejected states.
\end{mylemma}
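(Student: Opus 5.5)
We will prove that if $\mathcal{P}'$ is obtained from a tAPTA $\mathcal{P}$ by one extended merging of two equivalent states $q_1\equiv_S q_2$ (Def.~\ref{def:ExMerge}), then the language pair $P'=(L'_+,L'_-)$ of $\mathcal{P}'$ satisfies $L'_+\cap L'_-=\emptyset$, given that $L_+\cap L_-=\emptyset$ for $\mathcal{P}$. The conflict-free hypothesis for $\mathcal{P}$ itself holds because we assumed no conflicts after preprocessing (Prop.~\ref{prop:sEL}). The result for arbitrary sequences of merges then follows by induction on the number of merges.

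The key step is to characterize the paths of $\mathcal{P}'$. Merging is recursive: first the equivalent children are merged bottom-up, then $q_1,q_2$ become $q$. Every transition label $(\sigma,\Lambda_1\vee\Lambda_2)$ is semantically the pair of parallel transitions $(\sigma,\Lambda_1)$ and $(\sigma,\Lambda_2)$. Hence each root path $p'$ of $\mathcal{P}'$, after choosing a branch of each disjunctive label consistent with its position, is the image of a sequence of labels that is concatenated from segments of paths of $\mathcal{P}$. Specifically, it is a path of $\mathcal{P}$ up to some $q_i$, continued by a path of $\mathcal{P}$ below $q_j$ with $i,j\in\{1,2\}$, and recursively below merged descendants. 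By Def.~\ref{def:ExMerge}, merged states have the same status ($F$, $R$ or $D$). So I will prove, by induction on the depth of the subtree being merged, that the status of every state on $p'$ equals the status of the corresponding original state.

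Next I will show that every timed word in $L(p')$ lies in $L(p)$ for some original path $p$ of $\mathcal{P}$ ending in a state of the same status. This inclusion is the main obstacle, because mixing a prefix from the $q_1$-branch with a suffix from the $q_2$-branch produces constraints $\Lambda_m$ that refer to $T_{j,m}$ with $j$ inside the prefix. The combined constraint set may therefore be a set of sEL constraints that no original sEL contains. My first attempt will exploit the positional convention: the branch whose length equals the transition's index is selected. Consequently, a suffix label from below $q_2$ can only be used if $q_1$ and $q_2$ have the same depth, and the concatenated word still satisfies a genuine conjunction of interval constraints.

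If the inclusion fails, I will retreat to the weaker property that suffices, namely safety. A word $\omega\in L'_+\cap L'_-$ would need two paths of $\mathcal{P}'$ with the same event sequence, one ending in $F$ and one in $R$. Pulling both back through the merge, I would obtain two original leaves or nodes with the same prefix structure and opposite status. I will argue that these occupy corresponding positions under $q_1$ and $q_2$ respectively. Equivalence under $\sim_S$ forces corresponding descendants to have equal status, by induction on the recursive definition, which gives a contradiction.
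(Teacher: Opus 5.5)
Your fallback step has a genuine gap, and it cannot be closed as stated. Nothing justifies the claim that the pulled-back accepting and rejecting nodes ``occupy corresponding positions under $q_1$ and $q_2$''. The relation $\sim_S$ of Def.~\ref{def:ExMerge} pairs children only by event and status and ignores interval labels. So a state may have several $\sigma$-children of different status, and the pairing says nothing about which label below $q_2$ coincides with which label below $q_1$. Moreover, the pulled-back accepting path is mixed, so the witness word need not lie in the original language of any accepting node, and there is no conflict in $\mathcal{P}$ to contradict.

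Concretely, take positive words $(a,0.3)(b,0.3)$, $(a,1.2)(b,0.3)$ and negative words $(a,0.6)(b,0.6)$, $(a,1.6)(b,0.6)$. Their incremental sELs are $(a,(0_+))(b,(0_+,0_+))$, $(a,(1_+))(b,(0_+,1_+))$, $(a,(0_+))(b,(0_+,1_+))$ and $(a,(1_+))(b,(0_+,2_+))$. The depth-one states $u_1=(a,(0_+))$ and $u_2=(a,(1_+))$ are both don't-care, and each has one accepting and one rejecting $b$-leaf, so $u_1\equiv_S u_2$. Merging produces $u$ with an edge to the accepting leaf $x$ labelled $(b,(0_+,0_+)\vee(0_+,1_+))$ and an edge to the rejecting leaf $y$ labelled $(b,(0_+,1_+)\vee(0_+,2_+))$. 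Both branches have length $2$, so after the prefix $(a,(0_+))$ either may be taken. Hence the negative example $(a,0.6)(b,0.6)$ now lies in $L(x)\cap L(y)$. The two original nodes your pull-back produces are $u_2$'s accepting leaf and $u_1$'s rejecting leaf, which are not $\sim_S$-related. So the lemma itself fails without an extra hypothesis.

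This is exactly the same-depth mixing you flagged in your first route. There, the mixed constraint set is a genuine elementary language but need not be contained in any original sEL of the right status, so you were right to distrust it. The paper's proof takes that first route. When the branch lengths differ ($n\neq m$), positional selection lets it swap the merged edge for an original one and recover an original path with the same language. When $n=m$, it simply asserts two original paths with $L(p_+)\subseteq L(p'_+)\cup L(p''_+)$. The example refutes that assertion too, since $(a,0.6)(b,0.6)$ lies in neither original accepting sEL, so neither argument handles this case.

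A correct version needs an additional hypothesis. One option is to forbid merges that give some state two incoming paths of the same length. Then selection by length never combines constraints from different sELs, which is precisely what the paper's $n\neq m$ argument relies on. Another is to accept a merge only after an explicit consistency check, as in RPNI-style state merging.
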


The proof of this lemma is provided in Appendix \ref{app:proof}.

\begin{myremark}\label{rmk:ExMerge}
Without extending the merging operation for tAPTA and relying solely on the merging defined in Def. \ref{def:eqstate}, only transitions labeled with identical tuples of events and time constraint lists can be merged. While this strict merging ensures that no over-approximation is introduced to the timed language, it limits model simplification. This framework differs from the tAPTA approach for RTA mining in \cite{Meng20263DRTA}, which retains time constraint information for only a single delay time. 
\end{myremark}

\subsubsection{Interval Over-approximation for Merged Transitions}

While extended merging effectively reduces the number of states and transitions, it preserves all time constraint information within the disjunctive branches of the merged transitions. To mitigate the resulting complexity of the SMT formulas used for encoding these time constraints, we apply an interval over-approximation to each merged transition immediately after merging a group of equivalent states:

\begin{mydefinition}[Interval Over-approximation] \label{def:Intoverapp}
For a merged transition of tuple $(\sigma, (I_1,I_2,\cdots,I_n)\vee (I_1',I_2',\cdots,I_m'))$,  for $1\leq i\leq min\{n,m\}$, we check whether $I_i$, $I_i'$ and the interval between $I_i$ and $ I_i'$ (denoted $I_i''$) overlap with other intervals that are from another transition with the same parent and event, and at the same index of the interval list:
\begin{itemize}
    \item If there exists $i$ such that $I_i$ or $ I_i'$ overlaps, we do nothing to the two branches.
    \item If for all $i$, neither $I_i$ nor $ I_i'$ overlaps, we do over-approximation $(I_1,I_2,\cdots,I_n)\vee (I_1',I_2',\cdots,I_m')\hookrightarrow(I_1\cup I_1',I_2\cup I_2',\cdots,I_m\cup I_m',I_{m+1},\cdots, I_n)$ (suppose $n>m$).
    \item Based on the second scenario, for all $i$,  if $I_i''$ is not overlapped, we further do over-approximation $I_i'\cup I_i'\hookrightarrow I_i'\cup I_i'\cup I_i''$, i.e. combine $I_i$, $ I_i'$  and the interval between them into one. 
\end{itemize}
\end{mydefinition}

Fig.~\ref{fig:ExMerge}.d illustrates this over-approximation. When composing the timed language of a path containing an over-approximated transition with a sequential number $m$, only the first $m$ intervals in the over-approximated list is taken, as the first $m$ intervals form an over-approximation of a disjunction branch of length $m$.  For instance, the timed language of the path via $q_0 \rightarrow q_2 \rightarrow q_3$ in Fig \ref{fig:ExMerge}.d is $(b,(3))(a,((1,3),(2,6)))$.

\begin{myexample}\label{exp:ExMerge}
    Given $S_{+}=\{(a,(1))(b,(1,2))(a,(1_+,2_+,3_+))\\,(b,(3))(a,(2_+,5_+)\}$, $S_{-}=\{(a,(1))(b,(1,2))(a,(0_+,1_+,\\2_+)),(b,(3))(a,(3_+,6_+))\}$, first we merge the four leaf states. Then we merge the states for prefixes $(a,(1))(b,(1,2))$ and $(b,(3))$. Fig. \ref{fig:ExMerge} illustrates the simplification of the tAPTA.  
\end{myexample}

\begin{figure}[!t]
    \begin{minipage}{0.48\linewidth}
    \centering
    \resizebox{\linewidth}{!}{
    \begin{tikzpicture}[->, >=stealth', shorten >=1pt, auto, node distance=1.5cm, thick,scale=0.8, every node/.style={scale=0.8}, initial where=above]
    \node[initial, state] (0) {\large $q_0$};
    \node[state] (1) [right = 1.2cm of 0] {\large $q_1$};
    \node[state] (2) [below = 1cm of 1] {\large $q_2$};
    \node[state] (3) [right = 1.2cm of 1] {\large $q_3$};
    \node[state,accepting] (4) [below = 1cm of 3] {\large $q_4$};
    \node[rectangle,draw,minimum width =0.8cm,minimum height = 0.8cm] (5) [below = 1cm of 4] {\large $q_5$};
    \node[state,accepting] (6) [right = 1.2cm of 3] {\large $q_6$};
    \node[rectangle,draw,minimum width =0.8cm,minimum height = 0.8cm] (7) [below = 1cm of 6] {\large $q_7$};
    \path (0) edge node[above] {\large $a,(1)$} (1)
    (1) edge   node[above] {\large $b,(1,2)$} (3)
    (0) edge   node[below left ] {\large $b,(3)$} (2)
    (2) edge   node[below left ] {\large $a,(3_+,6_+)$} (5)
    (3) edge   node[right ] {\large $a,(0_+,1_+,2_+)$} (7)
    (2) edge   node[above=0.35cm] {\large $a,(2_+,5_+)$} (4)
    (3) edge  node[above=0.35cm] {\large $a,(1_+,2_+,3_+)$} (6);
    \end{tikzpicture}
    \textbf{a}
    }
    \end{minipage}
    \begin{minipage}{0.48\linewidth}
    \centering
    \resizebox{0.9\linewidth}{!}{
    \begin{tikzpicture}[->, >=stealth', shorten >=1pt, auto, node distance=1.5cm, thick,scale=0.8, every node/.style={scale=0.8}, initial where=above]    \node[initial, state] (0) {\large $q_0$};
    \node[state] (1) [right = 1.2cm of 0] {\large $q_1$};
    \node[state] (2) [below = 1cm of 1] {\large $q_2$};
    \node[state] (3) [right = 1.2cm of 1] {\large $q_3$};
    \node[state,accepting] (4) [below = 1cm of 3] {\large $q_4$};
    \node[rectangle,draw,minimum width =0.8cm,minimum height = 0.8cm] (5) [below = 1cm of 4] {\large $q_5$};
    \path (0) edge node[above] {\large $a,(1)$} (1)
    (1) edge   node[above] {\large $b,(1,2)$} (3)
    (0) edge   node[below left ] {\large $b,(3)$} (2)
    (2) edge   node[below left ] {\large $a,(3_+,6_+)$} (5)
    (3) edge   node[right ] {\large $a,(1_+,2_+,3_+)$} (4)
    (2) edge   node[above=0.35cm] {\large $a,(2+,5+)$} (4)
    (3) edge [in= 45, out=-45] node[below=0.35cm] {\large $a,(0_+,1_+,2_+)$} (5);
    \end{tikzpicture}
    \textbf{b}
    }
    \end{minipage}
    \newline
    \begin{minipage}{0.48\linewidth}
    \centering
    \resizebox{\linewidth}{!}{
    \begin{tikzpicture}[->, >=stealth', shorten >=1pt, auto, node distance=1.5cm, thick,scale=0.8, every node/.style={scale=0.8}, initial where=above]\node[initial, state] (0) {\large $q_0$};
    \node[state] (1) [right = 1.2cm of 0] {\large $q_1$};
    \node[state] (2) [below = 1cm of 1] {\large $q_2$};
    \node[state,accepting] (3) [right = 1.2cm of 1] {\large $q_3$};
    \node[rectangle,draw,minimum width =0.8cm,minimum height = 0.8cm] (4) [below = 1cm of 3] {\large $q_4$};
    \path (0) edge node[above] {\large $a,(1)$} (1)
    (0) edge   node[below left ] {\large $b,(3)$} (2)
    (2) edge   node[right ] {\large $a,(2_+,5_+)\vee(1_+,2_+,3_+)$} (3)
    (2) edge   node[below=0.5cm ] {\large $a,(3_+,6_+)\vee(0_+,1_+,2_+)$} (4)
    (1) edge   node[above left] {\large $b,(1,2)$} (2);
    \end{tikzpicture}
    \textbf{c}
    }
    \end{minipage}
    \begin{minipage}{0.48\linewidth}
    \centering
    \resizebox{0.9\linewidth}{!}{
    \begin{tikzpicture}[->, >=stealth', shorten >=1pt, auto, node distance=1.5cm, thick,scale=0.8, every node/.style={scale=0.8}, initial where=above]\node[initial, state] (0) {\large $q_0$};
    \node[state] (1) [right = 1.2cm of 0] {\large $q_1$};
    \node[state] (2) [below = 1cm of 1] {\large $q_2$};
    \node[state,accepting] (3) [right = 1.2cm of 1] {\large $q_3$};
    \node[rectangle,draw,minimum width =0.8cm,minimum height = 0.8cm] (4) [below = 1cm of 3] {\large $q_4$};
    \path (0) edge node[above] {\large $a,(1)$} (1)
    (0) edge   node[below left ] {\large $b,(3)$} (2)
    (2) edge   node[right ] {\large $a,((1,3),(2,6),3_+)$} (3)
    (2) edge   node[below=0.5cm ] {\large $a,(0_+\cup3_+,1_+\cup6_+,2_+)$} (4)
    (1) edge   node[above left] {\large $b,(1,2)$} (2);
    \end{tikzpicture}
    \textbf{d}
    }
    \end{minipage}
    \caption{\textbf{a}: the original tAPTA constructed from sELs; \textbf{b}: the leaf nodes are merged; \textbf{c}: all nodes are merged; \textbf{d}: intervals are over-approximated}
    \label{fig:ExMerge}
\end{figure}

\begin{mylemma} \label{lemma:Intoverapp}
The interval over-approximation operation is "safe", i.e., no conflicts are introduced between the timed languages of accepted and rejected states. 
\end{mylemma}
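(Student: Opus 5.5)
We argue by contradiction, reducing the claim to a local statement about a single merged transition. Interval over-approximation only ever changes the label of one merged transition $d=(q,(\sigma,\Lambda\vee\Lambda'),q')$. It does so by replacing the two branches $\Lambda=(I_1,\dots,I_n)$ and $\Lambda'=(I_1',\dots,I_m')$ with a single list $\hat\Lambda$. Each component of $\hat\Lambda$ is $I_i\cup I_i'$, possibly together with the gap $I_i''$, or an unchanged $I_i$. The states, their $F/R/D$ status and all other transitions stay the same. So it suffices to show that one such replacement keeps $\bigcup_{q\in F}L(q)$ disjoint from $\bigcup_{q\in R}L(q)$; the full operation is then handled by induction over the sequence of replacements. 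We assume, by Lemma~\ref{lemma:ExMerge}, that the tAPTA before the replacement is already conflict-free.

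Suppose the replacement creates a conflict. Then some timed word $\omega$ lies in $L(p_+)$ for a path $p_+$ to an accepting state and in $L(p_-)$ for a path $p_-$ to a rejecting state. At least one of the two paths, say $p_+$, must use $d$ with the new label. Otherwise both languages were unchanged, contradicting conflict-freeness before the step. Because $\omega$ lies in both languages, $\mu(\omega)$ is the label word of both paths. The tAPTA before merging was a prefix tree, and extended merging only identifies equivalent states. Hence the two paths read the same events up to the first position where they leave a common state $r$ by two different transitions $d_1\neq d_2$ labelled with the same event.

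We then split into cases according to where the divergence lies relative to $d$.
\begin{itemize}
\item[(i)] If $d$ occurs after the divergence point on $p_+$, or if $p_+$ and $p_-$ diverge exactly at $q$ with one of them taking $d$, we use the non-overlap conditions of Def.~\ref{def:Intoverapp}. The replacement is performed only when, for every index $i$, none of $I_i$, $I_i'$ (nor the gap $I_i''$ where it is added) overlaps the $i$-th interval of any sibling transition from $q$ with event $\sigma$. The delay sums of $\omega$ at that position, read through the canonical ordering of the incremental tuple, must lie in the $i$-th component of both labels. This contradicts disjointness.
\item[(ii)] If the divergence happens strictly earlier, at $r\neq q$, the transitions $d_1,d_2$ are untouched by the replacement. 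Their incremental tuples are sEL constraints or earlier over-approximations, so they already separate $\omega$ unless they overlap. In that case we argue that $\omega$ was already in both languages before the step, using the fact that after the divergence the new label only weakens constraints along $p_+$.
\end{itemize}

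The main obstacle is case (ii), together with the subtlety that the non-overlap check is only index-wise and only over siblings with the same parent. A word could be separated before the step only by the constraint in $d$ itself and still reach a rejecting leaf via a different route. We would handle this by showing that, after full extended merging, any two paths in the tAPTA carrying $\omega$ to states of opposite polarity must separate at some sibling pair. Since $q'$ has a fixed polarity, the rejecting path cannot pass through $d$ into $q'$; so it must leave the accepting path at a sibling transition, either at $q$ or earlier. When the sibling pair is at $q$ we are in case (i). When it is earlier, the constraint there is unchanged, so $\omega$ remains excluded. This is the step where we expect to need the precise semantics of selecting a branch by position (the first $m$ intervals of the over-approximated list), and we would verify it carefully.
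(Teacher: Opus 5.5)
There is a genuine gap, and it is the step you defer. Only the second clause of your case (i) is actually settled. If $p_+$ and $p_-$ are at $q$ at the same position and split there, one via $d$ and the other via a same-event sibling, then the index-wise test of Def.~\ref{def:Intoverapp} already separates them at index $1$. The first clause of (i) (divergence before $d$) is the same situation as (ii), and there the test at $q$ says nothing about $p_-$. Your treatment of the remaining cases rests on two claims that fail:
\begin{itemize}
\item ``Since $q'$ has a fixed polarity, the rejecting path cannot pass through $d$ into $q'$'' is false whenever $d$ is not the last edge of $p_+$. Then $q'$ is internal, its own $F/R/D$ status says nothing about its descendants, and $p_-$ may run through $d$ to a rejecting descendant of $q'$.
\item When the paths split earlier at $r$ via same-event siblings $d_1,d_2$, you assume $d_1,d_2$ exclude $\omega$. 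Extended merging does not keep same-event siblings disjoint. When they overlap, the only constraint separating $p_+$ from $p_-$ may be the old label of $d$, which is exactly the label being widened. Meanwhile the test of Def.~\ref{def:Intoverapp} looks only at siblings of $d$ at $q$ and never sees $p_-$.
\end{itemize}
The fallback in (ii) also runs the wrong way: $L_{\mathrm{old}}(p_+)\subseteq L_{\mathrm{new}}(p_+)$ cannot yield $\omega\in L_{\mathrm{old}}(p_+)$.

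This cannot be repaired within the stated definitions, because the lemma fails. Take positive words $(a,1)(\sigma,1)(\tau,0.5)$, $(b,1)(\sigma,2)(\tau,2.5)$ and negative words $(a,1)(\sigma,2)(\tau,1.5)$, $(b,1)(\sigma,1)(\tau,1.5)$. Extended merging identifies the two accepting leaves into $A$, the two rejecting leaves into $R$, their respective parents into $c$ and $c'$, and the two depth-one locations into $r$. The resulting edges are:
\begin{itemize}
\item $r\to c$ labelled $(\sigma,(1,2)\vee(2,3))$ and $r\to c'$ labelled $(\sigma,(2,3)\vee(1,2))$;
\item $c\to A$ labelled $(\tau,(0_+,1_+,2_+)\vee(2_+,4_+,5_+))$ and $c'\to R$ labelled $(\tau,(1_+,3_+,4_+)\vee(1_+,2_+,3_+))$.
\end{itemize}
This tAPTA is conflict-free. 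All words in its language pair have $t_1=1$; accepted words have $t_3\in(0,1)$ if $t_2=1$ and $t_3\in(2,3)$ if $t_2=2$, and rejected words have $t_3\in(1,2)$. Since $c$ has no other $\tau$-edge, the operation widens $c\to A$ to $(\tau,((0,3),(1,5),(2,6)))$. After this, the negative example $(b,1)(\sigma,1)(\tau,1.5)$ is also accepted, via the branch $(1,2)$ of $r\to c$. The two paths split at $r$ over overlapping siblings, which is your case (ii). Widening $c\to A$ before the depth-one locations are merged merely moves the same conflict to that merge.

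The paper gives only ``straightforward from the definition'', presumably meaning your case (i), so it has the same hole. A correct version needs one of the following: a non-local check of the widened path languages against all rejecting paths; or an invariant that same-event siblings have disjoint labels at every position. Extended merging does not maintain that invariant, but under it your sibling-pair argument does go through.
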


The proof is straightforward from the definition.

\begin{algorithm}[!t]\label{alg:process}
    \small
	\caption{Preprocessing and Simplifying Examples }
	\SetKwInOut{Input}{input}
	\SetKwInOut{Output}{output}
	\Input{A sample set $\Omega=(\Omega_+,\Omega_-)$, alphabet $\Sigma$.}
	\Output{A simplified tAPTA $\mathcal{P}$ recognizing $\Omega$.}
	$\mathit{S_+}\gets \emptyset$; $\mathit{S_-}\gets \emptyset$\;
    \ForEach{$\omega\in\Omega_+$}{$\mathit{S_+} \gets$ \textit{to\_sEL}($\omega$)\;} 
    \ForEach{$\omega\in\Omega_-$}{$\mathit{S_-} \gets$ \textit{to\_sEL}($\omega$)\;} 
    $S_+ \gets \textit{sort\_Lexical}(S_+)$ \tcp*{Sort in lexical order}
    $S_- \gets \textit{sort\_Lexical}(S_-)$\;
    $\mathcal{P}\gets (Q=\{\varepsilon\},q_0=\varepsilon,\Sigma,\delta=\emptyset,F=\emptyset,R=\emptyset)$ \tcp*{Initialize tAPTA}
    $Cevents\gets \varepsilon$  \tcp*{Record the lexical order}
    \ForEach{$s\in S_+\cap S_-$}{
        $Q\gets Q\cup \mathbf{pre}(s)$; $\delta\gets \delta\cup \textit{get\_Transition}(s)$ \tcp*{Add new sEL}
        \eIf{$s\in S_+$}{$F\gets F\cup \{s\}$}{$R\gets R\cup \{s\}$}
        \If{$Cevents\neq \mu(s)$}{
            $Cevents\gets \mu(s)$\;
            $\mathcal{P}\gets \textit{exMerge}(\mathcal{P})$\tcp*{Do extended merging}
            $\delta\gets \textit{int\_Overapp}(\delta)$\tcp*{Do interval over-approximation}
            }
        }
%	\While{$\mathit{flag}=\bot$}{
%            \ForEach{ $\omega\in\Omega_+$}{\label{alg1_line:enumerate}
%                $\xi \gets$ encode($\phi_+,\Omega_-$) \label{alg1_line:encode} \tcp*{Encode $k$-length $p$TRE$_+$ $\phi_+$ once get it.} 
%                $\mathit{flag}, \varphi \gets$ SMTsolver($\xi,\phi_+$)\;
 %               \If{$\mathit{flag}=\top$}{ \label{alg1_line:solution}
 %               \Return $\varphi$\;
 %               }
 %           }
  %          $k\gets k+1$; \label{alg1_line:len_increase}
%        }
%\vspace{-0.5cm}
\end{algorithm}

We denote the simplified tAPTA constructed from sELs $S=(S_+,S_-)$ by $\mathcal{P}(S)=(\mathit{Q},\varepsilon,\Sigma,\delta,F,R)$. $P(S)=(P_+(S),P_-(S))$ is the language pair of $\mathcal{P}(S)$, where $P_+(S)=\bigcup_{q\in F}L(q)$, $P_-(S)=\bigcup_{q\in R}L(q)$. Additionally, we denote the states in the simplified tAPTA that are accessible with multiple paths by set $\mathit{M}$
and let $\mathit{N}=Q\backslash M$ be the set of states that are only accessible with a unique path.

%\begin{mytheorem}
%Given example set $\Omega$,    $\mathcal{P}(S(\Omega))$ 
%\end{mytheorem}

\begin{mytheorem} \label{thm:tAPTA}
$P(S)$ is a safe over-approximation of $S$.
\end{mytheorem}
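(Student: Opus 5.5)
We need to show two things. First, $S_+\subseteq P_+(S)$ and $S_-\subseteq P_-(S)$. Second, $P_+(S)\cap S_-=\emptyset$ and $S_+\cap P_-(S)=\emptyset$. Here $S$ is read as the pair of timed languages $(\bigcup S_+,\bigcup S_-)$. Throughout, we assume $S$ is conflict-free, as stated after Prop.~\ref{prop:sEL}. Since distinct sELs are disjoint cells of the region-like partition induced by Def.~\ref{def:sel}, this gives $\bigcup S_+\cap\bigcup S_-=\emptyset$.

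The plan is an induction over the sequence of operations performed by Algorithm~\ref{alg:process}. The invariant is that the current tAPTA's language pair is a safe over-approximation of the sELs inserted so far. Note that "safe over-approximation" composes transitively whenever the intermediate pair is itself disjoint from the opposite-sign original data, so it suffices to handle three kinds of steps separately.

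\textbf{Base step.} Consider the unsimplified tAPTA $\mathcal{P}$ built per Def.~\ref{def:tAPTA}. Each state is a prefix of an incremental-form sEL and is reachable by a unique path. By Def.~\ref{def:IFTL}, that path's timed language is exactly $\mathcal{E}(\sigma,\bigcup_m\Lambda_m)$, which equals the sEL itself, because the $\Lambda_m$ partition the constraint set $\Lambda$. Hence $F$-states carry exactly $S_+$ and $R$-states exactly $S_-$, so the language pair equals $S$ and is trivially safe.

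\textbf{Monotonicity steps.} Extended merging (Def.~\ref{def:ExMerge}) only identifies states. For any original path $p$ ending in a state of $F$ (resp.\ $R$), the image path in the merged tAPTA uses, at each merged transition, the disjunct of the matching length. So $L(p)$ is still contained in the language of an $F$ (resp.\ $R$) state, and inclusions are preserved. Interval over-approximation (Def.~\ref{def:Intoverapp}) only enlarges intervals, so it also preserves inclusion. Thus $S_\pm\subseteq P_\pm(S)$ follows by induction.

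\textbf{Disjointness.} Here we invoke Lemma~\ref{lemma:ExMerge} and Lemma~\ref{lemma:Intoverapp}: each operation introduces no conflict between accepted and rejected languages. Assume inductively that the current pair $(P_+',P_-')$ satisfies $P_+'\cap P_-'=\emptyset$. After all steps, $P_+(S)\cap P_-(S)=\emptyset$. Since $S_-\subseteq P_-(S)$, we get $P_+(S)\cap S_-\subseteq P_+(S)\cap P_-(S)=\emptyset$, and symmetrically $S_+\cap P_-(S)=\emptyset$.

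The main obstacle is making the monotonicity step rigorous for merged transitions with disjunctive labels of different lengths. One must argue that every new path created by merging is composed only of disjunct-branches whose lengths match the transition's position. Such paths therefore still denote elementary languages, which land in states of the same label. The lemmas guarantee that these spurious new paths do not cross the accepting/rejecting boundary, but one should check that the lemmas are stated for the whole language pair of the tAPTA and not just for original paths. If they only cover original paths, I would strengthen the induction hypothesis to $P_+'\cap P_-'=\emptyset$ directly and re-derive it from the proofs of those lemmas.
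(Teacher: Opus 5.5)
Your proposal is correct and follows the paper's proof: the unsimplified tAPTA has language pair exactly $S$, and Lemmas~\ref{lemma:ExMerge} and~\ref{lemma:Intoverapp} carry safety through the merging and interval over-approximation steps. Where you differ, it is only in being more explicit: you separate out the inclusion part, which the paper leaves implicit in the word ``safe''. Neither your transitivity caveat nor your closing worry is needed, since safe over-approximation composes unconditionally and Lemma~\ref{lemma:ExMerge} already concerns the languages of all accepting and rejecting states, i.e.\ the whole language pair.
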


\begin{proof}
By Definition \ref{def:tAPTA}, the language pair of a  tAPTA $\mathcal{P}_0(S)$ constructed from sELs $S$ is $P_0(S)=S$. By Lemma \ref{lemma:ExMerge} and \ref{lemma:Intoverapp}, the language pair of a simplified tAPTA is a safe over-approximation of the language pair of the original tAPTA. Hence $P(S)$ is a safe over-approximation of $S$.
\end{proof}

%Suppose we do have a na\"ive solution $\varphi=(\bigvee_{\sigma\in\Sigma}{\underline{\sigma}})^*$ that accepts any timed words. However, it is not a meaningful expression with respect to the common understanding.

\begin{myremark}
When constructing an APTA, if the examples set $S=\{S_+, S_-\}$ are input in the lexicographic order, then after inputting sequence $u$, for any previously input sequence $v$, if the longest common prefix of $u,v$ is $p(u,v)$, all states of $\mathbf{pre}(v)\backslash\mathbf{pre}(p(u,v))$ are guaranteed to not have any descendant in further construction (\cite{10.1007/978-3-031-71177-0_4}). With this lexicographic order input, dynamic merging can be applied after each sequence input to save space to store the whole APTA during the construction. Likewise, in the tAPTA construction, we can input sEL sequences in the "lexicographic order" of their event sequences: $S_1,S_2\in\mathcal{SE}, S_1\prec S_2$ if $\mu(S_1)\prec_{lex} \mu(S_2)$. The sEL sequences with the same event sequences are treated as a group and input together. Dynamic extended merging can also be applied. In this way, the number of states of the tAPTA is controlled during the construction.
\end{myremark}

Algorithm \ref{alg:process} outlines the steps of the preprocessing and simplifying examples presented in Section \ref{sbsc:preproc} and \ref{sbsc:tAPTA}.

%% file: SMTmethod.tex
\subsection{Encoding TA Constraints} \label{sbsc:encode}

Upon constructing and simplifying the tAPTA, we proceed to encode it into an SMT formula, ensuring that a solution this formula yields a model corresponding to a timed automaton of a given size. To avoid ambiguity, we hereafter refer to the states and transitions in the tAPTA as "locations" and "edges," respectively, to distinguish them from the states and transitions of the target DTA.

Given a tAPTA $\mathcal{P}=(L,\varepsilon,\delta_{\mathcal{P}},\Sigma,F_{\mathcal{P}},R_{\mathcal{P}})$, we aim to synthesize a DTA $\mathcal{A}=(Q,q_1, \mathcal{C},\Sigma,\delta, F)$ with a specified number of $|Q| = n$ and $|\mathcal{C}| = m$. We map each location $l\in L$ to some state $q\in Q$, and each tAPTA edge to a set of DTA transitions connecting the corresponding states. %A parameter $E$ is assigned to limit the maximum number of DTA transitions with the same source state, target state, event and guard clock. Thus the maximum number of transitions is $\Delta =n^2mE|\Sigma|$. 
We encode all the intrinsic constraints and mappings (of both states and transitions) into an SMT problem $\Phi(\mathcal{P},n,m)$. If $\Phi(\mathcal{P},n,m)$ is UNSAT, we try a larger $(n,m)$. If it is SAT, we construct a DTA ${\mathcal{A}}$ according to the model and return it.

\subsubsection{Basic setting}

\textbf{Forms and ranges.} To facilitate efficient encoding, we restrict the guard form to a conjunction of atomic constraints, where each conjunct is of the form $[g_l,g_u], g_l=g_u$ or $[g_l,g_u),g_l<g_u$, $g_l\in \mathbb{N}$ and $g_u\in \mathbb{N}^+\cup\{\infty\}$. Each guard comprises exactly $|\mathcal{C}|$ conjuncts, one for each clock variable. Since SMT solvers typically do not support encoding $\infty$, we set a sufficiently large maximum upper bound $\kappa$. If the solver returns a synthesized upper bound $g_u$ exceeding $\kappa$, we treat it as $+\infty$ in the learned DTA. We also limit the maximum number of DTA transitions sharing the same source state, target state, and event by a parameter $E$. Thus, the maximum number of transitions is $D =n^2E|\Sigma|$.   We denote the integer range $[1,\cdots,n]$, $[1,\cdots,|L|]$, $[1,\cdots,m]$ and $[1,\cdots,D]$ as $\mathbf{Q}$, $\mathbf{L}$, $\mathbf{C}$ and $\mathbf{D}$, respectively.

\textbf{Variables.} We index the set of states $Q=\{q_1,q_2,\cdots,q_{n}\}$, locations $L=\{l_1,\cdots, l_{|L|}\}$ and clocks $\mathcal{C}=\{c_1,c_2,\cdots,c_m\}$ by the integer sets $\{1,2,\cdots,n\}$, $\{1,\cdots,|L|\}$ and $\{1,2,\\\cdots,m\}$, respectively. Each state $q$ is associated with a Boolean label $f(q)\in\mathbb{B}$, where $f(q)=1\iff q \in F$. Each location is labeled by $h(l)\in \{-1,0,1\}$, where $h(l)=1\iff l \in F_{\mathcal{P}}$, $h(l)=-1\iff l \in R_{\mathcal{P}}$. Each edge $e$ in the tAPTA is represented as a tuple $(prt(e),chd(e),branch(e))$, where $prt(e)\in \mathbf{L}$ is the parent, $chd(e)\in \mathbf{L}$ is the child, and $branch(e)$ represents the set of interval lists forming the disjunction branches on the edge.

A transition $t\in\delta$ is  modeled as a $7-$tuple of variables: 
\begin{itemize}
    \item source state $src(t)\in [1,\cdots,n]$ and target state $tgt(t)\in [1,\cdots,n]$,
    \item event $\sigma(t)\in [1,\cdots,|\Sigma|]$ and index $ind(t)\in [1,\cdots,E]$,
    \item clock reset vector $r(t) \in \mathbb{B}^m$, guard lower bounds $g_l(t)\in \mathbb{N}^m$ and guard upper bounds $g_u(t)\in (\mathbb{N}^+\cup\{\infty\})^m$.
\end{itemize}
The $c$-th components $g_l(t)(c)$ and $g_u(t)(c)$  imply that the corresponding conjunct for the $c$-th clock in the guard is either $[g_l(t)(c),g_u(t)(c))$ or $[g_l(t)(c),g_u(t)(c)]$. 

\textbf{Constraints.} We consider two categories of constraints for the encoding: (1) Intrinsic constraints enforce the structural properties of the DTA, including determinism and transition maximum. (2) Language constraints ensure behavioral consistency with the tAPTA, including the mappings from locations to states, and the feasibility of transitions over edges.

\subsubsection{Intrinsic Constraints}

We first encode the intrinsic constraints that are independent of the specific structure of the tAPTA.
\begin{itemize}
    \item \textit{Determinism.} If two transitions have the same source state and event, their guards must be disjoint.
   $$ \forall i,j\in \mathbf{D}, i\neq j: src(i)=src(j)\wedge \sigma(i)=\sigma(j)\Longrightarrow $$
   $$ \underset{c\in \mathbf{C}}{\bigvee}g_l(i)(c)>g_u(j)(c)\vee g_l(j)(c)>g_u(i)(c)\vee g_l(i)(c)=$$
   $$g_u(j)(c)>g_l(j)(c) \vee g_l(j)(c)=g_u(i)(c)>g_l(i)(c)$$

    \item \textit{Transition maximum.} The number of transitions with the same source state, target state, and event is limited.
$$\forall i,j\in \mathbf{D}, i\neq j: src(i)=src(j)\wedge tgt(i)=tgt(j) \wedge$$ 
$$\sigma(i)=\sigma(j)\Longrightarrow ind(i)\neq ind(j)$$
\end{itemize}

\subsubsection{Mapping Constraints}
We use accessibility variables $d_{i,j}, i\in\mathbf{L}, j\in \mathbf{Q}$ to represent the mapping from locations to states: 
\[
d_{i,j} = 
\begin{cases} 
    1, & \text{if } \exists \mathcal{L}\in \mathcal{EL}, \, \mathcal{L}\in \mathcal{EL}(l_i)\cap\mathcal{EL}(q_j)\\
    0, & \text{otherwise } 
\end{cases}
\] 

The variables indicates that a subset of the timed language of a location also reaches a state in the DTA.

\begin{itemize}

\item \textit{Mapping consistency.} If a state is mapped to by an accepting location, it must be an accepting state. If a state is mapped to by a rejecting location, it must not be an accepting state.
$$\forall i\in \mathbf{L}, j\in \mathbf{Q} \;s.t. d_{i,j}=1:$$ 
$$h(i)=-1 \Longrightarrow f(j) = 0, h(i)=1 \Longrightarrow f(j) =1$$

\item \textit{Initial state.} The initial location $\epsilon$ is  mapped to the initial state $q_1$. 
$$d_{1,1}=1$$

\item  \textit{Unique mapping.} Locations that are only accessible with a unique path cannot be mapped to two or more different states.
$$\forall i\in \mathbf{L}, l_i\in N, j,j'\in \mathbf{Q}, j\neq j' : \lnot (d_{i,j}\wedge d_{i,j'})$$

\end{itemize}

\subsubsection{Feasibility Constraints}
We introduce a vector $v:Path\times\mathbb{N}\rightarrow\mathbb{N}^m$ to record the number of edges in the path that have been traversed since the last reset of each clock. Initially $v(p,1)=\mathbf{1}$. 

We enumerate all paths leading to a leaf location, and traverse each path $p$ from the root $\varepsilon$ to the leaf to encode the feasibility of the transitions over the edges in the path. When traversing the $i$-th edge in the path $p(i)$, we first decide which branch of the disjunction of the interval lists is taken if the branches haven't been over-approximated. Let $list(p,i)$ be the branch decided.
If more than one branch can be taken, we encode that at least one of the branches passes a transition.

After deciding the interval list, for the $j$-th clock, the value of the clock at this step falls in the  $v(p,i)_j$-th interval in the list. We use a function $pass():Interval \times \mathbb{N}\times\mathbb{N}\rightarrow \mathbb{B}$ to indicate whether an union of interval $I$ has intersection with the guard interval $[g_l,g_u)$ (or $[g_l,g_u]$): 
$$ pass(I,g_l,g_u)=
\begin{cases} 
    1, & \text{if } (g_l=g_u\wedge g_l\in I) \vee \\
    & (g_l<g_u\wedge[g_l,g_u)\cap I \neq \emptyset)\\
    0, & \text{otherwise } 
\end{cases}$$

If the interval has an intersection with the guard interval, $v(p,i+1)$ is valued according to the transition's reset: if a clock $j\in \mathbf{C} $ is reset, $v(p,i+1)_j=1$, otherwise $v(p,i+1)_j=v(p,i)_j+1$.

The encoding of the feasibility of a transition in the DTA over an edge of the tAPTA is: 
$$\forall p,\,i\in [1,length(p)],l,m\in \mathbf{L}, r,s \in  \mathbf{Q}, t \in \mathbf{D}: $$
$$prt(p(i))=l \wedge chd(p(i))=m\wedge d_{l,r}=1 \wedge src(t)=r \wedge$$
$$tgt(t)=s \wedge (\forall j\in \mathbf{C},\, pass(list(p,i)_{(v(p,i)_j)},g_l(t)_j,g_u(t)_j) )$$
$$\Longrightarrow d_{m,s}=1 \wedge (\forall j\in \mathbf{C},\,v(p,i+1)_j=v(p,i)_j\cdot (1-r(t)_j)+1)$$

\subsection{Constructing the Learned TA}

Upon solving $\Phi(\mathcal{P},n,m)$, we obtain values for the state labels, guards, and clock variables to instantiate ${\mathcal{A}}$. We then perform a final simplification step: the synthesized automaton may contain superfluous transitions resulting from arbitrary values assigned by the solver. These transitions are not required to recognize the given examples. We prune these transitions by simulating all positive examples on the learned TA and removing any transitions that are never traversed.

\begin{mylemma}
    \label{lemma:DTA}
Given a tAPTA $\mathcal{P}$ with its language pair $P=(P_+,P_-)$, there exists a DTA $\mathcal{A}(\mathcal{P})$, such that $P_+\subset\mathcal{L}(\mathcal{A}(\mathcal{P}))$ and $P_-\cap\mathcal{L}(\mathcal{A}(\mathcal{P}))=\emptyset$.
\end{mylemma}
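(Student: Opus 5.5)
The plan is to build an explicit, possibly very large, tree-shaped DTA that distinguishes timed words by their simple elementary language. Because $P_+$ and $P_-$ turn out to be unions of such cells, and Theorem~\ref{thm:tAPTA} makes them disjoint, accepting exactly the cells that meet $P_+$ will do. Minimality is irrelevant here; only existence is needed.

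\textbf{Step 1: the languages are finite unions of elementary languages.} Let $N$ be the maximal length of a root path in $\mathcal{P}$. The tAPTA is built from finitely many sEL sequences, and both extended merging (Def.~\ref{def:ExMerge}) and interval over-approximation (Def.~\ref{def:Intoverapp}) only redirect or relabel finitely many edges. So there are finitely many root paths, all of length $\le N$.

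By the semantics given after Def.~\ref{def:ExMerge}, a merged edge behaves as several parallel edges, one per disjunctive branch. Each choice of branch along a path therefore yields an elementary language $\mathcal{E}(u,\Lambda)$. Every constraint in $\Lambda$ has the form $T_{j,k}\in\mathrm{I}$, where $\mathrm{I}$ is a natural-bounded interval: a union of neighbouring intervals of the forms $d$ and $d_+$, or their convex hull after Def.~\ref{def:Intoverapp}. Hence $P_+$ and $P_-$ are finite unions of elementary languages.

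Let $K$ be a natural number larger than every finite endpoint that occurs. By Theorem~\ref{thm:tAPTA}, together with the standing assumption that $\Omega$ is conflict-free, $P_+\cap P_-=\emptyset$.

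\textbf{Step 2: the cell automaton.} Take clocks $x_1,\dots,x_N$. On the $k$-th transition along any run, reset exactly $x_{k+1}$. Then just before the $k$-th event, $x_j=T_{j,k}(\omega)$ for every $j\le k$.

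States are the sequences $(\sigma_1,\Lambda_1)\cdots(\sigma_k,\Lambda_k)$ with $k\le N$. Each $\Lambda_i$ assigns to every $T_{j,i}$, $j\le i$, one of the cells $d$, $d_+$ (for $d<K$), $K$, or $(K,\infty)$; these are truncated sELs in incremental form. From a state of length $k-1$ and event $\sigma$, there is one transition for each cell vector $\Lambda_k$. Its guard is $\bigwedge_{j\le k} x_j\in \Lambda_k(j)$, which has exactly the form allowed by Definition~\ref{def:ta}.

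The cells of one clock partition $\mathbb{R}_{\ge0}$, so the guards out of a state on a given event are pairwise disjoint, and the automaton is deterministic. The cells also cover everything, so every timed word of length $\le N$ reaches exactly one state, namely its truncated sEL. Words longer than $N$ are sent to a rejecting sink.

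\textbf{Step 3: acceptance and the main obstacle.} Declare a state accepting iff its cell language meets $P_+$. The key claim is that each such cell is either contained in or disjoint from every elementary language in Step~1. This holds because each constraint $T_{j,k}\in\mathrm{I}$ with natural endpoints $<K$ is constant on each cell: $\mathrm{I}$ is a union of cells, and any interval whose upper bound is $\infty$ contains all of $(K,\infty)$. Consequently $P_+$ is a union of accepting cells, so $P_+\subseteq\mathcal{L}(\mathcal{A}(\mathcal{P}))$. Likewise $P_-$ is a union of cells, each disjoint from $P_+$, hence non-accepting, giving $P_-\cap\mathcal{L}(\mathcal{A}(\mathcal{P}))=\emptyset$.

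I expect this step to be the main obstacle. One has to check carefully that the over-approximated lists of Def.~\ref{def:Intoverapp} still consist of natural-bounded intervals, including the filled gap $I''_i$ between two intervals. One must also check that the ``only the first $m$ intervals'' convention still produces a constraint on the correct sums $T_{j,k}$, so that cell-compatibility is preserved.
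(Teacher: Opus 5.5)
Your proof is correct, but it takes a genuinely different route from the paper.

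The paper builds $\mathcal{A}(\mathcal{P})$ on the tAPTA's own graph. Its states are the tAPTA states and it uses clocks $c_1,\dots,c_{l_m}$. For each edge occurring as the $j$-th edge of a path into $F$, it adds one transition per length-$j$ branch, with guard read off that branch's interval list and reset $\{c_{j+1}\}$. It then asserts $\mathcal{L}(\mathcal{A}(\mathcal{P}))=P_+$ by construction and concludes from $P_+\cap P_-=\emptyset$.

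You instead build a universal tree of truncated sEL cells, independent of $\mathcal{P}$, and use $\mathcal{P}$ only to choose accepting states. Your key fact is that every constraint in $\mathcal{P}$ has natural endpoints below $K$, so $P_+$ and $P_-$ are unions of cells.

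\textbf{What each route buys.} The paper's route gives a small witness, with $|Q|$ states and $l_m$ clocks. That is exactly the size bound quoted in Theorem~\ref{thm:overallterminate}. Your witness has exponentially many states, so it only shows that \emph{some} finite size suffices.

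Your route, on the other hand, makes determinism immediate. The cells of each clock partition $\mathbb{R}_{\ge 0}$, and every state has a unique depth. In a merged tAPTA, by contrast, a state can be reached at several depths, so the reset attached to an edge is not unique. Branches of one merged or over-approximated edge may also overlap, and the paper does not check determinism there.

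You also pair clocks with sums correctly: the $i$-th list entry constrains the sum of the last $i$ delays. The paper's guard $\bigwedge_n c_n\in I_n^i$ instead matches $c_n$, which holds $T_{n,j}$, with the entry for $T_{j-n+1,j}$.

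\textbf{Two small fixes.} First, after Def.~\ref{def:Intoverapp} an entry such as $0_+\cup 3_+$ (Fig.~\ref{fig:ExMerge}.d) is neither an interval nor a union of neighbouring cells. It is still a union of cells, which is all your Step~3 needs.

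Second, $P_+\cap P_-=\emptyset$ does not follow from Theorem~\ref{thm:tAPTA}. A safe over-approximation only excludes $P_+\cap S_-$ and $S_+\cap P_-$, not $P_+\cap P_-$. Cite Lemmas~\ref{lemma:ExMerge} and~\ref{lemma:Intoverapp} instead, applied starting from the conflict-free initial tAPTA whose language pair is $S$. This is the fact the paper's proof takes for granted.
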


\begin{proof}
Given a tAPTA $\mathcal{P}=(\mathit{Q},\varepsilon,\Sigma,\delta,F,R)$, let $P(F)$ be the set of paths incoming to the states in $F$ and the maximum length of these paths is $l_m$. A DTA $\mathcal{A}(\mathcal{P})=(\mathit{Q},\varepsilon,\mathcal{C},\Sigma_A,\delta_A,F)$ is constructed with $\mathcal{C}=(c_1,c_2,\cdots,c_{l_m})$, $\Sigma=\{\sigma\in\Sigma\,|\,\exists\, q_1,q_2, \,\delta(q_1,\sigma)=q_2\}$. And $\delta_A$ is as follows: for each edge $d=(q_{j-1},u,q_j)$ that is the $j-$th edge in a path in $P(F)$, $u$ is the labeling tuple of the edge, where $u=(\sigma,\underset{i=1,\cdots,k}{\bigvee}\Lambda_i)$, each branch of the interval lists is $\Lambda_i=(I_1^i,I_2^i,\cdots,I_j^i)$ (branches that are not of length $j$ are not considered). Then there are $k$ transitions $(q_{j-1},\sigma,g_i,\{c_{j+1}\},q_{j})$ in $\delta_A$, where $i=1,\cdots,k$, $g_i$ is $\underset{n=1,\cdots,j}\bigwedge {c_n\in I_n^i}$. No other transitions are included in $\delta_A$. By construction of $\delta_A$, $P_+=\mathcal{L}(\mathcal{A}(\mathcal{P}))$. Since the timed language of tAPTA $P_+\cap P_- = \emptyset$, we have $P_-\cap\mathcal{L}(\mathcal{A}(\mathcal{P}))=\emptyset$. 
\end{proof}

%% file: guarantees.tex
\subsection{Soundness and Termination}
In this section, we establish the theoretical guarantees of our proposed framework, including the properties of soundness and termination.

\begin{mytheorem}[Soundness w.r.t Problem \ref{prob:DTAmining}] \label{thm:overallcorrect}
The DTA $\mathcal{A}$ mined from example set $\Omega=(\Omega_+,\Omega_-)$ is a solution to Problem \ref{prob:DTAmining}. 
\end{mytheorem}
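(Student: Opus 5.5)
The plan is to chain the safe over-approximations established in the pipeline and then argue that any model of the SMT formula $\Phi(\mathcal{P},n,m)$ yields a DTA that recognizes the language pair of the simplified tAPTA. The chain is as follows. By Lemma~\ref{lemma:sEL2} (or directly Theorem~\ref{thm:sEL}), $S(\Omega)$ is a safe over-approximation of $\Omega$ and $\mathit{A}(S(\Omega))=\mathit{A}(\Omega)$. By Theorem~\ref{thm:tAPTA}, $P(S(\Omega))$ is a safe over-approximation of $S(\Omega)$. It therefore suffices to show two things. First, the returned $\mathcal{A}$ accepts every timed word of $P_+(S(\Omega))$ and rejects every timed word of $P_-(S(\Omega))$. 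Second, recognizing $P(S(\Omega))$ implies recognizing $\Omega$. The second point follows from the inclusions $\Omega_+\subseteq S_+(\Omega)\subseteq P_+$ and $\Omega_-\subseteq S_-(\Omega)\subseteq P_-$, which are immediate from the definition of safe over-approximation.

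The core step is the first claim, which I would prove by induction along tAPTA paths. Take $\omega\in L(p)$ for a root path $p=q_0d_1q_1\cdots d_kq_k$. I will show that the unique run of $\mathcal{A}$ on the prefix of length $i$ ends in a state $s_j$ with $d_{q_i,j}=1$, and that for each clock $c$, the counter $v(p,i+1)_c$ equals the number of edges since the last reset of $c$.

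The base case $i=0$ holds by the initial-state constraint $d_{1,1}=1$ together with $v(p,1)=\mathbf{1}$.

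For the inductive step, suppose $\mathcal{A}$ takes transition $t$ on the $i$-th letter. The clock value of $c$ is then the sum of the last $v(p,i)_c$ delays. By the incremental form (Def.~\ref{def:IFTL}), this sum lies in the $v(p,i)_c$-th interval of $list(p,i)$, or in its over-approximation (Def.~\ref{def:Intoverapp}). Since the guard of $t$ is satisfied, $pass$ holds for every clock. The feasibility constraint then forces $d_{q_{i+1},tgt(t)}=1$ and updates $v$ consistently with the resets of $t$. Determinism of $\mathcal{A}$, guaranteed by the determinism constraint, ensures that the run is unique, so the run on $\omega$ ends in a state $s$ with $d_{q_k,s}=1$. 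Mapping consistency then gives $f(s)=1$ if $q_k\in F$ and $f(s)=0$ if $q_k\in R$.

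The hardest part is that the run must actually \emph{exist}: the positive words in $P_+$ need a transition to be enabled at each step, and the feasibility constraint as stated only enforces consequences when some transition passes. I would first try to close this gap with two observations. One is that a word with no run is rejected, so negative words are handled automatically. The other is that for positive words, the encoding enumerates every path to a leaf and requires the mapping $d$ to reach an accepting state, so the existence of some $t$ with $pass$ must be implied for accepting leaves. If the stated constraints do not literally force this, I would make explicit the implicit requirement that every edge on a path to an accepting location has an enabled transition; Lemma~\ref{lemma:DTA} shows this requirement is satisfiable. I would also check that the final pruning step, which removes only transitions never used by the positive examples, preserves acceptance of $\Omega_+$ and cannot newly accept anything in $\Omega_-$. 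Removing transitions only shrinks the accepted language, so the argument needs only that the examples' runs are unaffected.

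Another delicate point is the \emph{unique mapping} constraint for locations in $M$, which may be mapped to several states. The induction handles this correctly because it tracks the specific $d$-fact derived along the given path rather than relying on uniqueness of the mapping.
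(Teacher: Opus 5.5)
Your chain is the paper's own proof. The paper argues that the mapping constraints send accepting (rejecting) locations only to accepting (non-accepting) states. It asserts that the feasibility constraints make every timed word of a location reach a state that location is mapped to, concludes that $\mathcal{A}$ recognizes the language pair of $\mathcal{P}$, and transfers this to $\Omega$ via Theorem~\ref{thm:tAPTA} and Theorem~\ref{thm:sEL}. Your inclusions $\Omega_\pm\subseteq\bigcup S_\pm(\Omega)\subseteq P_\pm$ do the same job more directly. Your induction correctly unpacks the asserted step in one direction: an actual run satisfies the guard, so $pass$ holds on the (possibly over-approximated) interval, so $d$ and $v$ propagate. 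The rejecting half of the claim, your pruning remark and your treatment of locations in $M$ are therefore justified by the displayed constraints. Existence of runs for positive words is exactly what the paper leaves unproved; you are not missing an ingredient the paper supplies.

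Your attempt to close that step, however, fails. Mapping consistency is conditional: it constrains $f(j)$ only when $d_{i,j}=1$, and nothing in the displayed encoding forces $d_{i,j}=1$ for any pair other than $(1,1)$. If every guard is a point constraint at a sufficiently large integer, every feasibility implication is vacuous. Then, when $\varepsilon\notin\Omega_+\cup\Omega_-$, $\Phi(\mathcal{P},1,1)$ is satisfied by a one-state automaton accepting nothing. So your fallback, an explicit extra constraint, is genuinely needed; the paper only gestures at it with ``at least one of the branches passes a transition''.

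That constraint also cannot be ``some transition with $pass$'', because $pass$ tests intersection, not containment. On exact sEL intervals the two coincide, since an integer-bounded guard cannot split a point or an open unit interval. After interval over-approximation they differ. In Fig.~\ref{fig:ExMerge}.d, give the $a$-transition leaving the image of $q_2$ the guard $x\in[2,3)$, reset $x$ on the incoming $b$-transitions, and let the other transitions follow the prefixes exactly. This guard passes the accepting edge's first interval $(1,3)$ and misses the rejecting edge's $(0,1)\cup(3,4)$, so every constraint holds, even a $pass$-based existence requirement. Yet the positive example $(a,1)(b,1)(a,1.5)$ is blocked at $x=1.5$.

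The requirement you add should therefore be this: for every positive example and every step of its path, each state the current location is mapped to has an outgoing transition that passes that example's own un-approximated sEL intervals. That suffices for $\Omega_+$. The paper's intermediate claim $P_+\subseteq\mathcal{L}(\mathcal{A})$ is not guaranteed by the intersection-based encoding.
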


\begin{proof}
Suppose $\mathcal{P}$ is the tAPTA that the SMT formula is encoded from. The \textbf{Mapping Constraints} ensure that each accepting location in $\mathcal{P}$ is mapped exclusively to accepting states in  $\mathcal{A}$, while rejecting locations in $\mathcal{P}$ are not mapped to any accepting states. The \textbf{Feasibility Constraints} ensure that each timed word of a location reaches one of the states the location is mapped to. Consequently, $P_+\subset\mathcal{L}(\mathcal{A}(\mathcal{P}))$ and $P_-\cap\mathcal{L}(\mathcal{A}(\mathcal{P}))=\emptyset$. Thus, $\mathcal{A}$ recognizes the language pair of $\mathcal{P}$. By Theorem \ref{thm:tAPTA}, $\mathcal{A}$ recognizes the sELs from which $\mathcal{P}$ is constructed. By Theorem \ref{thm:sEL}, $\mathcal{A}$ also recognizes the original example set $\Omega$. Hence, $\mathcal{A}$ is a solution to Problem \ref{prob:DTAmining}. 
\end{proof}

\begin{mytheorem}[Termination w.r.t Problem \ref{prob:DTAmining}]\label{thm:overallterminate}
Given the set $\Omega=(\Omega_+,\Omega_-)$, the  proposed DTA mining approach is guaranteed to terminate and return a solution to Problem \ref{prob:DTAmining} in finite steps under a specific enumeration order for the DTA size parameters $(n,m)$.
\end{mytheorem}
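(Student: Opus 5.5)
The argument splits the pipeline into its three stages and shows each terminates. The final SMT stage is the one that needs care: we must show that some setting $(n,m)$ reached by the enumeration makes $\Phi(\mathcal{P},n,m)$ satisfiable.

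\textbf{Preprocessing and simplification.} Translating a timed word of length $k$ into its sEL requires computing $\mathcal{O}(k^2)$ tightest integer bounds on the sums $T_{j,k}$. Since $\Omega$ is finite, $S(\Omega)$ is computed in finitely many steps. Checking redundancy and conflicts (Proposition \ref{prop:sEL}) is a finite pairwise comparison. If a conflict is found, the procedure halts and correctly reports $\mathit{A}(\Omega)=\emptyset$. Otherwise the tAPTA has at most $|\mathbf{pre}(S_+\cup S_-)|$ locations, which is finite. Each extended merge (Definition \ref{def:ExMerge}) strictly decreases the number of locations, so only finitely many merges can occur. Each interval over-approximation (Definition \ref{def:Intoverapp}) acts on finitely many edges and intervals. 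Hence Algorithm \ref{alg:process} terminates with a finite simplified tAPTA $\mathcal{P}$.

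\textbf{Existence of a satisfying size.} Next, fix the enumeration order, for instance enumerating pairs $(n,m)$ by increasing $n+m$, breaking ties by $n$. Every pair is then reached after finitely many predecessors. By Lemma \ref{lemma:DTA}, there is a DTA $\mathcal{A}(\mathcal{P})$ with $n^*=|Q_{\mathcal{P}}|$ states and $m^*=l_m$ clocks that recognizes the language pair $P$ of $\mathcal{P}$. We also fix $E$ at least the maximum number of disjunctive branches on any edge, and $\kappa$ larger than every integer bound occurring in $\mathcal{P}$, so that this witness lies within the encoding's ranges.

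\textbf{Main obstacle.} The crux is showing that $\Phi(\mathcal{P},n^*,m^*)$ is satisfiable, by turning $\mathcal{A}(\mathcal{P})$ into a model. The plan is as follows.

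\begin{enumerate}
\item Map each location to its own state: set $d_{i,j}=1$ iff $j=i$, and set $f$ according to $F$. This satisfies the mapping consistency, initial-state and unique-mapping constraints.
\item Read the transition variables directly off $\delta_A$. Each guard of $\mathcal{A}(\mathcal{P})$ is a conjunction of $c_n\in I_n$ with $I_n$ a point or unit open interval, or a union produced by over-approximation. Each of these is representable in the $[g_l,g_u)$ or $[g_l,g_u]$ form, possibly after adding extra transitions, which is allowed by the bound $E$.
\item Check determinism. Transitions with the same source and event carry guards taken from sibling edges; these have disjoint intervals at some index, because the tAPTA is deterministic and over-approximation was applied only when no overlap occurred.
\item Check feasibility. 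The reset of $c_{j+1}$ on the $j$-th edge makes clock $c_n$ measure exactly $T_{n,j}$, which matches the counter $v(p,i)$. So the $pass$ conditions hold, and they imply $d_{m,s}=1$ only for the correct child location.
\end{enumerate}

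The delicate points are the representability of unions of non-adjacent intervals in step 2 and the determinism argument in step 3. For the former, I would first try splitting such a guard into several transitions bounded by $E$. If that fails, I would note that the over-approximation step is guarded precisely by a non-overlap condition, so a convex hull can be used.

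\textbf{Conclusion.} Since the enumeration reaches $(n^*,m^*)$ in finitely many steps, and each SMT query is a decidable formula over bounded integer and Boolean variables, the solver answers each query in finite time. The first SAT answer yields a DTA. By Theorem \ref{thm:overallcorrect}, this DTA is a solution to Problem \ref{prob:DTAmining}, which completes termination.
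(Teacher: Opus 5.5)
\textbf{Verdict: genuine gap.} Your skeleton matches the paper's proof: finite preprocessing and simplification, the witness of Lemma~\ref{lemma:DTA} of size $(|Q|,l_m)$, and an enumeration that reaches this pair. Your diagonal order even makes the paper's size bounds unnecessary. The paper stops there, tacitly assuming that such a DTA makes $\Phi(\mathcal{P},|Q|,l_m)$ satisfiable. You rightly single this out as the crux, but your step~2 is false.

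The encoding's guard atoms are $[g,g]$ and $[g_l,g_u)$ with integer endpoints. A point atom never meets $(d,d+1)$, and a half-open atom meets it only if $g_l\le d<d+1\le g_u$, in which case it contains $d$. So $d_+$ is not a union of atoms, and neither are over-approximated sets such as $(1,3)$ in Fig.~\ref{fig:ExMerge}.d. Extra transitions and convex hulls do not help. Any transition that passes a $d_+$-edge also passes a sibling edge that differs only by having the point $d$ at that index. This breaks the determinism check of step~3 and the claim in step~4 that only the correct child gets $d_{m,s}=1$.

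A better witness does not exist in general. Take $\Omega_+=\{(a,1)\}$ and $\Omega_-=\{(a,1.5)\}$. The sELs differ, so no conflict is reported. The tAPTA is a root with an accepting child via $(a,(1))$ and a rejecting child via $(a,(1_+))$, and nothing merges. Since $v(p,1)=\mathbf{1}$, every clock reads the edge's single interval. So any transition passing $(1,2)$ also passes $\{1\}$, which forces $d=1$ for both children at its target $s$. Mapping consistency then demands $f(s)=1$ and $f(s)=0$.

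Hence, once $\Phi$ requires each edge to be passed by some transition, $\Phi(\mathcal{P},n,m)$ is unsatisfiable for every $(n,m)$. Theorem~\ref{thm:overallcorrect} needs that requirement, and the paper says it encodes that at least one branch passes a transition. Yet the DTA with guards $c=1$ and $1<c<2$ solves Problem~\ref{prob:DTAmining}. Closing this gap needs guards with strict lower bounds, not a cleverer witness.

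Step~3 has a second, independent hole. Once locations of different depths are merged, like $q_2$ in Fig.~\ref{fig:ExMerge}, the construction of Lemma~\ref{lemma:DTA} gives a single edge one transition per position $j$ it occupies, each resetting $c_{j+1}$. These transitions need not have disjoint guards. Read the label of Fig.~\ref{fig:ExMerge}.d with the paper's first-$j$-intervals rule. The edge from $q_2$ to $q_3$ then yields, for $j=2$ and $j=3$, the guards $c_1\in(1,3)\wedge c_2\in(2,6)$ and $c_1\in(1,3)\wedge c_2\in(2,6)\wedge c_3\in(3,4)$, with different resets, and these overlap.

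So the witness is not deterministic, and determinism of the tAPTA does not give you the determinism constraint. Also, the required resets depend on the path rather than the state, so the clock/counter correspondence of step~4 fails at locations in $M$.

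Finally, you enlarge $E$ at will, whereas the theorem enumerates only $(n,m)$ with $E$ fixed in advance ($E=1$ in the experiments).
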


\begin{proof}
By Definition \ref{def:sel}, \ref{def:tAPTA}, \ref{def:ExMerge} and \ref{def:Intoverapp}, the preprocessing of the examples, and the construction and the simplification of tAPTA are finite processes. By Lemma 6, there exists a DTA $\mathcal{A}$ of size $(n,m)=(|Q|,l_m)$ that is a solution to Problem \ref{prob:DTAmining}, where $|Q|$ and $l_m$ relate to the simplified tAPTA. By Definition \ref{def:sel},\ref{def:ExMerge} and \ref{def:Intoverapp}, we have $|Q|\leq \underset{\omega\in\Omega_+\cup\Omega_-}{\Sigma}|\omega|+1$ and $l_m\leq \underset{\omega\in\Omega_+\cup\Omega_-}{max}|\omega|+1$. Thus, there exists an enumeration order of DTA sizes that is guaranteed to reach $(|Q|,l_m)$ in finite steps. Hence, the overall approach terminates and returns a solution to Problem \ref{prob:DTAmining} in finite steps. 
\end{proof}

\begin{mytheorem}[Termination w.r.t Problem \ref{prob:mDTAmining}]\label{thm:minDTAproperty}
Given the set $\Omega=(\Omega_+,\Omega_-)$, if simplification (Def.~\ref{def:ExMerge}, \ref{def:Intoverapp}) are not applied and the SMT formula is directly encoded from the unsimplified tAPTA, then the  proposed DTA mining approach is guaranteed to terminate and return a solution to Problem \ref{prob:mDTAmining} in finite steps under a specific enumeration order for the DTA size parameters $(n,m)$.
\end{mytheorem}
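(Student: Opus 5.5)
The argument splits into termination plus correctness of the first solution found, mirroring Theorem~\ref{thm:overallterminate}, with exactness of the encoding doing the extra work.

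\textbf{Termination.} Without extended merging and interval over-approximation, the tAPTA $\mathcal{P}_0(S(\Omega))$ is just the prefix tree of the incremental-form sELs. Its language pair is exactly $P_0=S(\Omega)$, as noted in the proof of Theorem~\ref{thm:tAPTA}. By Lemma~\ref{lemma:DTA} there is a DTA of size $(n,m)=(|Q_0|,l_m)$ recognizing $P_0$, with $|Q_0|\le\sum|\omega|+1$. I fix the enumeration order: lexicographic by total size $n$ first, then by $m$ up to the bound $\max|\omega|+1$. This is the same finite order as in Theorem~\ref{thm:overallterminate}, so the loop reaches a SAT instance after finitely many steps.

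\textbf{Minimality.} Since $P_0=S(\Omega)$, Theorem~\ref{thm:sEL} gives $\mathit{A}(P_0)=\mathit{A}(\Omega)$ and $\mathit{D}(P_0)=\mathit{D}(\Omega)$. It therefore suffices to show that $\Phi(\mathcal{P}_0,n,m)$ is satisfiable \emph{iff} some DTA with $n$ states and $m$ clocks lies in $\mathit{A}(P_0)$. The enumeration increases $n$ in the outer loop, and every satisfiable $n$ admits enough clocks (at most $\max|\omega|+1$). Hence the first SAT $n$ equals the minimal state count, and the returned DTA belongs to $\mathit{D}(\Omega)$.

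\textbf{Soundness of the encoding.} The direction "SAT $\Rightarrow$ DTA in $\mathit{A}(P_0)$" is Theorem~\ref{thm:overallcorrect}.

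\textbf{Completeness of the encoding (main obstacle).} Given any DTA $\mathcal{B}\in\mathit{A}(P_0)$ with $n$ states and $m$ clocks, I must build a satisfying model of $\Phi(\mathcal{P}_0,n,m)$.
\begin{enumerate}
\item Normalize $\mathcal{B}$ so that every guard is a conjunction of one interval per clock of the restricted form. A general natural-bounded interval can be split into unions of such intervals, so this step may increase the number of parallel transitions; I choose $E$ large enough to absorb this. Replace $\infty$ by $\kappa$, taken larger than every constant in $\mathcal{B}$ and in the sELs.
\item Set $d_{i,j}=1$ exactly when some timed word in $L(l_i)$ reaches $q_j$ in $\mathcal{B}$.
\item Check the constraints. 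In the unsimplified tree every location has a unique path, so $N=Q_0$ and I need unique mapping. The key fact is Lemma~\ref{lemma:sEL} and the region-equivalence behind it: all words of a single sEL follow the same transition sequence in any TA. Hence every location reaches exactly one state, which gives unique mapping. Mapping consistency then follows from $\mathcal{B}$ recognizing $P_0$.
\item Verify the feasibility implications. A clock value lies in the integer-bounded sEL interval $T_{k,i}$ selected by $v(p,i)_j$, so $pass$ holds exactly for the transition actually taken. I have to argue that when the interval meets the guard but the guard does not contain it, the sEL is still uniform. This holds because each sEL interval is a point or a unit open interval and the guard endpoints are integers, so meeting implies containment.
\end{enumerate}
Step 4, the $pass$ semantics, is where the exactness hinges, and I would check it carefully clock by clock.
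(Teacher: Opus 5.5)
Making completeness of the encoding explicit is the right move; the paper's proof only asserts that the search stops at some $(n_0,m_2)$. However, two steps of your plan are false.

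\textbf{The clock bound.} You claim that every satisfiable $n$ has a witness with at most $\max|\omega|+1$ clocks. This fails because, for a fixed number of states, clocks can record \emph{which branch} of the sample was taken. The number of clocks needed therefore grows with the branching of $\Omega$, not with trace length. Take $\Sigma=\{a_1,\dots,a_K\}$, $\Omega_+=\{(a_i,5)(a_i,5) : 1\le i\le K\}$ and $\Omega_-=\{(a_i,5)(a_j,5) : i\neq j\}$. A consistent DTA has one accepting state and clocks $y,x_1,\dots,x_K$. For each $j$ it has two $a_j$-loops: one guarded by $y=5$ resetting $\{x_j\}$, and one guarded by $y=10\wedge x_j=5$ resetting nothing. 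Now suppose there is one state and $m$ clocks. The first $a_i$-transition fires where all clocks equal $5$, so its reset set $r_i$ alone determines the valuation at the second event. If $r_i=r_{i'}$ with $i\neq i'$, then $(a_{i'},5)(a_i,5)$ would be accepted; hence $K\le 2^m$. For $K=9$, your enumeration (capped at $m\le 3$) rejects $n=1$, so whatever it returns is not state-minimal. The paper leaves $m_0$ as an unspecified finite bound. A valid bound comes from merging two clocks whose values agree at every step of every sample run: keep one clock with the intersection of their constraints, which preserves determinism and every sample run. This bounds $m$ by the number of last-reset patterns over the finitely many tAPTA edges.

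\textbf{Step~1 (guard normalization).} A natural-bounded interval with an open lower end, such as $(d,d+1)$, is not a finite union of the encodable conjuncts $[g,g]$ and $[g_l,g_u)$. Any such conjunct containing a point of $(d,d+1)$ has $g_l\le d$, so it contains $d$. This cannot be ignored modulo the sample, because at the first event all clocks carry the same value. Take $\Omega_+=\{(a,1),(a,1.5)(b,0.5)\}$ and $\Omega_-=\{(a,1.5)\}$. There is a two-state DTA: an accepting initial state with an $a$-loop guarded by $x=1$, an $a$-edge guarded by $1<x<2$ to a non-accepting state, and a $b$-edge back. Yet any encodable transition enabled for $(a,1.5)$ is also enabled for $(a,1)$. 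Determinism then sends both words to the same state, so no DTA in the encodable class is consistent, at any $(n,m)$.

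Consequently, ``$\Phi(\mathcal{P}_0,n,m)$ is SAT iff $\mathit{A}(P_0)$ contains a DTA of size $(n,m)$'' cannot be proved for the encoding as defined. The same obstruction breaks your termination paragraph, since its witness from Lemma~\ref{lemma:DTA} uses guards $c\in(d,d+1)$. The gap closes in one of two ways:
\begin{itemize}
\item The encoding admits strict lower bounds. Then each guard is a per-clock natural-bounded interval, and your Step~4 observation (an sEL interval is either contained in or disjoint from such an interval) makes $pass$ exact.
\item Problem~\ref{prob:mDTAmining} is restricted to the encodable guard class.
\end{itemize}

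\textbf{A smaller point.} $E$ and $\kappa$ are fixed before the search, so you must bound them from the sample in advance rather than after seeing $\mathcal{B}$. For example, take $E$ at least the number of tAPTA edges and $\kappa$ above the largest sample constant.
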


\begin{proof}
Suppose the unsimplified tAPTA is $\mathcal{P}_0$ and the sEL set it is constructed from is $S$, and a resulting DTA mined is $\mathcal{A}$. Similar to the proof of Theorem \ref{thm:overallterminate}, we can prove such $\mathcal{A}$ exists and can be returned in finite steps. Similar to the proof of Theorem \ref{thm:overallcorrect}, $\mathcal{A}$ recognizes $P_0(S)$. By Definition \ref{def:tAPTA}, the language pair of $\mathcal{P}_0(S)$ is $P_0(S)=S$. Thus, $\mathcal{A}\in A(S)$. By Theorem \ref{thm:sEL}, $\mathcal{A}\in A(\Omega)$. Thus, $D(\Omega)\neq\emptyset$.

$\Omega$ is finite and each trace $\omega$ is finite. Thus, finitely many clocks are sufficient to characterize all possible sum values of the delay times of the timed words in $\Omega$. Let $m_0$ denote this number of clocks. Consequently, there exist $\mathcal{D}_0\in D(\Omega)$, such that its size is $(n_0,m_1)$, $m_1\leq m_0$. Consider the enumeration order of (n,m) as follows: $(1,1)\rightarrow (1,2)\rightarrow\cdots\rightarrow(1,m_0)\rightarrow(2,1)\rightarrow\cdots$. Under this enumeration order, the search is guaranteed to terminate at some pair $(n_0,m_2)$ with $m_2\leq m_1$, returning a solution to Problem \ref{prob:mDTAmining} in finite steps.
\end{proof}

%% file: experiments.tex
\section{Comparison with Existing Work} \label{sc:compare}
This section presents a comparative analysis between our approach and the SMT-based DTA synthesis method proposed by Tappler et al. \cite{TapplerAL22}. While both approaches utilize SMT solving to synthesize a DTA, they differ significantly in data formats, modeling assumptions, and encoding strategies. 

\textbf{Difference in Problem Setting:} Tappler's approach uses input-and-output traces, where each transition is labeled with either an input or an output event. In contrast, our approach operates exclusively on input traces.  Tappler's model distinguishes delay transitions and discrete transitions, whereas our model semantically integrates both delay and discrete transitions into a single type of transition.

\textbf{Negative Traces:} Tappler's approach relies exclusively on positive samples,  based on the premise that real-world system testing typically observes only valid executions (i.e., positive traces) without access to negative ones. Consequently, their model requires location invariants and the $\mathit{k-Urgency}$ assumption to prevent overgeneralization. Our approach exploits both positive and negative traces, thereby eliminating the need for additional assumptions regarding delay times. 

\textbf{Example Merging:} Tappler's approach encodes SMT formulas directly from traces, which ignores the redundancy within the example sets. Since their approach does not process negative examples, it does not account for conflicts. In our approach, traces are first translated into sEL, and then are over-approximated and simplified via the tAPTA. This process not only detects conflicts and eliminates redundancy but also merges the timed languages within the tAPTA, significantly reducing the number of clauses in the SMT encoding. 

\textbf{Guards and Clock Constraints:} Tappler's approach initially limits the encoding in one-clock case and then duplicates the constraints when additional clocks are added. This renders the clocks "independent", meaning that each transition’s guard contains at most one inequality over a unique clock $c \in \mathcal{C}$ and can reset only $c$. This imposes a significant restriction on the expressiveness of the resulting DTA.  In contrast, our approach accommodates normal clock constraints as defined in Definition \ref{def:ta}.

\section{Implementation and Experiments} \label{sc:experiments}
\subsection{Theory and Encoding Complexity}\label{sbsc:implement}

Data types in the constraints listed in Sect. \ref{sbsc:encode} comprise: (1) discrete structural data for the events, indices of locations, states, edges, etc; (2) integer bounds in guards and the intervals of accumulated delay times in edges. Since the real-valued delay times and clock values are replaced by integral-bounded intervals after translating the traces to sELs, integer theory is sufficient for our encoding.

\textit{Encoding Complexity.} Regarding the number of constraint clauses in the SMT encoding, the dominant factors are: (i) $\mathcal{O}(m\cdot D)$ from determinism, (ii) $\mathcal{O}(D)$ from transition maximum, (iii) $\mathcal{O}(n\cdot |L|)$ from mapping consistency, (iv) $\mathcal{O}(n^2\cdot |L|^2)$ from uniqueness, (v) $\mathcal{O}(mn^2\cdot |L|^2 \cdot D)$ from accessibility.
Since $D = n^2E|\Sigma|$, the overall number of constraints is $\mathcal{O}(mn^4\cdot |\Sigma | \cdot E\cdot |L|^2 \cdot D)$.
%If no redundancy is discarded in trace preprocessing - meaning that each trace has a corresponding sEL translated, and no interval over-approximation can be done in the simplification of tAPTA - meaning that each sEL is kept in the interval lists branches of a path without being merged with another sEL, then the number of branches we need to encode is equal to the number of original traces. %Let $b(e)$ be the number of interval lists branches of an edge $e$ ($b(e)=1$ if it is not of the form of the disjunction of interval lists), $B(p)=\underset{e\, in\, p}{\Sigma}b(e)$ is the number of branches in total of a path $p$.

\subsection{Experiments}

In this section, we present three groups of experiments to evaluate our approach. The first group compares the number of sELs, tAPTA locations and the constraint clauses encoded by our approach and Tappler's approach, using randomly generated positive example sets. The second group assesses the scalability of our approach by varying numbers of positive and negative input traces. The third group is a case study on learning a task model by mining a DTA for the scheduling system. The program is developed in Python 3.8. We use Z3~\cite{MouraB08} as our SMT solver. All experiments were conducted on a laptop:
\begin{itemize}
    \item Processor: 11th Gen Intel(R) Core(TM) i5-11300H @ 3.10GHz, with 4 cores and 8 logical processors.
    \item RAM: 16GB installed
\end{itemize}

\subsubsection{Numbers of sELs, tAPTA Locations and Constraints Encoded}

This experiment demonstrates the capability of our approach to simplify massive input timed traces for SMT-based DTA mining. We randomly generated a series of timed trace sets and processed them using both Tappler's direct encoding (denoted as SMT) and our approach (denoted as Merge-SMT) respectively. To isolate the impact of different guard settings, both approaches were run in one-clock case. The traces were generated over an alphabet $\Sigma=\{a,b,c\}$ with each delay time randomly distributed in $(0,5]$ and trace lengths ranging from 4 to 8. The results are summarized in Table \ref{tab:constrts}.

\begin{table}[tb]
\caption{Numbers of samples, sELs, tAPTA locations, and constraints between different approaches across sample sizes. }
\begin{center}
\begin{tabular}{|c|c|c|c|c|}
\hline
\textbf{Traces} & 10 & 25 & 50 & 100  \\
\hline
\textbf{sELs} & $10$ & $25$ & $50$ & $100$\\
\hline
\textbf{tAPTA locations} & $36$ & $59$ &$91$ &$157$\\
\hline
\textbf{SMT cons} & $221$ & $527$ &$1236$& 2363 \\
\hline
\textbf{Merge-SMT cons} & $272$ & $614$ &$1153$& 1706 \\
\hline
\end{tabular}

\vspace{0.3cm}
\begin{tabular}{|c|c|c|c|c|}
\hline
\textbf{Traces} &  200 & 300 & 500 &750 \\
\hline
\textbf{sELs} &  $200$ & 300 & 500 &750\\
\hline
\textbf{tAPTA locations} & $238$ & 306 & 449 &$593$\\
\hline
\textbf{SMT cons} &  4706 &$6917$& $11859$ & $17642$\\
\hline
\textbf{Merge-SMT cons}  & 2858 &$4094$& $5737$ & $7521$\\
\hline
\end{tabular}
\label{tab:constrts}
\end{center}
 \vspace{-0.5cm}
\end{table}

No redundant traces were discarded during the preprocessing phase (translating traces into sELs), as explained in Remark \ref{rmk:sEL}. However, the construction and simplification of the tAPTA significantly reduced the number of locations to be encoded: when the number of samples is moderate, the number of locations in an unsimplified tAPTA is close to the sum of the lengths of all traces. After merging, noting that negative traces were not considered, the tAPTA can be simplified to a structure identical to that of simplified APTA constructed from untimed positive traces. This simplification via tAPTA substantially reduced encoding complexity as the number of samples increases.

\subsubsection{Scalability Tests}

This experiment evaluates the efficiency of our approach in mining a DTA from given positive and negative traces. We randomly generated a series of target DTAs featuring 2 to 6 states, 1 to 3 clocks, and 2 to 4 distinct events. By sampling the DTAs with uniform sampling tool \textbf{WORDGEN} \cite{Barbot2023WORDGEN}, we produced positive trace sets ranging from 50 to 600 traces. Corresponding negative trace sets of equal size were sampled from the complement DTAs of the target DTAs. The maximum upper bounds of guards in the target DTAs and the complemented DTA is $\kappa=10$, and the transition limit is $E=1$. For each target DTA and each trace set size, we conducted 5 trials. The time limit of each trial is 900 seconds. The results are summarized in Table~\ref{tab:scal1} $\sim$ \ref{tab:scal3}.

\begin{table}[!h]
\caption{Average time expanse (seconds) and the number of successes over various sample sizes, 1-clock target DTA. $\textbf{Above}$: $|\Sigma|=2$ . $\textbf{Middle}$:$|\Sigma|=3$. $\textbf{Below}$: $|\Sigma|=4$.}
\vspace{-0.2cm}
\label{tab:scal1}
\begin{center}
\begin{tabular}{|c|c|c|c|c|c|}
\hline
\textbf{Number}&\multicolumn{5}{c|}{\textbf{Number of states $n$}} \\
\cline{2-6} 
\textbf{of traces} & 2 & 3 &4 &5 &6 \\
\hline
50& 4.2/5 & 10.1/5 & 23.3/5 & $24.2/0$ & $27.6/0$\\
\hline
100& 6.4/5 & 13.6/5 & 27.5/5 & 68.8/5 & $92.3/2$\\
\hline
300& 7.9/5 & 18.4/5 & 32.2/5 &76.0/5& $148/4$ \\
\hline
600& 10.1/5 & 22.5/5 & 35.8/5 & 82.3/5& 161/5\\
\hline
\end{tabular}

\vspace{0.5cm}

\begin{tabular}{|c|c|c|c|c|c|}
\hline
\textbf{Number}&\multicolumn{5}{c|}{\textbf{Number of states $n$}} \\
\cline{2-6} 
\textbf{of traces} & 2 & 3 &4 &5 &6 \\
\hline
50& 7.3/5 & 17.0/5 & $29.6/2$ & $28.7/0$ & $30.3/0$\\
\hline
100& 11.4/5 & 24.5/5 & 51.0/5 & $68.8/1$ & $76.1/0$\\
\hline
300& 16.2/5 & 30.8/5 & 62.9/5 &142/5& $291/3$ \\
\hline
600& 22.7/5 & 38.4/5 & 73.5/5 & 169/5 & 352/5 \\
\hline
\end{tabular}

\vspace{0.5cm}
\begin{tabular}{|c|c|c|c|c|c|}
\hline
\textbf{Number}&\multicolumn{5}{c|}{\textbf{Number of states $n$}} \\
\cline{2-6} 
\textbf{of traces} & 2 & 3 &4 &5 &6 \\
\hline
50& 9.5/5 & $19.2/2$ & $23.1/0$ & $21.9/0$ & $26.6/0$\\
\hline
100& 16.7/5 & 36.0/5 & $65.4/2$ & $77.3/0$ & $68.5/0$\\
\hline
300& 21.6/5 & 48.6/5 & 102/5 &$131/0$& $147/0$ \\
\hline
600& 27.8/5 & 63.9/5 & 125/5 & $193/1$ & $212/0$ \\
\hline
\end{tabular}
\end{center}
 \vspace{-0.5cm}
\end{table}

\begin{table}[!h]
\caption{Average time expanse (seconds) and the number of successes over various sample sizes, 2-clock target DTA. $\textbf{Above}$: $|\Sigma|=2$ . $\textbf{Middle}$:$|\Sigma|=3$. $\textbf{Below}$: $|\Sigma|=4$.}
\vspace{-0.2cm}
\label{tab:scal2}
\begin{center}
\begin{tabular}{|c|c|c|c|c|c|}
\hline
\textbf{Number}&\multicolumn{5}{c|}{\textbf{Number of states $n$}} \\
\cline{2-6} 
\textbf{of traces} & 2 & 3 &4 &5 &6 \\
\hline
50& 6.6/5 & $14.1/5$ & $29.3/2$ & $42.5/0$ & $37.9/0$\\
\hline
100& 9.4/5 & 19.8/5 & $44.2/5$ & $71.3/1$ & $76.0/0$\\
\hline
300& 15.6/5 & 29.7/5 & 64.5/5 &$96.2/0$& $107/0$ \\
\hline
600& 22.8/5 & 46.3/5 & 87.4/5 & 192/5 & $201/0$\\
\hline
\end{tabular}

\vspace{0.5cm}

\begin{tabular}{|c|c|c|c|c|c|}
\hline
\textbf{Number}&\multicolumn{5}{c|}{\textbf{Number of states $n$}} \\
\cline{2-6} 
\textbf{of traces} & 2 & 3 &4 &5 &6 \\
\hline
50& 8.1/5 & $18.3/4$ & $20.6/0$ & $20.3/0$ & $23.7/0$\\
\hline
100& 13.4/5 & 30.5/5 & $54.2/1$ & $56.8/0$ & $45.0/0$\\
\hline
300& 18.8/5 & 42.1/5 & 77.5/5 &$86.2/0$& $109/0$ \\
\hline
600& 26.2/5 & 48.9/5 & 86.7/5 & $136/2$ & $151/0$ \\
\hline
\end{tabular}

\vspace{0.5cm}
\begin{tabular}{|c|c|c|c|c|c|}
\hline
\textbf{Number}&\multicolumn{5}{c|}{\textbf{Number of states $n$}} \\
\cline{2-6} 
\textbf{of traces} & 2 & 3 &4 &5 &6 \\
\hline
50& 11.4/5 & $21.4/2$ & $20.5/0$ & $22.1/0$ & $24.7/0$\\
\hline
100& 17.8/5 & $39.2/4$ & $57.0/0$ & $60.3/0$ & $64.6/0$\\
\hline
300& 25.1/5 & 54.3/5 & $76.1/1$ &$95.8/0$& $79.2/0$ \\
\hline
600& 31.6/5 & 66.4/5 & 139/5 & $152/0$ & $156/0$ \\
\hline
\end{tabular}
\end{center}
% \vspace{-0.5cm}
\end{table}

\begin{table}[!t]
\caption{Average time expanse (seconds) and the number of successes over various sample sizes, 3-clock target DTA. $\textbf{Above}$: $|\Sigma|=2$ . $\textbf{Middle}$:$|\Sigma|=3$. $\textbf{Below}$: $|\Sigma|=4$.}
% \vspace{-0.2cm}
\label{tab:scal3}
\begin{center}
\begin{tabular}{|c|c|c|c|c|c|}
\hline
\textbf{Number}&\multicolumn{5}{c|}{\textbf{Number of states $n$}} \\
\cline{2-6} 
\textbf{of traces} & 2 & 3 &4 &5 &6 \\
\hline
50& 10.4/5 & $12.6/0$ & $13.8/0$ & $12.9/0$ & $14.4/0$\\
\hline
100& 17.1/5 & $32.3/2$ & $36.4/0$ & $43.9/0$ & $42.0/0$\\
\hline
300& 23.3/5 & 43.7/5 & $45.8/0$ &$47.2/0$& $48.5/0$ \\
\hline
600& 28.5/5 & 65.1/5 & $83.2/1$ & $70.6/0$ & $74.2/0$ \\
\hline
\end{tabular}

\vspace{0.5cm}

\begin{tabular}{|c|c|c|c|c|c|}
\hline
\textbf{Number}&\multicolumn{5}{c|}{\textbf{Number of states $n$}} \\
\cline{2-6} 
\textbf{of traces} & 2 & 3 &4 &5 &6 \\
\hline
50& 14.6/5 & $15.7/0$ & $16.1/0$ & $17.3/0$ & $16.9/0$\\
\hline
100& 19.5/5 & $21.5/0$ & $21.8/0$ & $23.0/0$ & $23.6/0$\\
\hline
300& 27.3/5 & $48.2/2$ & $43.7/0$ &$50.5/0$& $45.8/0$ \\
\hline
600& 41.7/5 & 94.8/5 & $114/0$ & $142/0$ & $101/0$ \\
\hline
\end{tabular}

\vspace{0.5cm}
\begin{tabular}{|c|c|c|c|c|c|}
\hline
\textbf{Number}&\multicolumn{5}{c|}{\textbf{Number of states $n$}} \\
\cline{2-6} 
\textbf{of traces} & 2 & 3 &4 &5 &6 \\
\hline
50& $16.2/3$ & $17.4/0$ & $17.3/0$ & $16.9/0$ & $17.6/0$\\
\hline
100& 23.8/5 & $22.1/0$ & $24.0/0$ & $21.9/0$ & $23.6/0$\\
\hline
300& 32.1/5 & $39.7/0$ & $41.3/0$ &$34.7/0$& $43.8/0$ \\
\hline
600& 48.5/5 & $119/4$ & $125/0$ & $121/0$ & $124/0$ \\
\hline
\end{tabular}
\end{center}
\vspace{-0.5cm}
\end{table}

The results indicate that our approach is capable of mining a DTA from hundreds of positive and negative traces within a reasonable time. Notably, when the number of traces far exceeded the minimum required to synthesize a relatively simple DTA, our approach significantly reduced encoding complexity and demonstrated scalability. The number of states $n$ in the target DTA is the primary factor influencing computation time. However, due to the inherent limitations of passive learning, the mined DTA may not be isomorphic to the target DTA when the number of traces is limited or the complexity of the target system increases. The number of clocks $m$ is the most influential factor. In some instances, discrepancies were confined to the guards; however, the mined DTA was more frequently simpler than the target DTA, particularly regarding $m=2,3$. 

This phenomenon can be attributed to the fact that passive learning aims to synthesize the minimal model recognizing the sampled traces. For instance, traces randomly sampled from a target DTA with $n=6$ states might be recognizable by a DTA with only $n=4$ states. The search strategy proceeds to $n=5,6$ only if $n=4$ fails to satisfy the formulas encoded from both positive and negative traces. Conversely, if the negative traces are insufficient, the search may terminate at $n=4$, resulting in a shorter runtime. Note that the purpose of passive learning is to learn a model consistent with the given trace sets, not the exact target model. Therefore, whether a passive learning method returns an exact model depends on the sampled traces, not the method itself. If the sampled traces fully characterize the target model, a passive learning method returns the exact model successfully. How to generate the characterization trace set of a DTA remains an open question. 

We present the consistency results between the learned DTA and the test sets (600 traces) in Tables \ref{tab:qual1}-\ref{tab:qual3}. We randomly generated test traces from the target DTA and tested whether the corresponding learned DTA is consistent with the target one on the acceptance and rejection of the test traces. $"1"$ indicates that the learned DTA is consistent with the target DTA over all test traces.

\begin{table}[h]
\caption{Average consistency (proportion) over various sample sizes, 1-clock target DTA. $\textbf{Above}$: $|\Sigma|=2$ . $\textbf{Middle}$:$|\Sigma|=3$. $\textbf{Below}$: $|\Sigma|=4$.}
\label{tab:qual1}
\begin{center}
\begin{tabular}{|c|c|c|c|c|c|}
\hline
\textbf{Number}&\multicolumn{5}{c|}{\textbf{Number of states $n$}} \\
\cline{2-6} 
\textbf{of traces} & 2 & 3 &4 &5 &6 \\
\hline
50& 1 & 1 & 1 & $0.922$ & $0.914$\\
\hline
100& 1 & 1 & 1 & 1 & $0.956$\\
\hline
300& 1 & 1 & 1 & 1 & $0.988$ \\
\hline
600& 1 & 1 & 1 & 1 & 1 \\
\hline
\end{tabular}

\vspace{0.2cm}

\begin{tabular}{|c|c|c|c|c|c|}
\hline
\textbf{Number}&\multicolumn{5}{c|}{\textbf{Number of states $n$}} \\
\cline{2-6} 
\textbf{of traces} & 2 & 3 &4 &5 &6 \\
\hline
50& 1 & 1 & $0.917$ & $0.905$ & $0.894$\\
\hline
100& 1 & 1 & 1 & $0.952$ & $0.941$\\
\hline
300& 1 & 1 & 1 & 1 & $0.987$ \\
\hline
600& 1 & 1 & 1 & 1 & 1 \\
\hline
\end{tabular}

\vspace{0.2cm}
\begin{tabular}{|c|c|c|c|c|c|}
\hline
\textbf{Number}&\multicolumn{5}{c|}{\textbf{Number of states $n$}} \\
\cline{2-6} 
\textbf{of traces} & 2 & 3 &4 &5 &6 \\
\hline
50& 1 & $0.914$ & $0.906$ & $0.892$ & $0.885$\\
\hline
100& 1 & 1 & $0.954$ & $0.942$ & $0.937$\\
\hline
300& 1 & 1 & 1 &$0.991$& $0.989$ \\
\hline
600& 1 & 1 & 1 & $0.995$ & $0.993$ \\
\hline
\end{tabular}
\end{center}
 \vspace{-0.5cm}
\end{table}

\begin{table}[!h]
\caption{Average consistency (proportion) over various sample sizes, 2-clock target DTA. $\textbf{Above}$: $|\Sigma|=2$ . $\textbf{Middle}$:$|\Sigma|=3$. $\textbf{Below}$: $|\Sigma|=4$.}
\label{tab:qual2}
\begin{center}
\begin{tabular}{|c|c|c|c|c|c|}
\hline
\textbf{Number}&\multicolumn{5}{c|}{\textbf{Number of states $n$}} \\
\cline{2-6} 
\textbf{of traces} & 2 & 3 &4 &5 &6 \\
\hline
50& 1 & $ 1 $ & $0.912$ & $0.904$ & $0.892$\\
\hline
100& 1 & 1 & 1 & $0.953$ & $0.941$\\
\hline
300& 1 & 1 & 1 &$0.992$& $0.989$ \\
\hline
600& 1 & 1 & 1 & 1 & $0.996$\\
\hline
\end{tabular}

\vspace{0.2cm}

\begin{tabular}{|c|c|c|c|c|c|}
\hline
\textbf{Number}&\multicolumn{5}{c|}{\textbf{Number of states $n$}} \\
\cline{2-6} 
\textbf{of traces} & 2 & 3 &4 &5 &6 \\
\hline
50& 1 & $0.908$ & $0.901$ & $0.894$ & $0.890$\\
\hline
100& 1 & 1 & $0.947$ & $0.941$ & $0.935$\\
\hline
300& 1 & 1 & 1 &$0.990$& $0.986$ \\
\hline
600& 1 & 1 & 1 & $0.995$ & $0.993$ \\
\hline
\end{tabular}

\vspace{0.2cm}
\begin{tabular}{|c|c|c|c|c|c|}
\hline
\textbf{Number}&\multicolumn{5}{c|}{\textbf{Number of states $n$}} \\
\cline{2-6} 
\textbf{of traces} & 2 & 3 &4 &5 &6 \\
\hline
50& 1 & $0.903$ & $0.896$ & $0.891$ & $0.884$\\
\hline
100& 1 & $0.952$ & $0.944$ & $0.940$ & $0.933$\\
\hline
300& 1 & 1 & $0.989$ &$0.986$& $0.982$ \\
\hline
600& 1 & 1 & 1 & $0.993$ & $0.992$ \\
\hline
\end{tabular}
\end{center}
\end{table}

\begin{table}[!h]
\caption{Average consistency (proportion) over various sample sizes, 3-clock target DTA. $\textbf{Above}$: $|\Sigma|=2$ . $\textbf{Middle}$:$|\Sigma|=3$. $\textbf{Below}$: $|\Sigma|=4$.}
\label{tab:qual3}
\begin{center}
\begin{tabular}{|c|c|c|c|c|c|}
\hline
\textbf{Number}&\multicolumn{5}{c|}{\textbf{Number of states $n$}} \\
\cline{2-6} 
\textbf{of traces} & 2 & 3 &4 &5 &6 \\
\hline
50& 1 & $0.898$ & $0.892$ & $0.887$ & $0.881$\\
\hline
100& 1 & $0.946$ & $0.941$ & $0.935$ & $0.932$\\
\hline
300& 1 & 1 & $0.987$ &$0.984$& $0.982$ \\
\hline
600& 1 & 1 & $0.994$ & $0.992$ & $0.991$ \\
\hline
\end{tabular}

\vspace{0.2cm}

\begin{tabular}{|c|c|c|c|c|c|}
\hline
\textbf{Number}&\multicolumn{5}{c|}{\textbf{Number of states $n$}} \\
\cline{2-6} 
\textbf{of traces} & 2 & 3 &4 &5 &6 \\
\hline
50& 1 & $0.892$ & $0.885$ & $0.876$ & $0.863$\\
\hline
100& 1 & $0.939$ & $0.936$ & $0.931$ & $0.924$\\
\hline
300& 1 & $0.981$ & $0.979$ &$0.976$& $0.974$ \\
\hline
600& 1 & 1 & 0.993 & 0.990 & 0.988 \\
\hline
\end{tabular}

\vspace{0.2cm}
\begin{tabular}{|c|c|c|c|c|c|}
\hline
\textbf{Number}&\multicolumn{5}{c|}{\textbf{Number of states $n$}} \\
\cline{2-6} 
\textbf{of traces} & 2 & 3 &4 &5 &6 \\
\hline
50& $0.911$ & $0.887$ & $0.878$ & $0.864$ & $0.855$\\
\hline
100& 1 & $0.932$ & $0.928$ & $0.923$ & $0.919$\\
\hline
300& 1 & $0.980$ & $0.977$ &$0.975$& $0.972$ \\
\hline
600& 1 & 0.992 & $0.990$ & $0.989$ & $0.986$ \\
\hline
\end{tabular}
\end{center}
%\vspace{-0.5cm}
\end{table}

\subsubsection{Case Study}

Finally, we evaluated our DTA mining approach on a concrete scheduling scenario. As claimed in \cite{DBLP:books/sp/22/TangG022}, most task models are subclasses of task automata \cite{FERSMAN20071149Task} -- timed automata extended with real-time tasks
triggered by arrival at states. %The task triggered by arrival at states is 
The combined system of a task model and a scheduler with predetermined scheduling strategy can be formally expressed as a timed automata (see \cite{Wang2025TRE} for a detailed example). To validate whether our approach can effectively extract task models by mining the DTA of such combined systems, we designed a combined system consisting of a sporadic 2-task model scheduled by a FIFO scheduler to evaluate whether our approach is capable of extracting the task model by mining the DTA of the combined system. We denote the 2 tasks as $P_1(b_1,w_1,d_1)$, and $P_2(b_2,w_2,d_2)$, where $b_i$ and $w_i$ are the best-case and the worst-case execution times and $d_i$ is the relative deadline for task $P_i$ $(i=1,2)$. Figure \ref{fig:task} illustrates the task automata $\mathcal{A_0}$. 

 The task automaton can trigger the instances (jobs) of $P_1$ and $P_2$ in any order during execution, subject to the minimum interarrival times of $p_1$ and $p_2$, respectively. We configured the parameters as $(b_1,w_1,d_1,p_1)=(0,2,5,5)$ and $(b_2,w_2,d_2,p_2)=(0,5,20,12)$, and generated traces as follow: we modified the guards from $c_i>p_i$ to $p_i-1<c_i<2p_i$ and allowed $\mathcal{A_0}$ to run stochastically. Consequently, each time $P_i$ is triggered, its interarrival time falls within the range $(p_i-1,2p_i)$. If it falls within $(p_i-1,p_i)$, the current trace is marked as negative. 
 
 The FIFO scheduler is allowed to complete a job with a random execution time within $(b_i,w_i+1)$. And if the execution time exceeds $w_i$, the current trace is marked negative. Additionally, if the completion time misses the relative deadline, the trace will also be marked negative and the trace generation terminates immediately. Traces that are never marked negative are marked positive traces. 
 Additionally, each time task $P_i$ is triggered or completed, an event $a_i$ or $b_i$ is sent, and the time interval between the previous event and the current event is recorded as the current delay time. The maximum number of transitions and guard upperbound are set to $E=1$ and $\kappa=20$, respectively. The dataset comprises 600 positive and 600 negative traces with length between 4 and 10, and the DTA mined from these traces $\mathcal{A}$ is presented in Figure \ref{fig:TA}.

\begin{figure}[t]
    \centering
    \resizebox{1\linewidth}{!}{
    \begin{tikzpicture}[->, >=stealth', shorten >=1pt, auto, node distance=1.5cm, thick,scale=1.5, every node/.style={scale=3}, initial where=above]
    \node[initial, state, accepting] (0) {\makecell[c]{\large $l_1$\\ $P_1(b_1,w_1,d_1)$}};
    \node[state] (1) [right = 3cm of 0] {\makecell[c]{\large $l_2$\\ $P_2(b_2,w_2,d_2)$}};

    \path (0) edge [loop left] node [left] {\large $c_1>p_1,\{c_1\}$} (0)
    (1) edge  [loop right]  node[right] {\large $c_2>p_2,\{c_2\}$} (1)
    (0) edge  [in=135, out=45]  node[ above] {\large $c_2>p_2,\{c_2\}$} (1)
    (1) edge  [in=-45, out=-135] node[below ] {\large $c_1>p_1,\{c_1\}$} (0);
    \end{tikzpicture}
    }
    \caption{The sporadic 2-task automaton $\mathcal{A_0}$.}
    \label{fig:task}
    \end{figure}

The resulting DTA $\mathcal{A}$ comprises $|Q|=5$ states and $|\mathcal{C}|=3$ clocks. Semantically, each state represents a distinct scheduler state of jobs in queue, while the clocks monitor the interarrival times of $P_1$, $P_2$, and the execution time of the latest job respectively. The learned guards of $\mathcal{A}$ correctly capture that $P_1$ (or $P_2$) cannot be triggered within $p_1=5$ (or $p_2=12$) time units from its last trigger, and the jobs have to be completed within $w_1=2$ (or $w_2=5$) time units. The relative deadlines $d_i$ were not covered as the parameter settings ensure no deadlines are missed. $\mathcal{A}$ is more complex than the models mined in the scalability tests. We attribute that such a more complex DTA is synthesized because the negative examples generated in this case are structurally very similar to positive ones: they share identical event sequences, and the temporal distance between negative and positive traces can be minimal. The probability of such traces appearing is very low in randomly sampled traces, as seen in the scalability tests.

\begin{figure}[t]
    \centering
    \resizebox{0.8\linewidth}{!}{
    \begin{tikzpicture}[->, >=stealth', shorten >=1pt, auto, node distance=1.5cm, semithick,scale=1.5, every node/.style={scale=1}, initial where=above]
    \node[initial, state] (0) {\makecell[c]{\large $q_0$\\ $idle$}};
    \node[state] (1) [below left = 2cm of 0] {\makecell[c]{\large $q_1$\\ $(1)$}};
    \node[state] (2) [below right = 2cm of 0] {\makecell[c]{\large $q_2$\\ $(2)$}};
    \node[state] (3) [below= 1.5cm of 1] {\makecell[c]{\large $q_3$\\ $(1,2)$}};
    \node[state] (4) [below= 1.5cm of 2] {\makecell[c]{\large $q_4$\\ $(2,1)$}};
    \node[rectangle, minimum width=3cm, minimum height=2cm] (box) at (0.2,-1.9) {\large $b_1,y<2,\{y\}$};
    \node[rectangle, minimum width=3cm, minimum height=2cm] (box) at (0.4,-3.8) {\large $b_2,y<5,\{y\}$};
    \path 
    (0) edge  [in=90, out=-150]  node[ above left] {\large $a_1,5<x_1<11,\{x_1,y\}$} (1)
    (1) edge  [in=-120, out=30] node[above ] {\large $b_1,y<2,\emptyset$} (0) 
    (0) edge  [in=90, out=-30]  node[ above right] {\large $a_2,x_2>12,\{x_2,y\}$} (2)
    (2) edge  [in=-60, out=150] node[below] {\large $b_2,y<5,\emptyset$} (0)
    (3) edge  [in=-135, out=45]  node[ above right] {} (2)
    (4) edge  [in=-45, out=180]  node[ below] {} (1)
    (1) edge  [in=90, out=-90]  node[ left] {\large $a_2,x_2>12,\{x_2\}$} (3)
    (2) edge [in=90, out=-90] node[right ] {\large $a_1,5<x_1<10,\{x_1\}$} (4) ;
    \end{tikzpicture}
    }
    \caption{The DTA $\mathcal{A}$ mined from traces.}
    \vspace{-0.5cm}
    \label{fig:TA}
    \end{figure}

%% file: conclusion.tex
\section{Conclusion} \label{sc:conclusion}
We addressed the passive DTA mining problem by proposing an approach exploiting sEL and tAPTA to simplify the input examples of positive and negative traces, followed by the SMT encoding scheme to synthesize a DTA of specified size from the simplified examples in tAPTA structure. Our implementation, evaluated against existing SMT-based approaches that lack example simplification, demonstrates that our approach significantly reduces the complexity of the SMT formulas encoded. Furthermore, the scalability of the proposed approach was evaluated through experiments on uniformly sampled traces from target DTAs of various sizes, as well as simulated output traces from a scheduling system. 
%We addressed the TRE synthesis problem by theoretically establishing its decidability and proposing a practical framework aimed at finding a minimal TRE consistent with the given set of real-time behaviors. The framework is based on the insight of first finding a $p$TRE that accepts all positive examples and then appropriately deciding its time intervals. We have implemented the synthesis method, along with three strategies for enumerating and pruning $p$TREs. The three strategies were compared experimentally and the framework's scalability was demonstrated through tests on uniformly sampled real-time behaviors of TAs, which are from TREs of various complexities and a train model.

While the SMT-based passive learning framework guarantees a minimal DTA recognizing the directly encoded languages, it cannot ensure minimality for Problem \ref{prob:mDTAmining} when over-approximation is applied. In future work, we plan to investigate refined sample over-approximation methods to restore the proof of minimality. Additionally, we intend to extend our approach to handle nondeterministic situations, which requires a distinct encoding scheme.
%\textbf{Future work.} We plan to explore other enumeration and pruning strategies for our framework. The challenge lies in striking a balance between reducing the effort required to check $p$TREs and expanding the scope of pruning effectively. Additionally, we intend to identify the sufficient conditions for positive (and negative) examples required by our synthesis framework. 
%If positive examples are sufficient, then it must due to the insufficiency of the provided negative examples that the target TRE isn't synthesized, and vice versa. 
%Furthermore, we are to provide a PAC bound regarding the number of samples needed to achieve sufficient conditions of examples incorporating the uniform sampling technique.

\section*{Acknowledgment}
We thank the anonymous reviewers for their valuable comments, which significantly improved the quality of this paper. This work was supported by the National Key R\&D Program of China under grants No.~2022YFA1005100, No.~2022YFA1005101 and No.~2022YFA1005103, the Natural Science Foundation of China (NSFC) under grants No.~W2511064 and No.~62192732, and the CAS Project for Young Scientists in Basic Research (No.\ YSBR-123), the ISCAS Basic Research Grant (No.\ ISCAS-JCZD-202406).

%% file: appendix.tex
\appendices
\section{Proofs of Auxiliary Results}\label{app:proof}
\subsection{Proof of Lemma \ref{lemma:sEL}}

\begin{proof}
Suppose $\omega_1 = (\sigma_1,t_1^1)(\sigma_2,t_2^1)\cdots(\sigma_n,t_n^1)\in L(\mathcal{A})$, $\omega_2 = (\sigma_1,t_1^2)(\sigma_2,t_2^2)\cdots (\sigma_n,t_n^2)$.

$(\Rightarrow)$ For TA $\mathcal{A}=(Q,q_0, \mathcal{C},\Sigma,\delta, F)$, a accepting run of $\mathcal{A}$ that $\omega_1$ is associated with is $run = s_0,\tttrans_1,s_1,\tttrans_2,\cdots,\tttrans_n, s_n$. We denote $Q(s)=q$ and $N(s)=\nu$ for $s=(q,\nu)$. Suppose $k\leq n$ is the largest integer such that there exists $\omega'\in \mathbf{pre}(\omega_2)$ and a run $run' = s_0',\tttrans_1',s_1',\tttrans_2',\cdots,\tttrans_k', s_k',\, s_0'=s_0,\, \forall i \in [1,k] , \, Q(s_i)=Q(s_i')$, $\omega'$ is associated with $run'$. 

If $k<n$, then $Q(s_k)=Q(s_k')$.  Considering $\tttrans_{k+1}$ and $s_{k+1}$, there exists $(Q(s_k),\\ \sigma_{k+1},g,r, Q(s_{k+1}))\in\delta$ such that $\nu_1 =N(s_k)\oplus t_{k+1}^1\models g$. Let $\nu_2 =N(s_k')\oplus t_{k+1}^2$. It can be proved by induction with Def. \ref{def:tasem} that $\forall c\in \mathcal{C}$, $\nu_1(c)=T_{j_c,k+1}(\omega_1), \,\nu_2(c)=T_{j_c,k+1}(\omega_2), \,1\leq j_c\leq k+1$. By Def \ref{def:sel}, $\forall c\in \mathcal{C},$ either $ \nu_1(c)=\nu_2(c)=d$ or $d<\nu_1(c),\nu_2(c)<d+1$. $g$ is the conjunction of integer-bounded inequality, so $\nu_2\models g$. Then there exists $s_{k+1}'=(N(s_{k+1},\nu'))\in S, \, \forall c \in r, \nu'(c)=0, \forall c \in \mathcal{C}\,\backslash\, r, \nu'(c)=\nu_2(c)$ such that $s_k'{\xrightarrow{(\sigma_{k+1},t_{k+1}^2)}}s_{k+1}'$. $\omega''=w'\cdot (\sigma_{k+1},t_{k+1}^2))\in\mathbf{pre}(\omega_2)$ is associated with run $run''=s_0',\tttrans_1',s_1',\tttrans_2',\cdots,\tttrans_k', s_k',\tttrans_{k+1}', s_{k+1}'$, which yields a contradiction.

If $k=n$, then $run'$ is an accepting run since $Q(s_n)=Q(s_k')$ is an accepting state. And $\omega_2$ is the timed word associated with $run'$, so $\omega_2\in L(\mathcal{A})$. 

$(\Leftarrow)$ For a certain pair $(j,k),1\leq j\leq k\leq n$,  suppose a TA $\mathcal{A}_{j,k}=(Q,q_0, \{c\},\Sigma,  \bigcup_{1\leq i\leq n}\{\sigma_i\}, \delta_{j,k}, F)$ as follow:  $Q=\{q_0,q_1,\cdots,q_n\}$,  $F=\{q_n\}$. There is only 1 clock $c$. $\delta_{j,k}= \{(q_{i-1},\sigma_i),g_i,r_i,q_i\, |\, 1\leq i\leq n\}$, where $r_i=\emptyset$ if $i\neq j$, $ r_j={c}$ if $i=j$, $g_i=\top$ if $i\neq k$. If $T_{j,k}(\omega_1)$ is an integer, $g_k= (c\geq  T_{j,k}(\omega_1))\, \wedge\, (c\leq T_{j,k}(\omega_1))$, otherwise $g_k= (c> \lfloor T_{j,k}(\omega_1) \rfloor)\, \wedge\, (c < \lceil T_{j,k}(\omega_1) \rceil)$. By construction $\omega_1\in L(\mathcal{A}_{j,k})$, so $\omega_2\in L(\mathcal{A}_{j,k})$. Then $(T_{j,k}(\omega_2) =  T_{j,k}(\omega_1))$ if $T_{j,k}(\omega_1)$ is an integer, otherwise $\lfloor T_{j,k}(\omega_1) \rfloor< T_{j,k}(\omega_2) < \lceil T_{j,k}(\omega_1) \rceil$. By Def. \ref{def:sel}, $\mathcal{SE}({\omega}_1)= \mathcal{SE}({\omega}_2)$. 
\end{proof}

%\subsection{Proof of Corollary \ref{cor:conflictfree}}
%\begin{proof}
%Given $\Omega=(\Omega_{+},\Omega_{-})$,

%$(\Rightarrow)$ if there exists any $\omega_1\in\Omega_+$, $\omega_1'\in\Omega_-$ such that $\mathcal{SE}({\omega}_1)= \mathcal{SE}({\omega}_1')$, there is a conflict by Proposition \ref{prop:sEL}, which is a contradiction.

%$(\Leftarrow)$ if $\forall \omega_1\in\Omega_+$, $\omega_1'\in\Omega_-$ $\mathcal{SE}({\omega}_1)\neq \mathcal{SE}({\omega}_1')$, then there exist a trivial solution DTA to Problem \ref{prob:DTAmining}. Consequently, $\mathit{A}(\Omega_{+},\Omega_{-}),\mathit{D}(\Omega_{+},\Omega_{-})\neq\emptyset$ and $\Omega$ is conflict-free. A trivial DTA can be inferred from $\Omega$ by first applying the preprocessing and constructing the tAPTA as proposed in Section \ref{sc:problems}, then constructing the DTA from the unsimplified tAPTA as in the process presented in the proof of Lemma \ref{lemma:DTA}.
%\end{proof}

\subsection{Proof of Lemma \ref{lemma:ExMerge}}
\begin{proof}
For a single operation of extended merging, if it is:
\begin{itemize}
    \item States that don't have both positive and negative descendants are merged. There are obviously no conflicts to be introduced.    
%    \item States that have both positive and negative descendants (including themselves) are merged, and the conflict is introduced  
    \item States that have both positive and negative descendants are merged. Suppose before merging the two states are $q_1$ and $q_2$, and after merging the merged state is $q$. $q_+\in F$ and  $q_-\in R$ are a pair of common accepting and rejecting descendants of $q$ and $q_i, i=1,2$. Before merging, there is no conflict in the tAPTA. If any conflict between $L(q_+)$ and $L(q_-)$ is introduced by merging, then there exists $p_+\in path(q_+)$ and $p_-\in path(q_-)$, such that $L(q_+)\cap L(q_-)\neq \emptyset$. 
    If $p_+$ (or $p_-$) doesn't contain a newly merged transition, then before merging there exist paths $p_+'\in path(q_+),L(p'_+)=L(p_+)$ (or $p_-'\in path(q_-),L(p'_-)=L(p_-)$). 
    If only $p_+$ (or $p_-$) contains a newly merged transition that goes from $q$ to $q''$, a common child of $q_1$ and $q_2$ before merging, and the newly merged transition is labeled by $ (\sigma,(I_1,I_2,\cdots,I_n)\vee (I_1',I_2',\cdots,I_m'))$ and $n\neq m$. Then only one of the two branches can be taken $L(p'_+)$ (or $L(p'_-)$). Suppose the branch is $(I_1,I_2,\cdots,I_n)$. Then by replacing the newly merged transition with the transition labeled with $(\sigma,(I_1,I_2,\cdots,I_n))$ that goes to $q''$, and replacing $q$ with the corresponding parent state of $q''$ before merging, we yield a path $p_+'\in path(q_+), L(p'_+)=L(p_+)$ (or $p_-'\in path(q_-),L(p'_-)=L(p_-)$) in the tAPTA before merging. If $n = m$, we can likewise find two paths $p_+',p_+''\in path(q_+), L(p'_+)\cup L(p''_+)\supset L(p_+)$ (or $p_-',p_-''\in path(q_-), L(p'_-)\cup L(p''_-)\supset L(p_-)$) in the tAPTA before merging. No matter in which case, there exists conflict between $L(p'_+)$ (or $L(p'_+)$ and $L(p''_+)$) and $L(p'_-)$ (or $L(p'_-)$ and $L(p''_-)$), which is a contradiction. Hence no conflict is introduced by merging.
    %there exists paths $p_+'\in path(q_+)$ (or $p_-'\in path(q_-)$) such that the only difference between $p_+ $ and $p_+'$ (or between $p_- $ and $p_-'$) is the newly merged in By Def. \ref{def:ExMerge}, 
    %before merging there exist paths $p_+'\in path(q_+)$ and $p_-'\in path(q_-)$, such that the only difference between $p_+ $ and $p_+'$ (or between $p_- $ and $p_-'$) is that $q_i, i=1,2$ in $p_+'$ (or $p_-'$) and the next transition are replaced by $q$ and the merged transition labeled by $ (\sigma,(I_1,I_2,\cdots,I_n)\vee (I_1',I_2',\cdots,I_m'))$. If $n\neq m$, then the branch 
\end{itemize}
\end{proof}

%% file: reference.bib
@article{Alur94,
  author    = {Rajeev Alur and
               David L. Dill},
  title     = {A Theory of Timed Automata},
  _journal   = {Theoretical Computer Science},
  journal   = {Theor. Comput. Sci.},
  volume    = {126},
  number    = {2},
  pages     = {183--235},
  year      = {1994},
  _url       = {https://doi.org/10.1016/0304-3975(94)90010-8},
  _doi       = {10.1016/0304-3975(94)90010-8},
  _note={\doi{10.1016/0304-3975(94)90010-8}}
}

@article{Angluin78,
  author       = {Dana Angluin},
  title        = {On the Complexity of Minimum Inference of Regular Sets},
  journal      = {Inf. Control.},
  volume       = {39},
  number       = {3},
  pages        = {337--350},
  year         = {1978},
  _url          = {https://doi.org/10.1016/S0019-9958(78)90683-6},
  _doi          = {10.1016/S0019-9958(78)90683-6}
}

@article{Grinchtein10,
  author    = {Olga Grinchtein and
               Bengt Jonsson and
               Martin Leucker},
  title     = {Learning of event-recording automata},
  journal   = {Theoretical Computer Science},
  volume    = {411},
  number    = {47},
  pages     = {4029--4054},
  year      = {2010},
  _url       = {https://doi.org/10.1016/j.tcs.2010.07.008},
  _doi       = {10.1016/j.tcs.2010.07.008}
}

@inproceedings{HenryJM20,
  author    = {L{\'{e}}o Henry and
               Thierry J{\'{e}}ron and
               Nicolas Markey},
  _editor    = {Nathalie Bertrand and
               Nils Jansen},
  title     = {Active Learning of Timed Automata with Unobservable Resets},
  _booktitle = {Proceedings of the 18th International Conference on Formal Modeling and Analysis of Timed Systems, {FORMATS} 2020},
  booktitle = {{FORMATS} 2020},
  series    = {LNCS},
  volume    = {12288},
  pages     = {144--160},
  publisher = {Springer},
  year      = {2020},
  _url       = {https://doi.org/10.1007/978-3-030-57628-8\_9},
  _doi       = {10.1007/978-3-030-57628-8\_9}
}

@article{VerwerWW12,
  author       = {Sicco Verwer and
                  Mathijs de Weerdt and
                  Cees Witteveen},
  title        = {Efficiently identifying deterministic real-time automata from labeled
                  data},
  journal      = {Mach. Learn.},
  volume       = {86},
  number       = {3},
  pages        = {295--333},
  year         = {2012},
  _url          = {https://doi.org/10.1007/s10994-011-5265-4},
  _doi          = {10.1007/S10994-011-5265-4}
}

@article{SchmidtGHK13,
  author       = {Jana Schmidt and
                  Asghar Ghorbani and
                  Andreas Hapfelmeier and
                  Stefan Kramer},
  title        = {Learning probabilistic real-time automata from multi-attribute event
                  logs},
  journal      = {Intell. Data Anal.},
  volume       = {17},
  number       = {1},
  pages        = {93--123},
  year         = {2013},
  _url          = {https://doi.org/10.3233/IDA-120569},
  _doi          = {10.3233/IDA-120569}
}

@inproceedings{CornanguerLRT22,
  author       = {L{\'{e}}na{\"{\i}}g Cornanguer and
                  Christine Largou{\"{e}}t and
                  Laurence Roz{\'{e}} and
                  Alexandre Termier},
  title        = {{TAG:} Learning Timed Automata from Logs},
  booktitle    = {Thirty-Sixth {AAAI} Conference on Artificial Intelligence, {AAAI}
                  2022, Thirty-Fourth Conference on Innovative Applications of Artificial
                  Intelligence, {IAAI} 2022, The Twelveth Symposium on Educational Advances
                  in Artificial Intelligence, {EAAI} 2022 Virtual Event, February 22
                  - March 1, 2022},
  pages        = {3949--3958},
  publisher    = {{AAAI} Press},
  year         = {2022},
  _url          = {https://doi.org/10.1609/aaai.v36i4.20311},
  _doi          = {10.1609/AAAI.V36I4.20311}
}

@article{AnWZZZ21,
  author       = {Jie An and
                  Lingtai Wang and
                  Bohua Zhan and
                  Naijun Zhan and
                  Miaomiao Zhang},
  title        = {Learning real-time automata},
  journal      = {Sci. China Inf. Sci.},
  volume       = {64},
  number       = {9},
  year         = {2021},
  _url          = {https://doi.org/10.1007/s11432-019-2767-4},
  _doi          = {10.1007/S11432-019-2767-4}
}

@article{AnZZZ21,
  author       = {Jie An and
                  Bohua Zhan and
                  Naijun Zhan and
                  Miaomiao Zhang},
  title        = {Learning Nondeterministic Real-Time Automata},
  journal      = {{ACM} Trans. Embed. Comput. Syst.},
  volume       = {20},
  number       = {5s},
  pages        = {99:1--99:26},
  year         = {2021},
  _url          = {https://doi.org/10.1145/3477030},
  _doi          = {10.1145/3477030}
}

@article{VerwerWW11,
author       = {Sicco Verwer and Mathijs de Weerdt and Cees Witteveen},
title        = {The efficiency of identifying timed automata and the power of clocks},
journal      = {Inf. Comput.},
volume       = {209},
number       = {3},
pages        = {606--625},
year         = {2011},
_url         = {https://doi.org/10.1016/j.ic.2010.11.023},
_doi         = {10.1016/J.IC.2010.11.023}
}

@inproceedings{AnCZZZ20,
  author       = {Jie An and
                  Mingshuai Chen and
                  Bohua Zhan and
                  Naijun Zhan and
                  Miaomiao Zhang},
  _editor       = {Armin Biere and
                  David Parker},
  title        = {Learning One-Clock Timed Automata},
  _booktitle    = {Tools and Algorithms for the Construction and Analysis of Systems
                  - 26th International Conference, {TACAS} 2020, Held as Part of the
                  European Joint Conferences on Theory and Practice of Software, {ETAPS}
                  2020, Dublin, Ireland, April 25-30, 2020, Proceedings, Part {I}},
  booktitle   = {{TACAS} 2020},
  series       = {Lecture Notes in Computer Science},
  volume       = {12078},
  pages        = {444--462},
  publisher    = {Springer},
  year         = {2020},
  _url          = {https://doi.org/10.1007/978-3-030-45190-5\_25},
  _doi          = {10.1007/978-3-030-45190-5\_25}
}

@inproceedings{XuAZ22,
  author       = {Runqing Xu and
                  Jie An and
                  Bohua Zhan},
  _editor       = {Ahmed Bouajjani and
                  Luk{\'{a}}s Hol{\'{\i}}k and
                  Zhilin Wu},
  title        = {Active Learning of One-Clock Timed Automata Using Constraint Solving},
  booktitle    = {Automated Technology for Verification and Analysis - 20th International
                  Symposium, {ATVA} 2022, Virtual Event, October 25-28, 2022, Proceedings},
  booktitle    = {{ATVA} 2022},
  series       = {Lecture Notes in Computer Science},
  volume       = {13505},
  pages        = {249--265},
  publisher    = {Springer},
  year         = {2022},
  _url          = {https://doi.org/10.1007/978-3-031-19992-9\_16},
  _doi          = {10.1007/978-3-031-19992-9\_16}
}

@inproceedings{DierlHKKLLM23,
author       = {Simon Dierl and Falk Maria Howar and Sean Kauffman and Martin Kristjansen and    Kim Guldstrand Larsen and Florian Lorber and Malte Mauritz},
_editor      = {Kristin Yvonne Rozier and Swarat Chaudhuri},
title        = {Learning Symbolic Timed Models from Concrete Timed Data},
_booktitle   = {{NASA} Formal Methods - 15th International Symposium, {NFM} 2023,
              Houston, TX, USA, May 16-18, 2023, Proceedings},
booktitle    = {{NFM} 2023},
series      = {Lecture Notes in Computer Science},
volume      = {13903},
pages        = {104--121},
publisher    = {Springer},
year         = {2023},
_url         = {https://doi.org/10.1007/978-3-031-33170-1\_7},
_doi         = {10.1007/978-3-031-33170-1\_7}
}

@inproceedings{VaandragerB021,
  author    = {Frits W. Vaandrager and
               Roderick Bloem and
               Masoud Ebrahimi},
  _editor    = {Alberto Leporati and
               Carlos Mart{\'{\i}}n{-}Vide and
               Dana Shapira and
               Claudio Zandron},
  title     = {Learning Mealy Machines with One Timer},
  _booktitle = {Proceedings of the 15th International Conference on Language and Automata Theory and Applications, {LATA} 2021},
  booktitle = {{LATA} 2021},
  series    = {LNCS},
  volume    = {12638},
  pages     = {157--170},
  publisher = {Springer},
  year      = {2021},
  _url       = {https://doi.org/10.1007/978-3-030-68195-1\_13},
  _doi       = {10.1007/978-3-030-68195-1\_13}
}

@inproceedings{TapplerALL19,
author       = {Martin Tappler and Bernhard K. Aichernig and Kim Guldstrand Larsen and Florian Lorber},
_editor      = {{\'{E}}tienne Andr{\'{e}} and Mari{\"{e}}lle Stoelinga},
title        = {Time to Learn - Learning Timed Automata from Tests},
_booktitle   = {Formal Modeling and Analysis of Timed Systems - 17th International
              Conference, {FORMATS} 2019, Amsterdam, The Netherlands, August 27-29,
              2019, Proceedings},
booktitle    = {{FORMATS} 2019},
series      = {LNCS},
volume      = {11750},
pages        = {216--235},
publisher    = {Springer},
year         = {2019},
_url         = {https://doi.org/10.1007/978-3-030-29662-9\_13},
_doi         = {10.1007/978-3-030-29662-9\_13}
}

@inproceedings{AichernigPT20,
author       = {Bernhard K. Aichernig and Andrea Pferscher and Martin Tappler},
_editor      = {Ritchie Lee and Susmit Jha and Anastasia Mavridou},
title        = {From Passive to Active: Learning Timed Automata Efficiently},
_booktitle   = {{NASA} Formal Methods - 12th International Symposium, {NFM} 2020, Moffett Field, CA, USA, May 11-15, 2020, Proceedings},
booktitle    = {{NFM} 2020},
series      = {LNCS},
volume      = {12229},
pages        = {1--19},
publisher    = {Springer},
year         = {2020},
_url         = {https://doi.org/10.1007/978-3-030-55754-6\_1},
_doi         = {10.1007/978-3-030-55754-6\_1}
}

@inproceedings{TapplerAL22,
author       = {Martin Tappler and Bernhard K. Aichernig and Florian Lorber},
_editor      = {Jyotirmoy V. Deshmukh and Klaus Havelund and Ivan Perez},
title        = {Timed Automata Learning via {SMT} Solving},
_booktitle   = {{NASA} Formal Methods - 14th International Symposium, {NFM} 2022,
              Pasadena, CA, USA, May 24-27, 2022, Proceedings},
booktitle    = {{NFM} 2022},
series      = {LNCS},
volume      = {13260},
pages        = {489--507},
publisher    = {Springer},
year         = {2022},
_url         = {https://doi.org/10.1007/978-3-031-06773-0\_26},
_doi         = {10.1007/978-3-031-06773-0\_26}
}

@inproceedings{Waga23,
  author       = {Masaki Waga},
  _editor       = {Constantin Enea and
                  Akash Lal},
  title        = {Active Learning of Deterministic Timed Automata with Myhill-Nerode
                  Style Characterization},
  _booktitle    = {Computer Aided Verification - 35th International Conference, {CAV}
                  2023, Paris, France, July 17-22, 2023, Proceedings, Part {I}},
  booktitle    = {{CAV} 2023},
  series       = {LNCS},
  volume       = {13964},
  pages        = {3--26},
  publisher    = {Springer},
  year         = {2023},
  _url          = {https://doi.org/10.1007/978-3-031-37706-8\_1},
  _doi          = {10.1007/978-3-031-37706-8\_1}
}

@article{TengZA24,
      title={Learning Deterministic Multi-Clock Timed Automata}, 
      author={Yu Teng and Miaomiao Zhang and Jie An},
      year={2024},
      _eprint={2404.07823},
      _archivePrefix={arXiv},
      journal   = {arXiv,},
      volume    = {abs/2404.07823},
      note = {To appear in HSCC 2024.},
      primaryClass={cs.FL}
}

@InProceedings{Francois1998APTA,
author="Coste, Fran{\c{c}}ois
and Nicolas, Jacques",
editor="Honavar, Vasant
and Slutzki, Giora",
title="How considering incompatible state mergings may reduce the DFA induction search tree",
booktitle="Grammatical Inference",
year="1998",
publisher="Springer Berlin Heidelberg",
address="Berlin, Heidelberg",
pages="199--210",
isbn="978-3-540-68707-8"
}

@InProceedings{10.1007/978-3-031-71177-0_4,
author="Dell'Erba, Daniele
and Li, Yong
and Schewe, Sven",
editor="Platzer, Andre
and Rozier, Kristin Yvonne
and Pradella, Matteo
and Rossi, Matteo",
title="DFAMiner: Mining Minimal Separating DFAs from Labelled Samples",
booktitle="Formal Methods",
year="2025",
publisher="Springer Nature Switzerland",
address="Cham",
pages="48--66",
isbn="978-3-031-71177-0"
}

@article{10.1162/089120100561601,
    author = {Daciuk, Jan and Mihov, Stoyan and Watson, Bruce W. and Watson, Richard E.},
    title = {Incremental Construction of Minimal Acyclic Finite-State Automata},
    journal = {Computational Linguistics},
    volume = {26},
    number = {1},
    pages = {3-16},
    year = {2000},
    month = {03},
    issn = {0891-2017},
    doi = {10.1162/089120100561601},
    url = {https://doi.org/10.1162/089120100561601},
    eprint = {https://direct.mit.edu/coli/article-pdf/26/1/3/1797487/089120100561601.pdf},
}

@InProceedings{Meng20263DRTA,
author="Meng, Junjie
and An, Jie
and Li, Yong
and Turrini, Andrea
and Zhang, Miaomiao",
editor="Chen, Yu-Fang
and Jensen, Thomas
and Leng{\'a}l, Ond{\v{r}}ej",
title="SAT-Based Synthesis of Minimal Deterministic Real-Time Automata via 3DRTA Representation",
booktitle="Verification, Model Checking, and Abstract Interpretation",
year="2026",
publisher="Springer Nature Switzerland",
address="Cham",
pages="173--196",
isbn="978-3-032-15700-3"
}

@inproceedings{Barbot2023WORDGEN,
  author       = {Beno{\^{\i}}t Barbot and
                  Nicolas Basset and
                  Alexandre Donz{\'{e}}},
  title        = {Wordgen : a Timed word Generation Tool},
  booktitle    = {Proceedings of the 26th {ACM} International Conference on Hybrid Systems: Computation and Control, {HSCC} 2023},
  pages        = {16:1--16:7},
  publisher    = {{ACM}},
  year         = {2023},
  _url          = {https://doi.org/10.1145/3575870.3587116},
  _doi          = {10.1145/3575870.3587116}
}

@INPROCEEDINGS{Wang2025TRE,
  author={Wang, Ziran and An, Jie and Zhan, Naijun and Zhang, Miaomiao and Zhang, Zhenya},
  booktitle={2025 IEEE Real-Time Systems Symposium (RTSS)}, 
  title={On Synthesis of Timed Regular Expressions}, 
  year={2025},
  volume={},
  number={},
  pages={311-323},
  doi={10.1109/RTSS66672.2025.00033}}

@incollection{DBLP:books/sp/22/TangG022,
  author       = {Yue Tang and
                  Nan Guan and
                  Wang Yi},
  editor       = {Yu{-}Chu Tian and
                  David C. Levy},
  title        = {Real-Time Task Models},
  booktitle    = {Handbook of Real-Time Computing},
  pages        = {469--487},
  publisher    = {Springer},
  year         = {2022},
  _url          = {https://doi.org/10.1007/978-981-287-251-7\_29},
  _doi          = {10.1007/978-981-287-251-7\_29}
}

@article{FERSMAN20071149task,
title = {Task automata: Schedulability, decidability and undecidability},
journal = {Information and Computation},
volume = {205},
number = {8},
pages = {1149-1172},
year = {2007},
issn = {0890-5401},
doi = {https://doi.org/10.1016/j.ic.2007.01.009},
url = {https://www.sciencedirect.com/science/article/pii/S0890540107000089},
author = {Elena Fersman and Pavel Krcal and Paul Pettersson and Wang Yi}
}

@inproceedings{MouraB08,
  author       = {Leonardo Mendon{\c{c}}a de Moura and
                  Nikolaj S. Bj{\o}rner},
  _editor       = {C. R. Ramakrishnan and
                  Jakob Rehof},
  title        = {{Z3:} An Efficient {SMT} Solver},
  booktitle    = {14th International Conference on Tools and Algorithms for the Construction and Analysis of Systems, {TACAS} 2008},
  series       = {LNCS},
  volume       = {4963},
  pages        = {337--340},
  publisher    = {Springer},
  year         = {2008},
  _url          = {https://doi.org/10.1007/978-3-540-78800-3\_24},
  _doi          = {10.1007/978-3-540-78800-3\_24}
}

@article{Franck2021Veri,
  author = {Franck  Cassez;Peter Gjøl  Jensen;Kim  Guldstrand Larsen},
  title = {Verification and Parameter Synthesis for Real-Time Programs using Refinement of Trace Abstraction *},
  journal = {Fundamenta Informaticae},
  volume = {178},
  number = {1-2},
  pages = {31-57},
  year = {2021},
  doi = {10.3233/FI-2021-1997},
  URL = {https://doi.org/10.3233/FI-2021-1997},
  eprint = {https://doi.org/10.3233/FI-2021-1997}
}

@InProceedings{Bouyer2005Model,
author="Bouyer, Patricia
and Laroussinie, Fran{\c{c}}ois
and Reynier, Pierre-Alain",
editor="Pettersson, Paul
and Yi, Wang",
title="Diagonal Constraints in Timed Automata: Forward Analysis of Timed Systems",
booktitle="Formal Modeling and Analysis of Timed Systems",
year="2005",
publisher="Springer Berlin Heidelberg",
address="Berlin, Heidelberg",
pages="112--126",
isbn="978-3-540-31616-9"
}

@InProceedings{Jiang2021Safe,
author="Jiang, Zhihao
and Pajic, Miroslav
and Moarref, Salar
and Alur, Rajeev
and Mangharam, Rahul",
editor="Flanagan, Cormac
and K{\"o}nig, Barbara",
title="Modeling and Verification of a Dual Chamber Implantable Pacemaker",
booktitle="Tools and Algorithms for the Construction and Analysis of Systems",
year="2012",
publisher="Springer Berlin Heidelberg",
address="Berlin, Heidelberg",
pages="188--203",
isbn="978-3-642-28756-5"
}

@INPROCEEDINGS{Ouaknine2005MTL,
  author={Ouaknine, J. and Worrell, J.},
  booktitle={20th Annual IEEE Symposium on Logic in Computer Science (LICS' 05)}, 
  title={On the decidability of metric temporal logic}, 
  year={2005},
  volume={},
  number={},
  pages={188-197},
  doi={10.1109/LICS.2005.33}}

@InProceedings{Behrmann2005Schedule,
author="Behrmann, Gerd
and Larsen, Kim G.
and Rasmussen, Jacob I.",
editor="de Boer, Frank S.
and Bonsangue, Marcello M.
and Graf, Susanne
and de Roever, Willem-Paul",
title="Priced Timed Automata: Algorithms and Applications",
booktitle="Formal Methods for Components and Objects",
year="2005",
publisher="Springer Berlin Heidelberg",
address="Berlin, Heidelberg",
pages="162--182",
isbn="978-3-540-31939-9"
}

@inproceedings{WallnerALT25,
  author       = {Felix Wallner and
                  Bernhard K. Aichernig and
                  Florian Lorber and
                  Martin Tappler},
  title        = {Mutating Skeletons: Learning Timed Automata via Domain Knowledge},
  booktitle    = {{IEEE} International Conference on Software Testing, Verification
                  and Validation, {ICST} 2025 - Workshops, Naples, Italy, March 31 -
                  April 4, 2025},
  pages        = {67--77},
  publisher    = {{IEEE}},
  year         = {2025},
  url          = {https://doi.org/10.1109/ICSTW64639.2025.10962513},
  doi          = {10.1109/ICSTW64639.2025.10962513},
  bibsource    = {dblp computer science bibliography, https://dblp.org}
}

@inproceedings{ChenSZLM23,
  author       = {Hanyue Chen and
                  Yu Su and
                  Miaomiao Zhang and
                  Zhiming Liu and
                  Junri Mi},
  editor       = {Constantin Enea and
                  Akash Lal},
  title        = {Learning Assumptions for Compositional Verification of Timed Automata},
  booktitle    = {Computer Aided Verification - 35th International Conference, {CAV}
                  2023, Paris, France, July 17-22, 2023, Proceedings, Part {I}},
  series       = {Lecture Notes in Computer Science},
  volume       = {13964},
  pages        = {40--61},
  publisher    = {Springer},
  year         = {2023},
  url          = {https://doi.org/10.1007/978-3-031-37706-8\_3},
  doi          = {10.1007/978-3-031-37706-8\_3},
  bibsource    = {dblp computer science bibliography, https://dblp.org}
}
